\def\@acmplainindent{0pt}
\def\@acmdefinitionindent{0pt}
\def\@proofindent{\noindent}
\def\extended{0}
\newtheorem{remark}{Remark}
\keywords{Program Logics, Probabilistic Programming, Demonic Nondeterminism}
\begin{document}
\title{A Demonic Outcome Logic for Randomized Nondeterminism}

\author{Noam Zilberstein}
\orcid{0000-0001-6388-063X}
\affiliation{%
  \institution{Cornell University}
  \city{Ithaca}
  \country{USA}
}
\email{noamz@cs.cornell.edu}

\author{Dexter Kozen}
\orcid{0000-0002-8007-4725}
\affiliation{%
  \institution{Cornell University}
  \city{Ithaca}
  \country{USA}
}
\email{kozen@cs.cornell.edu}

\author{Alexandra Silva}
\orcid{0000-0001-5014-9784}
\affiliation{%
  \institution{Cornell University}
  \city{Ithaca}
  \country{USA}
}
\email{alexandra.silva@cornell.edu}

\author{Joseph Tassarotti}
\orcid{0000-0001-5692-3347}
\affiliation{%
  \institution{New York University}
  \city{New York}
  \country{USA}
}
\email{jt4767@nyu.edu}

\begin{abstract}
Programs increasingly rely on randomization in applications such as cryptography and machine learning. Analyzing randomized programs has been a fruitful research direction, but there is a gap when programs also exploit nondeterminism (for concurrency, efficiency, or algorithmic design). In this paper, we introduce \emph{Demonic Outcome Logic} for reasoning about programs that exploit {\em both} randomization and nondeterminism. The logic includes several novel features, such as reasoning about multiple executions in tandem and manipulating pre- and postconditions using familiar equational laws---including the distributive law of probabilistic choices over nondeterministic ones. We also give rules for loops that both establish termination and quantify the distribution of final outcomes from a single premise. We illustrate the reasoning capabilities of Demonic Outcome Logic through several case studies, including the Monty Hall problem, an adversarial protocol for simulating fair coins, and a heuristic based probabilistic SAT solver.
\end{abstract}

\maketitle              

\section{Introduction}

Randomization is critical in sensitive software domains such as cryptography and machine learning. While it is difficult to establish correctness of these systems alone, the difficulty is increased as they become distributed, since nondeterminism is introduced by scheduling the concurrent processes.
Verification techniques exist for reasoning about programs that are both randomized and nondeterministic using expectations \cite{wpe} and refinement \cite{polaris}, but there are currently no logics that allow for specifying and reasoning about the multiple probabilistic executions that arise from this combination of effects.


In program logics such as Hoare Logic \cite{hoarelogic}, preconditions and postconditions are propositions about the start and end states of the program. When moving to a probabilistic setting, it is not enough for these propositions to merely describe states, they must also quantify how likely the program is to end up in each of those states, as correctness is a property of the \emph{distribution} of outcomes.
Several logics exist for reasoning about purely probabilistic programs in this way, including Probabilistic Hoare Logic \cite{hartog,corin2006probabilistic,den_hartog1999verifying}, Ellora \cite{ellora}, Outcome Logic \cite{zilberstein2024outcome,outcome,zilberstein2024relatively}, and Quantitative Weakest Hyper Pre \cite{zhang2024quantitative}.
The benefits of reasoning about multiple executions are:

\begin{itemize}[leftmargin=1em,label={}]
\item \textsf{\emph{Outcomes.}}
As opposed to expectation reasoning, program logics can describe multiple outcomes in a single specification, giving a more comprehensive account of the \emph{distribution} of behaviors. This is displayed in \Cref{sec:von-neumann}, where we prove that a program simulates a fair coin by enumerating the outcomes and showing that they are uniformly distributed.

\smallskip
\item \textsf{\emph{Compositionality.}}
Inference rules allow us to reason about programs in a compositional, but also concise way. This is evident in our termination rules (\Cref{sec:loops}), which have fewer premises compared to similar rules in other reasoning systems.
\end{itemize} 
This paper introduces \emph{Demonic Outcome Logic}, a program logic for reasoning about \emph{randomized nondeterministic} programs---programs that have both probabilistic and nondeterministic choice operators. This work builds both on Outcome Logic---which can be used to reason about randomization \emph{or} nondeterminism, but not both together---and a large body of work on the semantics of randomized nondeterministic programs \cite{jifeng1997probabilistic,morgan1996refinement,jacobs2008coalgebraic,tix2009semantic,varacca2002powerdomain,wpe}.  Our contributions can be grouped in four categories, as follows:


\begin{enumerate}[leftmargin=1.75em]
\item\textsf{\emph{From Equations to Propositions.}}
Semantic objects to capture both randomization and nondeterminism are often described in terms of \emph{equations}, stating properties of relevant operators such as idempotence and distributivity. In logic, implications are used to manipulate assertions and facilitate reasoning. In our proposed program logic,  we want to bring these worlds together and have logical implications mirror the equational laws. The challenge is that---as we will see in \Cref{sec:equations,sec:assertions}---the equations do not immediately hold as implications, so a carefully designed assertion language is needed in which the laws indeed hold. 
\smallskip
\item\textsf{\emph{Demonic Outcome Logic.}}
This paper presents the first program logic for reasoning about {\em distributions of outcomes} with both randomization and nondeterminism. Making the logic \emph{demonic}---meaning that the postcondition applies to every nondeterministic possibility---allowed us to create simple and convenient inference rules.

\smallskip
While our logic has similarities to Weakest Pre-Expectation calculi, Demonic Outcome Logic involves some key differences. Demonic Outcome Logic can reason about many executions together, which allows us to specify the distribution of outcomes rather than just quantitative properties of that distribution. This is demonstrated in \Cref{sec:von-neumann}, where a program is specified in terms of multiple distinct outcomes and \Cref{sec:sat-solving}, where case analysis is done over multiple nondeterministic executions. It was necessary to develop new sound rules for this more expressive form of reasoning and idempotence of the logical connectives proved crucial in ways that do not appear in prior work.

\smallskip
\item\textsf{\emph{Loops and Termination.}}
Our rules for reasoning about loops in \Cref{sec:loops} allow us to prove termination, while simultaneously specifying the precise distribution of outcomes upon termination. This goes beyond prior work on expectation based reasoning \cite{mciver2005abstraction,mciver2018new}, where termination is established with a propositional invariant describing only a singe outcome. Our rules also have fewer premises, making them simpler to apply in our experience.

\smallskip
\item\textsf{\emph{Case Studies.}}
We investigate three case studies in \Cref{sec:examples} to demonstrate the applicability of our logic. For example, we present a protocol to simulate a fair coin flip given a coin whose bias is continually altered by an adversary, and show that this program terminates with the outcomes being uniformly distributed. We also prove that a probabilistic SAT solver terminates, even if some of the heuristics involved are adversarially chosen.
\end{enumerate}
We begin in \Cref{sec:overview} by outlining the challenges of reasoning about randomized nondeterminism, and how this informed the design of Demonic Outcome Logic. Next, in \Cref{sec:semantics}, we describe the denotational model that we use for semantics of programs. In \Cref{sec:logic}, we introduce Demonic Outcome Logic and the inference rules for reasoning about sequential programs. We discuss reasoning about loops in \Cref{sec:loops}. We examine three case studies in \Cref{sec:examples} to demonstrate the utility of the logic. Finally, we discuss related work and conclude in \Cref{sec:related,sec:conclusion}.

\section{An Overview of Demonic, Outcome-Based Reasoning}
\label{sec:overview}

The issue of combining randomization and nondeterminism is one of the most difficult and subtle challenges in program semantics. In this section, we outline the desired properties of a logic for that purpose and how the design of that logic intersects with prior work on semantics.

Most applications of randomized nondeterminism take a \emph{demonic} view of nondeterminism, wherein the nondeterminism is controlled by an \emph{adversary}, and the program is correct only if the distribution of outcomes satisfies a certain post-condition, {\em regardless} of how the adversary might have resolved the nondeterminism. One such domain is verification of distributed cryptographic protocols, where the probability that an adversary can guess a secret message must be negligible regardless of how the scheduler interleaves the concurrent processes.

To demonstrate the complex interaction between demonic non-determinism and probabilistic choice, we consider an example in which an adversary tries to guess the outcome of a fair coin flip. The coin flip is represented by the program $x\coloneqq\flip{\frac12}$, whose denotation is a singleton set containing a distribution of outcomes in which $x = \tru$ and $x = \fls$ both occur with probability $\frac12$. The adversarial choice is performed by the program $y\gets\mathbb B$, where $\mathbb B = \{\tru,\fls\}$. Operationally, we presume that the adversary can make this choice in any way it pleases, including by flipping a biased coin. That means that the adversary can force $y$ to be true, it can force $y$ to be false, or it can make both outcomes possible with any probability. Denotationally, the semantics of these programs is a map $\de{C} \colon \Sigma \to \bb{2}^{\D(\Sigma)}$ from states $\Sigma$ to sets of distributions of states, shown below. 
\[
  \de{x \coloneqq \mathsf{flip}\left(\tfrac12\right)}\!\!(\sigma) = \left\{~
    \scriptsize
    \arraycolsep=0em
    \def\arraystretch{1.25}
    \begin{array}{llll}
      \sigma[x \coloneqq \tru &] &\mapsto& \frac12
      \\
      \sigma[x \coloneqq \fls &] &\mapsto& \frac12
    \end{array}
  ~\right\}
\qquad
  \de{y \gets \mathbb B}\!(\sigma) = \left\{~
    {\scriptsize
    \arraycolsep=0em
    \begin{array}{llll}
        \def\arraystretch{1.25}
      \sigma[y \coloneqq \tru &] &~\mapsto~& p
      \\
      \sigma[y \coloneqq \fls &] &~\mapsto~& 1-p
    \end{array}}
    \;\middle|\;
    p \in [0,1]~
  \right\}
\]
Now, we consider two variants of composing these programs, shown below. On the left is a variant in which the adversary picks last and on the right is a variant in which the adversary picks first.
\[
  x \coloneqq \mathsf{flip}\left(\tfrac12\right) \fatsemi y \gets \mathbb B
  \qquad\qquad\qquad
  y \gets \mathbb B \fatsemi x \coloneqq \mathsf{flip}\left(\tfrac12\right)
\]
We wish to know the probability that $x=y$. In the program on the left, the value of $x$ is fixed before the adversary makes its choice, meaning that it can choose a distribution in which $x=y$ with any probability $p\in[0,1]$. However, in the program on the right, the adversary chooses first, and so the later coin flip will ensure that $x=y$ with probability exactly $\frac12$.


We will examine how to prove this fact using program logics. First, in \Cref{sec:equations} we will lay out the semantic properties of random and nondeterministic choices using equations, and we will show how those equations inform propositional reasoning about outcomes. Next, in \Cref{sec:composition}, we will see how to make such propositional inferences about program behavior using our new logic. We will then overview how to reason about more complicated looping programs in \Cref{sec:ov-loops}.

%
%

\subsection{From Equational Laws to Propositional Reasoning}
\label{sec:equations}

Equational theories are a useful tool for defining the behavior of programatic operators in terms of laws that must be upheld. This has been studied extensively in the context of semantics of probabilistic nondeterminism \cite{bonchi2022theory,mio2020monads,mislove2000nondeterminism,tix2000convex,bonchi2021presenting,bonchi2019theory}, where equations are used to describe properties of nondeterminism, random choice, and the interaction between the two.
We will now explore the link between equational theories and propositional reasoning about program outcomes.
As we explain in this section, care has to be taken to craft a model in which the desired properties of program operations can be used not only to establish equality of semantic objects, but also as logical implications.

In the following, the variables $X$, $Y$, and $Z$ denote the outcomes of a program.
The nondeterministic choice operator---denoted $X \nd Y$---is an adversarial choice between the outcomes $X$ and $Y$. It should be idempotent, commutative, and associative, as captured by the following equations:
\[
  X\nd X = X
  \qquad\qquad
  X \nd Y = Y\nd X
  \qquad\qquad
  (X \nd Y)\nd Z = X \nd (Y\nd Z)
\]
That is, choosing between $X$ and $X$ is equivalent to making no choice at all, and the ordering of choices makes no difference.
The probabilistic choice operator $X \oplus_p Y$, where $p\in[0,1]$, represents a random choice where $X$ and $Y$ occur with probability $p$ and $1-p$, respectively. This operator obeys similar laws, with probabilities adjusted appropriately.
\[
  X \oplus_p X = X
  \qquad\qquad
  X\oplus_p Y = Y \oplus_{1-p} X
  \qquad\qquad
  (X \oplus_p Y) \oplus_q Z = X \oplus_{pq} (Y \oplus_{\frac{(1-p)q}{1-pq}} Z)
\]
In addition, the following \emph{distributive law} requires that random choices distribute over nondeterministic ones, much like multiplication distributes over addition in standard arithmetic.
\[
  X \oplus_p (Y \nd Z) = (X \oplus_p Y) \nd (X \oplus_p Z)
\]
This law corresponds to our interpretation of demonic nondeterminism. On the left-hand side of the equation, we first randomly choose to execute either $X$ or $Y\nd Z$, and then---if the second option is taken---the nondeterministic choice is resolved. Applying this axiom as a rewrite rule from left to right would push the nondeterministic choice to the top above the probabilistic choice.

Traditionally, equational theories have been used to decide equality between programs \cite{kat}. Here, we repurpose the equations for propositional reasoning about program outcomes. That is, if $\varphi$, $\psi$, and $\vartheta$ are assertions about outcomes, then $\varphi \nd \psi$ asserts that $\varphi$ and $\psi$ are two possible nondeterministic outcomes, and $\varphi\oplus_p\psi$ asserts that $\varphi$ occurs with probability $p$ and $\psi$ occurs with probability $1-p$. This is inspired by Outcome Logic \cite{outcome}, but there are now two types of outcomes (probabilistic and non-deterministic). We want to rewrite the desired equations above as logical equivalences, e.g. the distributive law would be transformed to:
\[
  \varphi \oplus_p (\psi \nd \vartheta)
  \ \Leftrightarrow\ 
  (\varphi\oplus_p \psi) \nd ( \varphi\oplus_p \vartheta)
\]
However, one has to be careful. For example, as we illustrate below, the idempotence property $\varphi\nd \varphi \Leftrightarrow \varphi$ only holds as an implication in a carefully crafted model. 

One benefit of propositional reasoning vs equational reasoning is the ability to \emph{weaken} assertions. For instance, returning to the coin flip example, the following proposition precisely captures the result of the program in which the adversary chooses first, where $\sure P$ means that $P$ occurs with probability 1 (almost surely).
\begin{equation}\label{eq:full-description}
  ( \sure{y=\tru} \land (\sure{x=\tru} \oplus_\frac12 \sure{x=\fls})) \nd ( \sure{y=\fls} \land (\sure{x=\tru} \oplus_\frac12 \sure{x=\fls}))
\end{equation}
But the precise values of $x$ and $y$ are cumbersome to remember, and obfuscate the property that we want to convey.
Instead, we can weaken the assertion to record only whether $x=y$ or $x\neq y$.
\[
  (\sure{x = y} \oplus_\frac12 \sure{x\neq y}) \nd (\sure{x\neq y} \oplus_\frac12 \sure{x=y})
\]
It is now tempting to use commutativity of $\oplus_\frac12$ and idempotence of $\nd$ to perform the following simplification, concisely asserting the probability that the adversary can determine the value of $x$.
\begin{equation}\label{eq:idem}
  (\sure{x=y} \oplus_\frac12 \sure{x\neq y}) \nd  (\sure{x=y} \oplus_\frac12 \sure{x\neq y})
  \quad\Rightarrow\quad
  \sure{x=y} \oplus_\frac12 \sure{x\neq y}
\end{equation}
However, care had be taken to craft a model in which implication (\ref{eq:idem}) is valid.
Unlike the idempotence equation $X \nd X = X$---which applies when the exact same set of distributions appears on each side---the implication version (\ref{eq:idem}) operates on \emph{approximations} of those sets of distributions. Recall from (\ref{eq:full-description}) that $x=y$ is satisfied by $x = y= \tru$ on the left hand side of the $\nd$, whereas it is satisfied by $x = y =\fls$ on the right, so even though both sets of distributions satisfy $x=y \oplus_{\frac12} x\neq y$, they are not equal. The full details of this example are shown in \Aref{app:idempotence}.

In \Cref{sec:assertions}, we give a full account of how our demonic logic supports idempotence and all of the other properties that were stated equationally above. These properties do not hold by default, but rather required some intentional choices in the design of the logic. In particular, as we will detail in \Cref{sec:assertions}, the assertion language will not include disjunctions or existential quantification. The result is a deductive system that is able to express more intuitive and concise specifications.

\subsection{Program Logics and Compositionality}
\label{sec:composition}

Inspired by Hoare Logic, our goal is to develop a logic where programs are specified in terms of pre- and postconditions using \emph{triples} of the form $\triple\varphi{C}\psi$. Here, $\varphi$ and $\psi$ are outcome assertions from \Cref{sec:equations}, whose models are distributions of states. Since the program $C$ is nondeterministic---and is interpreted as a map into sets of distributions---the postcondition $\psi$ must be satisfied by every distribution in that resulting set.
We call this logic Demonic Outcome Logic, as it supports probabilistic reasoning in the style of Outcome Logic \cite{outcome,zilberstein2024outcome,zilberstein2024relatively} with the crucial addition of demonic nondeterminism.

Compositional reasoning---the ability to analyze a complex program in terms of its subprograms---is the hallmark of program logics. This is exemplified by the inference rule for sequential composition; we infer the behavior of a composite program from the behavior of its constituent parts.
\[
\inferrule{
  \triple{\varphi}{C_1}{\vartheta}
  \\
  \triple{\vartheta}{C_2}{\psi}
}{
  \triple{\varphi}{C_1\fatsemi C_2}{\psi}
}{\ruleref{Seq}}
\]
The soundness of this rule is not a given in the randomized nondeterminism setting. It relies on being able to define the semantics $\de{C_1\fatsemi C_2}$ in terms of $\de{C_1}$ and $\de{C_2}$. As we have already seen at the beginning of \Cref{sec:overview}, the semantics is a map from states to sets of distributions of states: $\de{C} \colon \Sigma \to \bb{2}^{\D(\Sigma)}$.
We compose the semantics of program fragments using a lifted version, known as the \emph{Kleisli extension} $\de{C}^\dagger \colon \bb{2}^{\D(\Sigma)} \to \bb{2}^{\D(\Sigma)}$. If $\bb{2}^{\D(-)}$ were a \emph{monad}, then we would have the compositionality property to guarantee the soundness of the \ruleref{Seq} rule: $ \de{C_1 \fatsemi C_2}^\dagger = \de{C_2}^\dagger \circ \de{C_1}^\dagger$. Unfortunately, as originally shown by \citet{varacca_winskel_2006} (see also \citet{zwart2019no,parlant2020monad}), no such composition operator exists.

But compositionality can be retained by requiring the sets of distributions to be \emph{convex} \cite{morgan1996refinement,mislove2000nondeterminism,tix1999continuous,jacobs2008coalgebraic}. That is, whenever two distributions $\mu$ and $\nu$ are in the set of outcomes, then all convex combinations ($p\cdot \mu + (1-p)\cdot\nu$ for $p\in[0,1]$) are also in the set of possible results. Convexity corresponds to our operational interpretation of nondeterminism---the adversary may flip biased coins to resolve choices \cite[Theorem 6.12]{varacca2002powerdomain}.
Returning to the coin flip example, we can derive the following specifications for the primitive operations.
\[
  \triple{\sure\tru}{y\gets\mathbb B}{\sure{y=\tru} \nd \sure{y=\fls}}
  \qquad
  \arraycolsep=0em
  \def\arraystretch{1.25}
  \begin{array}{r}
  \triple{\sure{y={\phantom{\fls}\makebox[0pt][r]{$\tru$}}}}{x\coloneqq\mathsf{flip}\left(\tfrac12\right)}{\sure{x=y} \oplus_\frac12 \sure{x\neq y}}
  \\
  \triple{\sure{y=\fls}}{x\coloneqq\mathsf{flip}\left(\tfrac12\right)}{\sure{x\neq y} \oplus_\frac12 \sure{x= y}}
  \end{array}
\]
That is, the adversarial choice results in two nondeterministic outcomes, separated by $\nd$. Executing the probabilistic choice in either of those states yields two further probabilistic outcomes.

Since the first command splits the execution into two outcomes, we need one more type of composition in order to stitch these together into a specification for the composite program. That is, we need the ability to decompose the precondition and analyze the program with each resulting sub-assertion individually. We may also wish to do the same for probabilistic choices.
\[
\inferrule{
  \triple{\varphi_1}{C}{\psi_1}
  \\
  \triple{\varphi_2}C{\psi_2}
}{
  \triple{\varphi_1\nd\varphi_2}C{\psi_1\nd\psi_2}
}{\ruleref{ND Split}}
\qquad
\inferrule{
  \triple{\varphi_1}{C}{\psi_1}
  \\
  \triple{\varphi_2}C{\psi_2}
}{
  \triple{\varphi_1\oplus_p\varphi_2}C{\psi_1\oplus_p\psi_2}
}{\ruleref{Prob Split}}
\]
Using \ruleref{Seq}, \ruleref{ND Split}, and the idempotence rule (\ref{eq:idem}), we derive the triple below on the left (shown fully in \Aref{app:coin}). A similar derivation for the reversed version yields the triple on the right.
\[
\begin{array}{lll}
  \ob{\sure\tru} &\qquad\qquad& \ob{\sure\tru}
  \\
  \;\;y\gets \mathbb B\fse
  &&
  \;\; x\coloneqq\mathsf{flip}\left(\tfrac12\right)\fse
  \\
  \;\; x\coloneqq\mathsf{flip}\left(\tfrac12\right) 
  &&
  \;\;y\gets \mathbb B
  \\
  \ob{\sure{x=y} \oplus_\frac12 \sure{x\neq y}}
  &&
  \ob{\sure{x=y} \nd \sure{x\neq y}}
\end{array}
\]
If the adversary picks first, then it can only guess the value of $x$ with probability $\frac12$. But if the coin flip is first, we only know that $x=y$ occurs with some probability. In fact, $x=y\nd x\neq y$ is equivalent to $\tru$, so it certainly does not give us a robust security guarantee, leaving open the possibility that the adversary can guess $x$.



\subsection{Reasoning about Loops}
\label{sec:ov-loops}

Reasoning about loops is challenging in any program logic, and Demonic Outcome Logic is no exception. When reasoning about probabilistic loops, one often wants to prove not only that some property holds upon termination, but also that the program \emph{almost surely terminates}---the probability of nontermination is 0.
An example of an almost surely terminating program is shown below. It is an adversarial random walk, where the agent steps towards 0 with probability $\frac12$, otherwise the adversary moves the agent to an arbitrary position between 1 and 5.
\[
\begin{array}{l}
  \code{while}~ x > 0 ~\code{do} \\
  \quad (x \coloneqq x-1) \oplus_{\frac12} (x\gets \{1, \ldots, 5\})
\end{array}
\]
It may seem surprising that this program almost surely terminates; after all, the adversary can always choose the worst possible option of resetting the position to 5. However, as the number of iterations goes to infinity, the probability of decrementing $x$ five times in a row goes to 1.

Demonic Outcome Logic has a simple inference rule for proving almost sure termination, inspired by a rule of \citet{mciver2005abstraction}, which uses \emph{ranking functions} to quantify how close the loop is to termination. The rule states that the program almost surely terminates if the rank strictly decreases on each iteration with probability bounded away from $0$, while also preserving some invariant $P$.
\[
\inferrule{
  \triple{\sure{P \land e \land e_\mathsf{rank} = n}}C{\sure{P \land e_\mathsf{rank} < n} \oplus_p \sure{P}},\ \ p>0
}{
  \triple{\sure P}{\whl eC}{\sure{P\land\lnot e}}
}{\ruleref{Bounded Rank}}
\]
We prove the soundness of this rule in \Cref{sec:loops}.
To instantiate it for the program above, we use the invariant $P \triangleq 0 \le x\le 5$, the ranking function $e_\mathsf{rank} \triangleq x$, and the probability $p = \frac12$. This means that $x$ is always between 0 and 5 and the value of $x$ strictly decreases with probability $\frac12$ in each iteration of the loop. Applying the inference, we get the following specification for the program.
\[
  \triple{\sure{0 \le x\le 5}}{\whl{x>0}{(x \coloneqq x-1) \oplus_{\frac12} (x\gets \{1, \ldots, 5\})}}{\sure{x = 0}}
\]
This says that the program terminates in a state satisfying $x=0$ with probability 1 (\ie almost surely). Compared to the rule of \citet{mciver2005abstraction}---which is based on \emph{weakest pre-expectations}---our approach has two key advantages. First, in the pre-expectation approach, the preservation of the invariant and the decrease in rank are verified separately, whereas our rule combines the two in a single premise. Second, our rules allow the invariant to have multiple outcomes, allowing them to express a \emph{distribution} of end states, rather than a single assertion. A concrete example of this appears in \Cref{sec:von-neumann}.

%
%

\medskip

\noindent We have now seen the key ideas behind Demonic Outcome Logic, including how equational laws translate to propositional reasoning over pre- and postconditions, the challenges in making the logic compositional, and strategies for analyzing loops to establish almost sure termination. We now proceed by making these ideas formal, starting in \Cref{sec:semantics} where we define the program semantics, and continuing with \Cref{sec:logic,sec:loops} where we define the logic and rules for analyzing loops. In \Cref{sec:examples} we examine case studies to show how the logic is used.

\section{Denotational Semantics for Probabilistic Nondeterminism}
\label{sec:semantics}

In this section, we present the semantics of a simple imperative language with both probabilistic and nondeterministic choice operators, originally due to \citet{jifeng1997probabilistic} and \citet{wpe}. The syntax of the language, below, includes familiar constructs such as no-ops, variable assignment, sequential composition, if-statements, and while-loops, plus two kinds of branching choice. 
\begin{align*}
\mathsf{Cmd} \ni C \Coloneqq&~
\skp && \textsf{(No-op)} \\
|&~ x \coloneqq e && \textsf{(Variable Assignment)}
\\
|&~  C_1 \fatsemi C_2 && \textsf{(Sequential Composition)}
\\
  |&~ C_1 \nd C_2 && \textsf{(Nondeterministic Choice)} 
  \\
  |&~ C_1 \oplus_e C_2 && \textsf{(Probabilistic Choice)}
  \\
  |&~ \iftf e{C_1}{C_2} && \textsf{(Conditional)} \\
  |&~ \whl eC && \textsf{(While Loop)}
\end{align*}
Expressions $e \in \mathsf{Exp}$ range over typical arithmetic and Boolean operations, and we evaluate these expressions in the usual way. 
Nondeterministic choice $C_1\nd C_2$ represents a program that arbitrarily chooses to execute either $C_1$ or $C_2$, whereas probabilistic choice $C_1 \oplus_e C_2$, in which $e$ evaluates to a rational probability $p$, represents a program in which $C_1$ is executed with probability $p$ and $C_2$ with probability $1-p$.
In the remainder of this section, we precisely describe the semantics of the language, building on the informal account given in \Cref{sec:overview}.

\subsection{States, Probability Distributions, and Convex Sets}
\label{sec:prelim}

Before we present the semantics, we review some preliminary definitions. We begin by describing the {\em program states} $\sigma \in \Sigma \triangleq \mathsf{Var} \to\mathsf{Val}$, which are mappings from a finite set of variables $x \in \mathsf{Var}$ to values $v \in \mathsf{Val}$. Values consist of Booleans and rational numbers, making the set of states countable. 
To define the semantics, we will work with discrete probability distributions over states.

\begin{definition}[Discrete Probability Distribution]
Let $\D(X) \triangleq X \to [0,1]$ be the set of discrete probability distributions on $X$. 
The support of a distribution $\mu$ is the set of elements to which $\mu$ assigns nonzero probability $\supp(\mu) \triangleq \{ x \in X \mid \mu(x) > 0 \}$. We only consider proper distributions such that the total probability mass $|\mu| = \sum_{x\in\supp(\mu)}\mu(x)$ is $1$.
\end{definition}

We denote the the Dirac distribution centered at a point $x\in X$ by $\delta_x$, with $\delta_x(x) = 1$ and $\delta_x(y) = 0$ if $x \neq y$. Addition and scalar multiplication are lifted to distributions pointwise:
\[
(\mu_1 + \mu_2)(x) = \mu_1(x) + \mu_2(x)
\qquad\qquad
(p\cdot \mu)(x) = p\cdot \mu(x)
\]
The semantics of our language will be based on nonempty subsets of distributions $\D(\Sigma_\bot)$, where $\bot$ is a special element symbolizing divergence and $\Sigma_\bot = \Sigma \cup \{\bot\}$. We will need three {\em closure properties} to make the semantics well-defined: convexity, topological closure, and up-closure. We begin by describing convexity, which makes the semantics of sequential composition associative, and the overall program semantics compositional (\Cref{sec:composition}).

\begin{definition}[Convex Sets]
A set $S \subseteq {\D(X)}$ of distributions is \emph{convex} if $\mu\in S$ and $\nu\in S$ implies that $p\cdot \mu + (1-p)\cdot \nu \in S$ for every $p\in[0,1]$. 
\end{definition}
In order to formally define the semantics of while loops, we will have to compute certain fixpoints (explained fully in \Cref{sec:fixpoint}), which requires us to restrict our semantic domain to up-closed sets.
\begin{definition}[Up-closed Sets]
A set $S \subseteq \D(\Sigma_\bot)$ is \emph{up-closed} $\mu \in S$ and $\mu \sqsubseteq_\D \nu$ implies $\nu\in S$. The order $\mathord{\sqsubseteq_{\D}}\subseteq \D(\Sigma_\bot) \times \D(\Sigma_\bot)$ is defined as $\mu\sqsubseteq_{\D}\nu
  \ \ \text{iff}\ \ 
  \forall \sigma\in \Sigma.~ \mu(\sigma) \le \nu(\sigma)$. 
The \emph{up-closure} of a set $S$ is the set $\upcl S \triangleq \{ \nu \mid \mu \in S,\ \mu \sqsubseteq_\D \nu \}$. Thus $S$ is up-closed iff $S=\upcl S$.
\end{definition}
Note that $\mu \sqsubseteq_{\D} \nu$ implies that $\nu(\bot) \le \mu(\bot)$, and that $\delta_\bot$ is the bottom of this order, since $\delta_\bot(\sigma) = 0$ for all $\sigma\in\Sigma$, therefore $\delta_\bot(\sigma) \le \mu(\sigma)$ for any $\mu \in \D(\Sigma_\bot)$.
If $\mu(\bot) = 0$, then $\mu$ is already maximal and so $\upset\mu = \{ \mu \}$, but if $\mu(\bot) > 0$, then $\mu$ can be made larger by reassigning probability mass from $\bot$ to proper states, \eg $\delta_\bot \sqsubseteq_\D \delta_\sigma$. As a consequence, $\upset{\delta_\bot} = \D(\Sigma_\bot)$, the set of all distributions. Up-closure means that we cannot be sure whether a program truly diverges, or instead exhibits erratic nondeterministic behavior, which is a common limitation of the Smyth powerdomain \cite{s_ondergaard1992non}.
However, the program logic that we develop in this paper is concerned with proving almost sure termination of programs, so the loss of precision in the semantics when nontermination might occur does not affect the accuracy of our logic.

Finally, we require sets to be closed in the usual topological sense. A subset of $\D(X)$ is \emph{closed} if it is closed in the product topology $[0,1]^X$, where $[0,1]$ has the Euclidean topology. So closure means that a set $S$ contains all of its limit points. This will later help us to ensure that the semantics is Scott continuous by precluding unbounded nondeterminism. More precisely, it will not be possible to define a primitive command $x \coloneqq \bigstar$, which surely terminates and nondeterministically selects a value for $x$ from an infinite set (such as $\mathbb N$). While this is certainly a limitation of the semantics, it is a typical one; an impossibility result due to \citet{apt1986countable} showed that it is not possible to define a semantics that both determines whether a program terminates and also allows unbounded nondeterminism. This corresponds to \cites{Dijkstra76} operational observation that a machine cannot choose between infinitely many branches in a finite amount of time, so any computation with infinitely many nondeterministic outcomes may not terminate.
We now have all the ingredients to define our semantic domain:
\[
 \C(X) \triangleq \big\{ S \subseteq \D(X_\bot) \mid S ~\text{is nonempty, convex, (topologically) closed, and up-closed} \big\}
\]
$\C$ is a functor and, interestingly from a semantics point of view, a monad \cite{jacobs2008coalgebraic}.  For any $f\colon X\to\C(Y)$,  the Kleisli extension $f^\dagger \colon \C(X)\to \C(Y)$ and unit operation $\eta \colon X \to \C(X)$ are defined as follows:
\[
  \eta(x) \triangleq \mathord\uparrow\{ \delta_x \}
  \qquad
  f^\dagger(S) \triangleq
  \left\{
    {\sum\limits_{x\in \supp(\mu)}} \mu(x) \cdot \nu_x
    \;\; \Big| \;\;
    \mu\in S, 
     \forall x\in \supp(\mu).\
    \nu_x \in f_\bot(x)
  \right\}
\]
Where for any function $f \colon X \to \C(Y)$, we define $f_\bot \colon X_\bot \to \C(Y)$ such that $f_\bot(x) = f(x)$ for $x\in X$ and $f_\bot(\bot) = \upcl\{\delta_\bot\}$. 
The $\C$ monad presented here has subtle differences to that of \citet{jacobs2008coalgebraic}---it is composed with an error monad to handle $\bot$, and the unit performs an up-closure---but it still upholds the monad laws, shown below, which we prove in \Aref{app:monad-laws}.
\[
  \eta^\dagger = \mathsf{id}
  \qquad\qquad
  f^\dagger \circ \eta = f
  \qquad\qquad
  (f^\dagger \circ g)^\dagger = f^\dagger \circ g^\dagger
\]
It was also shown by \citet{jifeng1997probabilistic} that $f^\dagger$ preserves up-closedness and convexity.

The last ingredient we need (for the semantics of loops) is an order on $\C(\Sigma)$:
\[
  S\sqsubseteq_{\C} T
    \quad\text{iff}\quad
  \forall \nu\in T.\ \exists \mu\in S.\ \mu \sqsubseteq_{\D} \nu
\]
This order, due to \citet{smyth1978power}, is not generally antisymmetric, but in this case it is antisymmetric because the sets in $\C$ are up-closed. In fact, due to up-closure, the Smyth order is equivalent to reverse subset inclusion $S\sqsubseteq_{\C} T$ iff $S \supseteq T$.
The bottom element of $\C(\Sigma)$ in this order is $\eta(\bot) = \mathord\uparrow\{\delta_\bot\} = \D(\Sigma_\bot)$, the set of all distributions. Operationally, this means that nontermination is identified with total uncertainty about the program outcome. As we unroll loops to obtain tighter and tighter approximations of their semantics, we rule out more and more possible behaviors.

In addition, we note that $\langle\C(\Sigma), \sqsubseteq_\C\rangle$ is a directed complete partial order (dcpo), meaning that all increasing chains of elements $S_1 \sqsubseteq_\C S_2 \sqsubseteq_\C \cdots$  have a supremum. Since $S\sqsubseteq_\C T$ is equivalent to $S\supseteq T$, then suprema are given by standard set intersection. So, to show that $\langle\C(\Sigma), \sqsubseteq_\C\rangle$ is a dcpo we need to show that any chain $S_1 \supseteq S_2 \supseteq \cdots$ has a supremum (\ie intersection) in $\C(\Sigma)$.
\citet[Lemma B.4.4]{mciver2005abstraction}, showed that $\D(\Sigma)$ is compact using Tychonoff's Theorem, and therefore it is well known that such a chain has a nonempty intersection. The remaining properties (convexity, closure, up-closure) are well known to be preserved by intersections too.

\begin{figure}[t]
\begin{align*}
\de{\skp}(\sigma) &\triangleq \eta(\sigma)
\\
\de{x \coloneqq e}(\sigma) &\triangleq \eta(\sigma[x\coloneqq \de{e}(\sigma)])
\\
\de{C_1 \fatsemi C_2}(\sigma) &\triangleq \dem{C_2}{{\de{C_1}(\sigma)}}
\\
\de{C_1 \nd C_2}(\sigma) &\triangleq \de{C_1}(\sigma) \nd \de{C_2}(\sigma)
\\
\de{C_1 \oplus_e C_2}(\sigma) &\triangleq \de{C_1}(\sigma) \oplus_{\de{e}(\sigma)} \de{C_2}(\sigma)
\\
\de{\iftf e{C_1}{C_2}}(\sigma) &\triangleq \left\{
  \begin{array}{ll}
    \de{C_1}(\sigma) & \text{if} ~ \de{e}(\sigma) = \tru \\
    \de{C_2}(\sigma) & \text{if} ~ \de{e}(\sigma) = \fls
  \end{array}
\right.
\\
\de{\whl eC}(\sigma) & \triangleq \mathsf{lfp}\left(\Phi_{\langle C, e\rangle}\right)(\sigma)
\\\\
& \hspace{-9em}\text{where}~ \Phi_{\langle C, e\rangle}(f)(\sigma) \triangleq \left\{
  \begin{array}{ll}
    f_\bot^\dagger(\de{C}(\sigma)) & \text{if} ~ \de{e}(\sigma) = \tru \\
    \eta(\sigma) & \text{if} ~ \de{e}(\sigma) = \fls
  \end{array}
  \right.
\end{align*}
\caption{Denotational Semantics for programs $\de{C} \colon \Sigma \to \C(\Sigma)$, where $\de{e} \colon \Sigma\to\mathsf{Val}$ is the interpretation of expressions, defined in the obvious way.}
\label{fig:semantics}
\end{figure}

\subsection{Semantics of Sequential Commands}
\label{sec:sem-seq}

We are now ready to define the semantics, shown in \Cref{fig:semantics}. We interpret commands denotationally as maps from states to convex sets of distributions, \ie $\de{C} \colon \Sigma \to \C(\Sigma)$. No-ops and variable assignment are defined as point-mass distributions. Sequential composition is a Klesili composition.
The probabilistic and nondeterministic choice operations are defined in terms of new operators:
\[
S \oplus_p T \triangleq \{ p\cdot \mu + (1-p)\cdot \nu \mid \mu \in S, \nu\in T \}
\qquad\qquad
S \nd T \triangleq \textstyle{\bigcup_{p\in [0,1]}} S\oplus_p T
\]
As expected, probabilistic branching chooses an element of $\de{C_1}(\sigma)$ with probability $p = \de{e}(\sigma)$, and chooses an element of $\de{C_2}(\sigma)$ with probability $1-p$. Nondeterministic choices are equivalent to a union of all the possible probabilistic choices between $C_1$ and $C_2$. If we think of nondeterminism being resolved by a scheduler, this operationally corresponds to the scheduler picking a bias $p$ (which could be 0 or 1, corresponding to certainty), then flipping a coin with bias $p$ to decide which command to execute \cite{segala1994probabilistic,varacca2002powerdomain}.

\citet{jifeng1997probabilistic} showed that all of these operations preserve up-closedness and convexity and \citet{wpe} showed that they preserve topological closure  (\citeauthor{wpe} refer to this as Cauchy Closure) and non-emptiness.

Conditional statements are defined in the standard way. A branch is taken deterministically depending on whether the guard $e$ evaluates to true or false.
As syntactic sugar,
we define special syntax for biased coin flips and 
nondeterministic choice from a nonempty finite set $S = \{ v_1, \ldots, v_n\}$:
\[
  x \coloneqq \mathsf{flip}(e)
  \;\triangleq\;
  (x \coloneqq \tru) \oplus_e (x\coloneqq \fls)
\qquad\qquad
  x \gets S
  \;\triangleq\;
  (x \coloneqq v_1) \nd \cdots \nd (x \coloneqq v_n)  
\]

\subsection{Semantics of Loops}
\label{sec:fixpoint}

Loops are interpreted as the least fixed point of $\Phi_{\langle C, e\rangle}$ (see \Cref{fig:semantics}), which essentially means that:
\[
\de{\whl eC} = \de{\ift e (C \fatsemi \whl eC)}
\]
%
We will use the Kleene fixed point theorem to prove that a least fixed point exists. To do so, we first define an ordering on functions $\mathord{\sqsubseteq_\C^\bullet} \subseteq (\Sigma \to \C(\Sigma)) \times (\Sigma \to \C(\Sigma))$, which is the pointwise extension of the order $\sqsubseteq_\C$ from \Cref{sec:prelim} and is defined as follows:
\[
  f \sqsubseteq_\C^\bullet g
  \qquad\text{iff}\qquad
  \forall \sigma\in \Sigma.\ f(\sigma) \sqsubseteq_\C g(\sigma)
  \qquad\text{iff}\qquad
  \forall \sigma\in \Sigma.\ f(\sigma) \supseteq g(\sigma)
\]
Clearly, the function $\bot_\C^\bullet(\sigma) \triangleq \eta(\bot)$ is the bottom of this order, since $\eta(\bot)$ is the bottom of $\sqsubseteq_\C$. The characteristic function $\Phi_{\langle C, e\rangle}$ is also Scott continuous in this order, meaning that it preserves suprema of directed sets \Acref{lem:scott-continuity}:
\[
  \sup_{f \in D} \Phi_{\langle C, e\rangle}(f) = \Phi_{\langle C, e\rangle}(\sup D)
\]
So, by the Kleene fixed point theorem, we conclude that the least fixed point exists, and is characterized as the supremum of the iterates of $\Phi_{\langle C, e\rangle}$ over all the natural numbers.
\[
  \mathsf{lfp}\left(\Phi_{\langle C, e\rangle}\right)(\sigma)
  =
  \left(\sup_{n\in \mathbb N}\Phi_{\langle C, e\rangle}^n\left(\bot_\C^\bullet\right)\right)(\sigma)
  = \bigcap_{n\in \mathbb N}\Phi_{\langle C, e\rangle}^n\left(\bot_\C^\bullet\right)(\sigma)
\]
These iterates are defined as $f^0 \triangleq \mathsf{id}$ and $f^{n+1} \triangleq f\circ f^n$, where $\circ$ is function composition.

\section{Demonic Outcome Logic}
\label{sec:logic}

We now present Demonic Outcome Logic, a new logic for reasoning about programs that are both randomized and nondeterministic. This logic has constructs for reasoning about probabilistic branching, inspired by Outcome Logic (OL) \cite{outcome} and probabilistic Hoare logics \cite{ellora,hartog}. In addition, nondeterminism is treated {\em demonically}: the postcondition must hold regardless of how the nondeterminism is resolved.

\subsection{Outcome Assertions}
\label{sec:assertions}

Outcome assertions $\varphi,\psi\in \prop$  are used in pre- and postconditions of triples in Demonic Outcome Logic. The syntax is shown below, where $p\in[0,1]$ is a probability, and $P,Q \in\mathsf{Atom}$ are atoms.
\begin{align*}
\mathsf{Prop} \ni \varphi &\Coloneqq
 \top \mid \bot 
 \mid \varphi\land \psi
\mid \varphi \oplus_p \psi 
\mid \varphi\nd\psi
\mid \sure P
&
p\in[0,1]
\\
\mathsf{Atom} \ni P &\Coloneqq
 \tru \mid \fls \mid P \land Q \mid P\lor Q \mid \lnot P \mid e_1 = e_2 \mid e_1 \le e_2 \mid \cdots
\end{align*}
Atomic assertions $P,Q\in\mathsf{Atom}$ describe states, and are interpreted using $\sem- \colon \mathsf{Atom} \to \bb{2}^\Sigma$, giving the set of states satisfying $P$, defined as usual. The satisfaction relation $\mathord\vDash \subseteq \D(\Sigma_\bot) \times \prop$ defined in \Cref{fig:assertions} relates each assertion to probability distributions $\mu \in \mathcal \D(\Sigma_\bot)$ (not \emph{sets} distributions). As explained in \Cref{sec:triples}, this corresponds to the logic's demonic treatment of nondeterminism.
\begin{figure}
\[
\begin{array}{lcl}
  \mu\vDash\top & \multicolumn{2}{l}{\text{always}}
  \\
  \mu\vDash\bot & \multicolumn{2}{l}{\text{never}}
  \\
  \mu\vDash \varphi\land\psi
  & \text{iff} &
  \mu\vDash\varphi \quad\text{and}\quad
  \mu\vDash\psi
  \\
  \mu \vDash \varphi\oplus_p \psi & \text{iff} & \exists \mu_1, \mu_2.\quad \mu = p\cdot\mu_1 + (1-p)\cdot \mu_2
    \quad\text{and}\quad  \mu_1 \vDash \varphi
    \quad\text{and}\quad  \mu_2\vDash \psi
  \\
  \mu\vDash\varphi\nd\psi
    &\text{iff}&
     \mu \vDash \varphi \oplus_p \psi\quad\text{for some}~ p\in[0,1]
  \\
  \mu\vDash \sure P & \text{iff} &
     \supp(\mu) \subseteq \sem P
\end{array}
\]
\caption{Definition of the satisfaction relation $\mathord\vDash \subseteq \D(\Sigma_\bot) \times \prop$ for Outcome Assertions.}
\label{fig:assertions}
\end{figure}

As expected, $\top$ is satisfied by any distribution, whereas $\bot$ is satisfied by nothing. The logical conjunction $\varphi\land\psi$ is true iff both conjuncts are true.
If a distribution $\mu$ satisfies the probabilistic outcome conjunction $\varphi \oplus_p\psi$, then $\mu$ must be a convex combination with parameter $p$ of a distribution satisfying $\varphi$ and one satisfying $\psi$. Similarly, $\mu\vDash\varphi\nd\psi$ means that $\mu$ is some convex combination (where the parameter is existentially quantified) of distributions satisfying $\varphi$ and $\psi$. Finally, a distribution satisfies $\sure P$ if its support is contained in $\sem P$. Note that $P$ can only describe states (and not $\bot$), so $\mu\vDash \sure P$ implies that $\mu(\bot) = 0$, \ie that the program that generated $\mu$ almost surely terminated. This is a crucial difference between $\tru$ and $\top$; whereas $\tru$ guarantees almost sure termination, $\top$ is satisfied by any distribution.

As an example, $\sure{x=1} \oplus_{\frac13} \sure{x=2}$ means that the event $x=1$ occurs with probability $\frac13$ and the event $x=2$ occurs with probability $\frac23$. On the other hand, $\sure{x=1}\nd \sure{x=2}$ means that $x=1$ occurs with some probability $p$ and $x=2$ occurs with probability $1-p$.
Given that we represent nondeterminism as convex union, $\varphi\nd \psi$ characterizes nondeterministic choice.
In addition, we can \emph{forget} about the probabilities of outcomes by weakening $\varphi\oplus_p\psi \Rightarrow \varphi\nd\psi$.
As a shorthand, we will often write ${\textstyle\bignd_{k=1}^n} \varphi_k$ instead of $\varphi_1 \nd \cdots \nd \varphi_n$ for finite $\nd$ conjunctions of assertions.
Unlike in standard Outcome Logic \cite{outcome}, $\varphi\nd\psi$ does not imply that both $\varphi$ and $\psi$ are realizable via an actual trace; for example if $\mu\vDash \sure{x=1} \nd \sure{x=2}$, it is possible that the event $x=1$ occurs with probability 0 according to $\mu$. This is an intentional choice, as it allows us to retain desirable propositional properties such as idempotence of $\nd$, as explained in \Cref{sec:equations}.

Echoing the equational laws from \Cref{sec:equations}, outcome assertions can be manipulated using the following implications, where $\varphi \Rightarrow \psi$ means that if $\mu\vDash \varphi$ then $\mu\vDash\psi$ for all $\mu\in\D(\Sigma_\bot)$. These implications are not included in the syntax of $\prop$, since they are not allowed to be used in the pre- and postconditions of triples.
\begin{align*}
  \varphi\nd\varphi & \Leftrightarrow \varphi &
  \varphi\oplus_p \varphi & \Leftrightarrow \varphi &&
  \textsf{(Idempotence)}
  \\
  \varphi \nd \psi & \Leftrightarrow \psi \nd \varphi &
    \varphi \oplus_p \psi &\Leftrightarrow \psi \oplus_{1-p} \varphi &&
  \textsf{(Commutativity)}
  \\
  \varphi \nd (\psi \nd\vartheta) & \Leftrightarrow (\varphi \nd\psi) \nd \vartheta &
  \quad(\varphi \oplus_p \psi) \oplus_q \vartheta &\Leftrightarrow \varphi \oplus_{pq} (\psi \oplus_{\frac{(1-p)q}{1-pq}} \vartheta) &&
  \textsf{(Associativity)}
  \\
  \span\varphi \oplus_p ( \psi \nd \vartheta) \Leftrightarrow (\varphi \oplus_p \psi) \nd (\varphi \oplus_p \vartheta) \span\span
  &&
  \textsf{(Distributivity)}
\end{align*}
We remark that these laws depend on what is---and, crucially, what is \emph{not}---included in the assertion language. As we saw in \Cref{sec:equations}, idempotence is delicate due to the fact that assertions are only \emph{approximations} of the distributions that they model. Despite this, idempotence turns out to be crucial to the usability of the logic, as the soundness of several inference rules depends on it (\eg \ruleref{Nondet} and \ruleref{Bounded Variant}). 
The idempotence laws would be invalidated if the syntax included disjunctions or existential quantification. For example, in the following implication, $x$ having value 1 or 2 each with probability $\frac12$ does not imply that $x$ is always 1 or always 2.
\[
  \sure{x = 1} \oplus_{\frac12} \sure{x = 2}
  \quad\implies\quad
  (\sure{x=1} \vee \sure{x=2}) \oplus_{\frac12} (\sure{x=1} \vee \sure{x=2})
  \quad\centernot\implies\quad
  \sure{x=1} \vee \sure{x=2}
\]
Note that the first implication is valid since $\sure{x=1} \Rightarrow \sure{x=1} \vee \sure{x=2}$ and weakening can be applied inside of $\oplus_p$ as follows: $\sure P\Rightarrow \sure{P'} \vdash \sure P\oplus_p \sure Q \Rightarrow \sure{P'} \oplus_p \sure{Q}$.

We do not believe that the exclusion of disjunctions and existential quantification poses a severe restriction in practice. Existentials are often used to quantify over the values of certain program variables; in Demonic Outcome Logic, we quantify over values in a different way. In a typical logic, pre- and postconditions are predicates over individual states, so $\exists v:T. x=v$ asserts that the value of $x$ takes on some value from the set $T$. In our case, we use predicates over distributions, so it is more appropriate to say $\bignd_{v\in T} \sure{x=v}$, which asserts that the value of $x$ is in $T$ for every state in the support of the distribution. We use this technique in \Cref{sec:von-neumann}.

\subsection{Semantics of Triples}
\label{sec:triples}

Similar to Hoare Logic and Outcome Logic \cite{outcome}, specifications in Demonic Outcome Logic are triples of the form $\triple\varphi{C}\psi$. Intuitively, the semantics of these triples is that if states are initially distributed according to a distribution $\mu \in \D(\Sigma_\bot)$ that satisfies $\varphi$, then after the program $C$ is run, the resulting states will be distributed according to some distribution $\nu \in \D(\Sigma_\bot)$ that satisfies $\psi$, 
regardless of how nondeterministic choices in $C$ are resolved. We formalize the semantics of triples, denoted by $\vDash\triple\varphi{C}\psi$, as follows.
\begin{definition}[Semantics of Demonic Outcome Triples]
\[
\vDash\triple{\varphi}C\psi
\qquad\text{iff}\qquad
\forall \mu \in \D(\Sigma_\bot).\quad
\mu \vDash\varphi
\quad\implies\quad
\forall \nu\in \dem{C}{\upset{\mu}}.\quad \nu\vDash\psi
\]
\end{definition}
We note that when limited to basic assertions $P,Q\in\mathsf{Atom}$, $\vDash\triple{\sure P}C{\sure Q}$ is semantically equivalent to a total correctness Hoare triple \cite{manna1974axiomatic} (albeit, in a language with randomization). That is, for any start state $\sigma\in\sem P$, the program will terminate in a state $\tau \in\sem Q$.

\subsection{Inference Rules}
\label{sec:rules}

The inference rules for reasoning about non-looping commands are shown in \Cref{fig:rules} (we will revisit loops in \Cref{sec:loops}). We write $\vdash\triple\varphi C\psi$ to mean that a triple is derivable using these rules. This relates to the semantics of triples via the following soundness theorem, which is proved by induction on the derivation \Acite{app:soundness}.
\begin{restatable}[Soundness]{theorem}{thmsoundness}
\label{thm:soundness}
\(
  \vdash\triple{\varphi}C{\psi}
  \qquad\implies\qquad
  \vDash\triple{\varphi}C{\psi}
\)
\end{restatable}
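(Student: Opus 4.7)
The plan is to proceed by induction on the derivation $\vdash\triple\varphi C\psi$, establishing semantic validity of each inference rule from \Cref{fig:rules} under the assumption that its premises are semantically valid. For the atomic rules (\ruleref{Skip} and assignment), I would unfold the definition of $\vDash\triple{\varphi}{C}{\psi}$ and compute $\de{C}^\dagger(\upset\mu)$ directly, using that $\eta(\sigma)^\dagger(\upset\mu)$ essentially redistributes mass pointwise and preserves the outcome structure. For \ruleref{Seq}, I would rely on the third monad law $(f^\dagger\circ g)^\dagger = f^\dagger\circ g^\dagger$ established in \Cref{sec:prelim}, so that any $\nu\in\de{C_1\fatsemi C_2}^\dagger(\upset\mu)$ factors through an intermediate $\vartheta\in\de{C_1}^\dagger(\upset\mu)$ satisfying the pivot assertion. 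The consequence and conditional rules would be handled by directly invoking soundness of the corresponding propositional implications and case analysis on $\de{e}(\sigma)$.

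The genuinely delicate cases are \ruleref{ND Split} and \ruleref{Prob Split}, together with the rules that introduce $\nd$ or $\oplus_p$ in the postcondition by reasoning about $C_1\nd C_2$ or $C_1\oplus_e C_2$. For these, the key lemma I would prove separately is a \emph{decomposition principle}: if $\mu\vDash\varphi_1\oplus_p\varphi_2$, so $\mu = p\mu_1 + (1-p)\mu_2$ with $\mu_i\vDash\varphi_i$, then every $\nu\in\de{C}^\dagger(\upset\mu)$ can be written as $\nu = p\nu_1 + (1-p)\nu_2$ with $\nu_i\in\de{C}^\dagger(\upset{\mu_i})$. The analogous statement for $\nd$ follows by instantiating the existential $p$. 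This decomposition relies both on up-closure (to absorb any mass the Kleisli extension may have reshuffled through $\bot$) and on convexity of $\de{C}(\sigma)$ for each $\sigma$, which ensures that the elementwise choices of $\nu_x\in\de{C}_\bot(x)$ can be split along the weights of $\mu_1$ and $\mu_2$ without leaving the semantic domain.

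Once the decomposition principle is in hand, \ruleref{ND Split} and \ruleref{Prob Split} follow by applying the inductive hypothesis to each piece and recombining via the satisfaction clauses of $\oplus_p$ and $\nd$ in \Cref{fig:assertions}. For the rules that compose outcomes out of the primitive operators $\nd$ and $\oplus_e$, I would use the semantic identities $\de{C_1\oplus_e C_2}(\sigma) = \de{C_1}(\sigma)\oplus_{\de{e}(\sigma)} \de{C_2}(\sigma)$ and the analogous clause for $\nd$, together with the fact that the set-level operators $\oplus_p$ and $\nd$ on $\C(\Sigma)$ correspond exactly to the distributional decompositions in the satisfaction relation.

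I expect the decomposition principle to be the main obstacle: the interaction between up-closure, convex closure, and the Kleisli extension is exactly the subtlety flagged in \Cref{sec:composition,sec:equations}, and writing it out requires careful bookkeeping of which distributions above $\mu_i$ in the $\sqsubseteq_\D$ order are used when reconstructing $\nu_i$. The remaining rules — \ruleref{Seq} given the monad laws, the consequence rules given soundness of the propositional laws, and the conditional given pointwise case analysis — should then follow by routine induction.
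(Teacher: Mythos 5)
Your overall strategy---induction on the derivation, with a convex decomposition principle as the key lemma for \ruleref{Prob Split} and \ruleref{ND Split}---matches the paper's proof closely. Your decomposition principle is exactly the combination of \Cref{lem:bigger-prob,lem:upcl-prob,lem:oplus_dist} in the appendix: $\upset{p\cdot\mu_1+(1-p)\cdot\mu_2}=\upset{\mu_1}\oplus_p\upset{\mu_2}$ and $\dem{C}{S_1\oplus_p S_2}=\dem{C}{S_1}\oplus_p\dem{C}{S_2}$, and your observation that the $\nd$ case of the \emph{split} rules reduces to the $\oplus_p$ case by instantiating the existential is also how the paper handles it. The \ruleref{Seq} case via the monad law plus the fact that $\dem{C}{S}=\bigcup_{\mu\in S}\dem{C}{\upset\mu}$ (\Cref{lem:dagger-upset}) is likewise the paper's route.

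However, there is a genuine gap in your treatment of the \ruleref{Nondet} rule for the program construct $C_1\nd C_2$. You propose to use the semantic identity $\de{C_1\nd C_2}(\sigma)=\de{C_1}(\sigma)\nd\de{C_2}(\sigma)$ ``together with the fact that the set-level operators $\oplus_p$ and $\nd$ on $\C(\Sigma)$ correspond exactly to the distributional decompositions in the satisfaction relation.'' This correspondence fails for $\nd$ once you pass to the Kleisli extension over a non-Dirac precondition: $\dem{C_1\nd C_2}{S}\neq\dem{C_1}{S}\nd\dem{C_2}{S}$, because the adversary may resolve the choice \emph{differently at each state in the support}, with a different mixing parameter at each $\sigma$. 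This is precisely why the rule restricts the precondition to a basic assertion $\sure{P}$ (a restriction your plan does not exploit), and it is the counterexample the paper spells out with the coin-guessing program in \Cref{sec:rules}. The paper's argument instead applies the inductive hypothesis pointwise at each Dirac $\delta_\sigma$ for $\sigma\in\supp(\mu)$ to get $\nu_\sigma\vDash\psi_1\nd\psi_2$ for every $\nu_\sigma\in\de{C_1\nd C_2}(\sigma)$, and then invokes a separate lemma (\Cref{lem:convex_sum_idem}) stating that satisfaction of \emph{every} assertion in the language is closed under countable convex combinations---so that $\sum_\sigma\mu(\sigma)\cdot\nu_\sigma\vDash\psi_1\nd\psi_2$. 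That closure lemma is a nontrivial induction on the assertion (it is where idempotence of $\nd$ and the exclusion of disjunction from the syntax do their work), and it is the one ingredient your plan is missing; without it the \ruleref{Nondet} case does not go through. A smaller omission in the same spirit: the \ruleref{Skip} case needs that satisfaction is monotone in $\sqsubseteq_\D$ (\Cref{lem:up-satisfy}), again proved by induction on the assertion, which your ``redistributes mass pointwise'' gloss elides.
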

Now, we will describe the rules in more depth.

\subsubsection*{Sequential and Probabilistic Commands}
Many of the rules for analyzing commands are as expected. The \ruleref{Skip} rule simply preserves the precondition, as no-ops do not affect the distribution of outcomes. The \ruleref{Assign} rule uses standard backward substitution, where $\varphi[e/x]$ is the assertion obtained by syntactically substituting $e$ for all occurrences of $x$.
The \ruleref{Seq} rule allows us to analyze sequences of commands compositionally, and relies on the fact that $\dem CS = \bigcup_{\mu\in S}\dem C{\upset{\mu}}$ for soundness.

In order to analyze a probabilistic choice $C_1 \oplus_e C_2$ using \ruleref{Prob}, the precondition $\varphi$ must contain enough information to ascertain that the expression $e$ evaluates to a precise probability $p\in [0,1]$. If $e$ is a literal $p$, then this restriction is trivial since $\varphi\Rightarrow \sure{p=p}$ for any $\varphi$. The postcondition then joins the outcomes of the two branches using $\oplus_p$.
Similarly, the rules for analyzing if statements require that the precondition selects one of the two branches deterministically. \ruleref{If1} applies when the precondition forces the true branch to be taken and \ruleref{If2} applies when it forces the false branch. We will soon see derived rules that allow both branches to be analyzed in a single rule.

\subsubsection*{Structural Rules}
The bottom of \Cref{fig:rules} also contains structural rules, which do not depend on the program command. The \ruleref{Prob Split} and \ruleref{ND Split} rules allow us to deconstruct pre- and postconditions in order to build derivations compositionally. As we will see shortly, these rules are necessary to analyze nondeterministic choices, since the \ruleref{Nondet} rule requires the precondition to be a basic assertion $\sure P$. They are also useful for analyzing if statements, since the \ruleref{If1} and \ruleref{If2} rules require the precondition to imply the truth and falsity of the guard, respectively. The soundness of these rules relies on the following equality \Acref{lem:oplus_dist}.
\[
  \dem{C}{S_1 \oplus_p S_2}
  \quad=\quad
  \dem{C}{S_1} \oplus_p \dem{C}{S_2}
\]
Next, we have the usual rule of \ruleref{Consequence}, which allows the precondition to be strengthened and the postcondition to be weakened. These implications are semantic ones; we do not provide proof rules to dispatch them beyond the laws at the end of \Cref{sec:assertions}.

Finally, the rule of \ruleref{Constancy} allows us to conjoin additional information $P$ about the program state, so long as it does not involve any of the modified program variables. We let $\mathsf{mod}(C)$ denote the set of variables modified by the program $C$, defined inductively on its structure. One subtlety is that possibly nonterminating programs must be considered to modify all the program variables, meaning that \ruleref{Constancy} only applies to terminating programs. However, this restriction does not matter much in practice, since all the loop rules we present in \Cref{sec:loops} guarantee almost sure termination. In addition $\mathsf{fv}(P)$ is the set of variables occurring free in $P$.
 Just like the frame rule from Outcome Separation Logic, $P$ is a basic assertion rather than an outcome assertion, since this extra information concerns only the local state and not the branching behavior of the program \cite{zilberstein2024outcome}. The soundness of \ruleref{Constancy} is considerably simpler than that of the frame rule, since it does not deal with dynamically allocated pointers and aliasing.
 
 \begin{figure}
\begin{flushleft}{\textit{\sffamily Commands}}\end{flushleft}
\[
\ruledef{Skip}{\;}{\triple{\varphi}{\skp}{\varphi}}
\qquad\quad
\ruledef{Assign}{\;}{
  \triple{\varphi[e/x]}{x \coloneqq e}{\varphi}
}
\qquad\quad
\ruledef{Seq}{
  \triple{\varphi}{C_1}{\vartheta}
  \\
  \triple{\vartheta}{C_2}{\psi}
}{
  \triple{\varphi}{C_1\fatsemi C_2}{\psi}
}
\]
\medskip
\[
\ruledef{Prob}{
  \varphi \Rightarrow \sure{e = p}
  \quad
  \triple{\varphi}{C_1}{\psi_1}
  \quad
  \triple{\varphi}{C_2}{\psi_2}
}{
  \triple{\varphi}{C_1 \oplus_e C_2}{\psi_1 \oplus_p \psi_2}
}
\qquad
\ruledef{Nondet}{
  \triple{\sure P}{C_1}{\psi_1}
  \quad
  \triple{\sure P}{C_2}{\psi_2}
}{
  \triple{\sure P}{C_1 \nd C_2}{\psi_1 \nd \psi_2}
}
\]
\medskip
\[
\ruledef{If1}{
  \varphi \Rightarrow \sure e
  \\
  \triple{\varphi}{C_1}{\psi}
}{
  \triple{\varphi}{\iftf e{C_1}{C_2}}{\psi}
}
\qquad\qquad\qquad
\ruledef{If2}{
  \varphi \Rightarrow \sure{\lnot e}
  \\
  \triple{\varphi}{C_2}{\psi}
}{
  \triple{\varphi}{\iftf e{C_1}{C_2}}{\psi}
}
\]
\smallskip
\begin{flushleft}{\textit{\sffamily Structural Rules}}\end{flushleft}
\medskip
\[
\inferrule{
  \triple{\varphi_1}C{\psi_1}
  \\
  \triple{\varphi_2}C{\psi_2}
}{
  \triple{\varphi_1 \oplus_p \varphi_2}C{\psi_1\oplus_p\psi_2}
}{\rulename{Prob Split}}
\qquad\qquad
\ruledef{ND Split}{
  \triple{\varphi_1}C{\psi_1}
  \\
  \triple{\varphi_2}C{\psi_2}
}{
  \triple{\varphi_1 \nd \varphi_2}C{\psi_1\nd\psi_2}
}
\]
\medskip
\[
\ruledef{Consequence}{
  \varphi' \Rightarrow \varphi
  \quad
  \triple{\varphi}C\psi
  \quad
  \psi \Rightarrow \psi'
}{
  \triple{\varphi'}C{\psi'}
}
\qquad
\ruledef{Constancy}{
  \triple{\varphi}C\psi
  \quad
  \mathsf{mod}(C) \cap \mathsf{fv}(P) = \emptyset
}{
  \triple{\varphi\land \sure P}C{\psi\land \sure P}
}
\]
\caption{Inference rules for non-looping commands in Demonic Outcome Logic.}
\label{fig:rules}
\end{figure}
 
\subsubsection*{Nondeterministic Branching}

The \ruleref{Nondet} rule can only be applied if the precondition is a basic assertion $P$. If the precondition contained information about probabilistic outcomes, then this rule would be unsound. To demonstrate why this is the case, let us revisit the coin flip game:
\[
  x \coloneqq\mathsf{flip}\left(\tfrac12\right)
  \fatsemi
  y \gets \Bool
\]
If we imagine that nondeterminism is controlled by an adversary, then it is always possible for the adversary to \emph{guess} the coin flip, that is, to force $x$ and $y$ to be equal. However, if we allowed the precondition in the \ruleref{Nondet} rule to contain probabilistic outcomes, then we could derive the triple below, stating that $x=y$ always occurs with probability $\frac12$, which is untrue.
\[\footnotesize
\hspace{-6em}
\inferrule*[right=\ruleref{Consequence}]{
  \inferrule*[Right=\ruleref{Nondet}\textnormal{ (incorrect usage)}]{
      \inferrule*{\vdots}{
        \triple{\sure{x = \tru} \oplus_\frac12 \sure{x=\fls}}{y \coloneqq \tru}{\sure{x = y} \oplus_\frac12 \sure{x\neq y}}
      }
      \quad
      \inferrule*[vdots=2.5em,leftskip=5em,rightskip=15em]{\vdots}{
        \triple{\sure{x = \tru} \oplus_\frac12 \sure{x=\fls}}{y \coloneqq \fls}{\sure{x \neq y} \oplus_\frac12 \sure{x= y}}
      }
  }{
    \triple{\sure{x = \tru} \oplus_\frac12 \sure{x=\fls}}{y\gets\Bool}{(\sure{x=y} \oplus_\frac12 \sure{x\neq y}) \nd (\sure{x\neq y} \oplus_\frac12 \sure{x= y})}
  }
}{
  \triple{\sure{x = \tru} \oplus_\frac12 \sure{x=\fls}}{y\gets\Bool}{\sure{x=y} \oplus_\frac12 \sure{x\neq y}}
}
\]
Instead---as shown below---we must de-structure the precondition using \ruleref{Prob Split}, and then apply \ruleref{Nondet} twice, using each of the basic assertions ($\sure{x=\tru}$ and $\sure{x=\fls}$) as preconditions. This has the effect of expanding each basic outcome inside of the $\oplus_\frac12$.
After applying idempotence in the postcondition, we see that $x=y$ occurs with unknown probability, which does not preclude that the adversary could force it to occur. In fact, the postcondition is equivalent to $\tru$.
\[\footnotesize
\hspace{2em}
\inferrule*[right=\ruleref{Consequence}]{
  \inferrule*[Right=\ruleref{Prob Split},leftskip=3em,rightskip=3em]{
      \inferrule*[right=\ruleref{Nondet}]{\vdots
      }{
        \triple{\sure{x = \tru}}{y \gets \Bool}{\sure{x = y} \nd \sure{x\neq y}}
      }
      \qquad
      \inferrule*[Right=\ruleref{Nondet}]{\vdots
      }{
        \triple{\sure{x = \fls}}{y \gets \Bool}{\sure{x \neq y} \nd \sure{x= y}}
      }
  }{
    \triple{\sure{x = \tru} \oplus_\frac12 \sure{x=\fls}}{y\gets\Bool}{(\sure{x=y} \nd \sure{x\neq y}) \oplus_\frac12 (\sure{x\neq y} \nd \sure{x= y})}
  }
}{
  \triple{\sure{x = \tru} \oplus_\frac12 \sure{x=\fls}}{y\gets\Bool}{\sure{x=y} \nd \sure{x\neq y}}
}
\]
Requiring basic assertions as the precondition may seem restrictive, but the \ruleref{Nondet} rule can still be applied in all scenarios by deconstructing the precondition, as we saw in \Cref{sec:composition} \Acite{app:coin}.
The soundness of \ruleref{Nondet} fundamentally depends on idempotence. In the proof, we show that if $\mu\vDash P$, then $\nu \vDash\psi_1 \nd \psi_2$ for each $\nu \in \de{C_1\nd C_2}(\sigma)$ and $\sigma\in \supp(\mu)$.
Any distribution in $\dem{C_1\nd C_2}{\upset\mu}$ is therefore a convex combination of distributions, all of which satisfy $\psi_1\nd\psi_2$. We collapse that convex combination using, \eg $(\psi_1 \nd \psi_2) \oplus_{p} (\psi_1 \nd \psi_2)\Rightarrow \psi_1 \nd \psi_2$ (see \Cref{lem:convex_sum_idem} for the more general property).
Soundness must be established in this way, since $\dem{C_1\nd C_2}S \neq \dem{C_1}S \nd \dem{C_2}S$ and so:
\[
  \dem{C_1}S\vDash \psi_1
  \quad\text{and}\quad
  \dem{C_2}S\vDash\psi_2
  \qquad\centernot\Rightarrow\qquad
  \dem{C_1\nd C_2}S \vDash \psi_1 \nd \psi_2
\]
On the other hand, $\de{C_1\nd C_2}(\sigma) = \de{C_1}(\sigma) \nd\de{C_2}(\sigma)$, so we can analyze nondeterministic branching \emph{compositionally} only when starting from a single state. 

\subsubsection*{Derived Rules}

In addition to the rules in \Cref{fig:rules}, we also provide some derived rules for convenience in common scenarios. All the derivations are shown in \Aref{app:derived}.
The first rules pertain to conditional statements. These rules use \ruleref{Prob Split} and \ruleref{ND Split} to deconstruct the pre- and postconditions in order to analyze both branches of the conditional statement. These are similar to the rules found in Ellora \cite{ellora} and Outcome Logic \cite{zilberstein2024relatively}.
\[
\inferrule{
  \substack{
    \varphi_1 \Rightarrow \sure e
    \\
    \varphi_2 \Rightarrow \sure{\lnot e}
  }
  \quad
  \triple{\varphi_1}{C_1}{\psi_1}
  \quad
  \triple{\varphi_2}{C_2}{\psi_2}
}{
  \triple{\varphi_1 \oplus_p \varphi_2}{\iftf e{C_1}{C_2}}{\psi_1 \oplus_p \psi_2}
}
\qquad
\inferrule{
  \substack{
    \varphi_1 \Rightarrow \sure e
    \\
    \varphi_2 \Rightarrow \sure{\lnot e}
  }
  \quad
  \triple{\varphi_1}{C_1}{\psi_1}
  \quad
  \triple{\varphi_2}{C_2}{\psi_2}
}{
  \triple{\varphi_1 \nd \varphi_2}{\iftf e{C_1}{C_2}}{\psi_1 \nd \psi_2}
}
\]
In addition, if the precondition of the conditional statement is a basic assertion $P$, then we can use the typical conditional rule from Hoare logic. This relies on the Hahn decomposition theorem: $\sure{P} \Rightarrow \sure{P\land e} \nd \sure{P\land\lnot e}$, that is, if $\mu\vDash P$, then $\mu$ can be separated into two portions where $e$ is true and false, respectively. Due to idempotence, we can simplify the postconditions using the consequence $\psi\nd\psi \Rightarrow \psi$.
\[
\inferrule{
  \triple{\sure{P\land e}}{C_1}{\psi}
  \\
  \triple{\sure{P\land\lnot e}}{C_2}{\psi}
}{
  \triple{\sure P}{\iftf e{C_1}{C_2}}{\psi}
}
\]
Finally, we provide rules for analyzing the coin flip and nondeterministic selection syntactic sugar introduced in \Cref{sec:sem-seq}. The flip rule is derived using a straightforward application of \ruleref{Prob}. The rule for nondeterministic selection is proven by induction on the size of $S$ (recall that $S$ is finite).
\[
  \inferrule{
    \varphi\Rightarrow \sure{e=p}
    \\
    x \notin \mathsf{fv}(\varphi)
  }{
    \triple{\varphi}{x \coloneqq \mathsf{flip}(e)}{\varphi \land (\sure{x = \tru} \oplus_p \sure{x=\fls})}
  }
  \qquad\quad
  \inferrule{\;}{\triple{\sure\tru}{x\gets S}{\textstyle\bignd_{v\in S} \sure{x=v}}}
\]
Although the precondition of the nondeterministic selection rule is $\tru$, it can be used in conjunction with the rule of \ruleref{Constancy} so that any basic assertion $P$ can be the precondition. Beyond that, to extend to any precondition $\varphi$, de-structuring rules must be applied just like with the \ruleref{Nondet} rule.

\begin{remark}[Completeness]
We have not explored completeness of Demonic Outcome Logic, even for the loop-free fragment. One reason for this is that the derivations witnessed by the completeness proofs for similar probabilistic logics do not mimic the sort of derivations that one would produce by hand. For example, in Ellora \cite{ellora}, the completeness proof involves quantifying over the (infinitely many) distributions that could satisfy the precondition, then showing that a derivation of the strongest postcondition is possible given any fixed one of those distributions. The complexity comes not only from loops, but also purely sequential constructs like probabilistic branching and if statements. To see examples of where this complexity arises, see \citet[Definition 3.5]{zilberstein2024relatively}, or \citet[Example 1]{hyperhoare}.
\end{remark}

\section{Analyzing Loops}
\label{sec:loops}

In this section, we discuss proof rules for analyzing loops. We are inspired by work on weakest pre-expectations \cite{mciver2005abstraction,kaminski,mciver2018new}, where probabilistic loop analysis has been studied extensively, but will argue in this section that our program logic approach has two advantages.

\subsubsection*{Fewer Conditions to Check} The weakest pre-expectation proof rules involve multiple checks, which include both \emph{sub}-invariants and \emph{super}-invariants and computing expected values of ranking functions. In contrast, in Demonic OL all the proof rules revolve around just one construct---outcome triples---and the premises of the rules can accordingly be consolidated.

\subsubsection*{Multiple Outcomes} It is often useful to specify programs in terms of their distinct outcomes, which we achieve using the assertions from \Cref{sec:assertions}. Pre-expectation calculi can only represent multiple outcomes by carrying out several distinct derivations, whereas Demonic OL can do so in one shot.

\subsection{Almost Sure Termination}

As we mentioned in \Cref{sec:fixpoint}, our semantics based on the Smyth powerdomain is suitable for total correctness---specifications implying that the program terminates.
Since we are in a probabilistic setting, it makes sense to talk about a finer notion of termination---\emph{almost sure termination}---meaning that the program terminates with probability 1. In terms of our program semantics, a program almost surely terminates if $\bot$ does not appear in the support of any of its resultant distributions.

\begin{definition}[Almost Sure Termination]
A program $C$ \emph{almost surely terminates} on input $\sigma$ iff
\[
  \forall \mu \in \de{C}(\sigma).~\bot \notin \supp(\mu)
\]
In addition, $C$ \emph{universally almost surely terminates} if it almost surely terminates on all $\sigma\in \Sigma$.
\end{definition}

Going further, we show how almost sure termination is established in Demonic Outcome Logic.

\begin{theorem}\label{thm:ast}
A program $C$ almost surely terminates starting from any state satisfying $P$ if:
\[
  \vDash\triple{\sure P}C{\sure\tru}
\]
As a corollary, $C$ universally almost surely terminates if $\vDash\triple{\sure\tru}C{\sure\tru}$.
\end{theorem}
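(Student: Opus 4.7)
The plan is to unfold the definitions and instantiate the semantics of the triple at a Dirac distribution concentrated on the starting state. Fix some $\sigma \in \Sigma$ with $\sigma \in \sem{P}$. The Dirac distribution $\delta_\sigma$ has $\supp(\delta_\sigma) = \{\sigma\} \subseteq \sem{P}$, so by the clause for $\sure{P}$ in \Cref{fig:assertions} we obtain $\delta_\sigma \vDash \sure{P}$. Applying the hypothesis $\vDash \triple{\sure P}{C}{\sure \tru}$ therefore yields that every $\nu \in \dem{C}{\upset{\delta_\sigma}}$ satisfies $\nu \vDash \sure{\tru}$, which unfolds to $\supp(\nu) \subseteq \sem{\tru} = \Sigma$, i.e.\ $\bot \notin \supp(\nu)$.

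The remaining step is to identify $\dem{C}{\upset{\delta_\sigma}}$ with $\de{C}(\sigma)$. This is exactly the first monad law $f^\dagger \circ \eta = f$ stated in \Cref{sec:prelim}: since $\eta(\sigma) = \upset\{\delta_\sigma\}$, we have $\dem{C}{\upset\{\delta_\sigma\}} = \de{C}(\sigma)$. Combining with the previous paragraph, every $\nu \in \de{C}(\sigma)$ has $\bot \notin \supp(\nu)$, which is the definition of almost sure termination on input $\sigma$. The corollary is then immediate: if the precondition is $\sure{\tru}$, then $\delta_\sigma \vDash \sure{\tru}$ for every $\sigma \in \Sigma$, so the same argument gives almost sure termination on every input.

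The argument is essentially a definitional unfolding chained with the monad law, so there is no real obstacle beyond bookkeeping. The only place that warrants a brief sanity check is the passage from $\upset\{\delta_\sigma\}$ back to $\de{C}(\sigma)$: one should note that $\delta_\sigma$ is maximal in $\sqsubseteq_\D$ (since $\delta_\sigma(\bot) = 0$), so invoking the monad law is legitimate and does not introduce spurious divergent distributions through up-closure. No induction on program structure is needed, and the argument is agnostic to whether $C$ contains loops, since it only uses the semantic content of the triple.
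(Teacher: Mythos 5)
Your proof is correct and follows the same route as the paper's: instantiate the triple at $\delta_\sigma$, unfold $\sure{\tru}$ to $\supp(\nu)\subseteq\Sigma$, and conclude $\bot\notin\supp(\nu)$. The only difference is that you make explicit the identification $\dem{C}{\upset{\delta_\sigma}} = \de{C}(\sigma)$ via the monad law $f^\dagger\circ\eta = f$, a bookkeeping step the paper's one-line proof leaves implicit.
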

\begin{proof}
The triple $\vDash\triple{\sure P}C{\sure\tru}$ means that if $\sigma\in\sem P$, that is, if $\delta_\sigma\vDash\sure P$, then $\supp(\mu) \subseteq \sem\tru = \Sigma$ for all $\mu\in\de{C}(\sigma)$. Since $\bot\notin \Sigma$, $\supp(\mu) \subseteq \Sigma$ iff $\bot\notin\supp(\mu)$.
\end{proof}
Following from \Cref{thm:ast}, if $\triple{\sure P}C{\psi}$ holds, then $C$ almost surely terminates as long as $\psi \Rightarrow \sure\tru$, which is simple to check in many cases. For example, if $\psi$ is formed as a collection of atoms $Q_1, \ldots, Q_n$ joined by $\nd$ and $\oplus_p$ connectives, then the program almost surely terminates since $\sure{Q_i} \Rightarrow \sure\tru$ holds trivially, and connectives can be collapsed using idempotence of $\nd$ and $\oplus_p$.

\subsection{The Zero-One Law}

\citet{mciver2005abstraction} showed that under certain conditions, probabilistic programs must terminate with probability either 0 or 1. In this circumstance, almost sure termination can be established simply by showing that the program terminates with nonzero probability.

The original rule of \citet[\S2.6]{mciver2005abstraction} used a propositional invariant $P$ to describe all reachable states after each iteration of the loop. We generalize their rule by using an outcome assertion $\varphi$ as the invariant, so that in addition to describing which states are reachable, we can also describe how those reachable states are distributed. Our version of the rule is stated below.
\[
\ruledef{Zero-One}{
  \varphi \Rightarrow \sure{e}
  \quad
  \psi\Rightarrow \sure{\lnot e}
  \quad
  \triple{\varphi}C{\varphi \nd \psi}
  \quad
  \triple{\varphi}{\whl eC}{\sure{\lnot e} \oplus_p \top}
  \quad
  p >0
}{
  \triple{\varphi}{\whl eC}{\psi}
}
\]
Given some loop $\whl eC$, the first step of the \ruleref{Zero-One} law is to come up with an invariant pair $\varphi$ and $\psi$, where $\varphi$ represents the distribution of states where the guard $e$ remains true and $\psi$ represents the distribution of states in which the loop has terminated. More precisely, $\varphi\Rightarrow\sure{e}$ and $\psi\Rightarrow\sure{\lnot e}$.
Next, we must prove that this is an invariant pair, by proving the following triple.
\begin{equation}\label{eq:invariant}
  \triple{\varphi}C{\varphi\nd\psi}
\end{equation}
That is, if the initial states are distributed according to $\varphi$, then loop body $C$ will reestablish $\varphi$ with some probability $q$ and will terminate (in $\psi$) with probability $1-q$. In the limit, $\psi$ will describe the entire distribution of \emph{terminating} outcomes \Acref{lem:invariant}, although we do not yet know if the loop almost surely terminates.
We can establish almost sure termination if $\varphi$ guarantees some nonzero probability of termination, as represented by the following triple.
\begin{equation}\label{eq:term}
  \triple\varphi{\whl eC}{\sure{\lnot e}\oplus_p \top}
  \qquad\text{where}\qquad
  p>0
\end{equation}
That is, for every $\mu\vDash\varphi$, the loop will always terminate (with $\lnot e$ holding) with probability at least $p>0$. From (\ref{eq:invariant}), we know that on the $i^\text{th}$ iteration of the loop, there is some probability $q_i$ of reestablishing the invariant $\varphi$, in which case the loop will continue to execute. This means that the total probability of nontermination is the product of all the $q_i$.
In general, it is possible for such an infinite product to converge to a nonzero probability, however, from (\ref{eq:term}) we know that every tail of that product must be at most $1-p$, which is strictly less than 1.
\[
  \PP[\text{nonterm}]
  \quad=\quad
  q_1 \times q_2 \times q_3 \times \cdots \times \underbrace{ q_n \times q_{n+1}\times \cdots }_{\le 1-p}
  \quad=\quad
  0
\]
This implies that the product of the $q_i$ must go to $0$, so that the loop almost surely terminates.
The full soundness proof is available in \Aref{app:zero-one}.

\subsection{Proving Termination with Variants and Ranking Functions}
\label{sec:rank}

Although the \ruleref{Zero-One} law provides a means for proving almost sure termination, it can be difficult to use directly, because one still must establish a {\em minimum probability of termination}. In this section, we provide some inference rules derived from \ruleref{Zero-One} that are easier to apply.
The first rule uses a bounded family of \emph{variants} $(\varphi_n)_{n=0}^N$, where $\varphi_0 \Rightarrow \lnot e$ and $\varphi_n\Rightarrow e$ for $1 \le n\le N$. The index $n$ can be thought of as a \emph{rank}, so that we get closer to termination as $n$ descends towards zero.
The premise of the rule is that the rank must decrease by at least 1 with probability $p > 0$ on each iteration. We represent this formally using the following triple.
\[
  \triple{\varphi_n}C{\textstyle(\bignd_{k=0}^{n-1} \varphi_k) \oplus_p (\bignd_{k=0}^N \varphi_k)}
\]
The assertion $\bignd_{k=0}^{n-1} \varphi_k$ is an aggregation of all the variants with rank strictly lower than $n$, essentially meaning that the states must be distributed according to those variants, but without specifying their relative probabilities. Note that, \eg $\varphi_{n-1}\Rightarrow \bignd_{k=0}^{n-1} \varphi_k$, so it is possible to establish this assertion if the rank always decreases by exactly 1. So, the postcondition states that with probability $p$ the rank decreases by at least 1. That means that $\bignd_{k=0}^N \varphi_k$ must hold with probability $1-p$, meaning that the rank can \emph{increase} too, as long as that increase is not too likely.

In order to establish a minimum termination probability, we note that starting at $\varphi_n$, it takes at most $N$ steps to reach rank 0, therefore the loop terminates with probability at least $p^N$, which is greater than 0 since $p>0$ and $N$ is finite. Putting this all together, we get the following rule.
\[
\ruledef{Bounded Variant}{
  \varphi_0 \Rightarrow \sure{\lnot e}
  \quad
  \forall n \in \{1,\ldots, N\}.
  \;\;
  \varphi_n \Rightarrow \sure{e}
  \quad
  \triple{\varphi_n}C{\textstyle(\bignd_{k=0}^{n-1} \varphi_k) \oplus_p (\bignd_{k=0}^N \varphi_k)}
}{
  \triple{\textstyle\bignd_{k=0}^N \varphi_k}{\whl eC}{\varphi_0}
}
\]
As a special case of the \ruleref{Bounded Variant} rule, we can derive the variant rule of \citet[Lemma 7.5.1]{mciver2005abstraction}. Instead of recording the rank with a family of outcome assertions, we will instead use an integer-valued expression $e_{\mathsf{rank}}$. This can be thought of as a ranking function, since $\de{e_\mathsf{rank}} \colon \Sigma\to\mathbb{Z}$ gives us a rank for each state $\sigma\in\Sigma$. In addition, the propositional invariant $P$ describes the reachable states after each iteration of the loop.
Finally, as long as the invariant holds and the loop guard is true, $e_\mathsf{rank}$ must be bounded between $\ell$ and $h$, in other words $\sure{P\land e}\Rightarrow \sure{\ell\le e_\mathsf{rank}\le h}$. The premise of the rule is that each iteration of the loop must strictly decrease the rank with probability at least $p > 0$. That is, for any $n$:
\[
  \triple{\sure{P \land e \land e_\mathsf{rank} = n}}C{\sure{P \land e_\mathsf{rank} < n} \oplus_p \sure{P}}
\]
Given that the rank is integer-valued and strictly decreasing, it must fall below the lower bound $\ell$ within at most $h-\ell + 1$ steps, at which point $e$ becomes false since $\sure{P\land e} \Rightarrow \sure{\ell \le e_\mathsf{rank}}$. So the loop terminates with probability at least $p^{h-\ell + 1}$, and so by the \ruleref{Zero-One} law, it almost surely terminates.
In \Aref{app:variants}, we show how this rule is derived from \ruleref{Bounded Variant} by letting $\varphi_0 \triangleq \sure{P\land\lnot e}$ and $\varphi_n \triangleq \sure{P\land e\land e_\mathsf{rank} = \ell + n - 1}$ for $1 \le n\le h - \ell + 1$.
The full rule is shown below:
\[
\ruledef{Bounded Rank}{
  \sure{P\land e}\Rightarrow \sure{\ell\le e_\mathsf{rank}\le h}
  \;\;\;
  \forall n.
  \;
  \triple{\sure{P \land e \land e_\mathsf{rank} = n}}C{\sure{P \land e_\mathsf{rank} < n} \oplus_p \sure{P}}
}{
  \triple{\sure P}{\whl eC}{\sure{P\land\lnot e}}
}
\]
As an example application of the rule, recall the following resetting random walk, where the agent moves left with probability $\frac12$, otherwise it resets to a position chosen by an adversary.
\[
\begin{array}{l}
  \code{while}~ x > 0 ~\code{do} \\
  \quad (x \coloneqq x-1) \oplus_{\frac12} (x\gets \{1, \ldots, 5\})
\end{array}
\]
When starting in a state where $0 \le x \le 5$, this program almost surely terminates. Although there are uncountably many nonterminating traces, the probability of nontermination is zero. Even in the worst case in which the adversary always chooses 5, the agent eventually moves left in five consecutive iterations with probability 1. In fact, the program terminates in finite expected time, as it is a Bernoulli process. Using the invariant $P \triangleq 0\le x \le 5$, the ranking function $e_\mathsf{rank} = x$, and the probability $p = \frac12$, the premise of \ruleref{Bounded Rank} is simply the following triple, which is easy to prove using the \ruleref{Prob} and \ruleref{Assign} rules.
\[
  \triple{\sure{0 < x \le 5 \land x = n}}{(x \coloneqq x-1) \oplus_{\frac12} (x\gets \{1, \ldots, 5\})}{\sure{0 \le x < n} \oplus_\frac12 \sure{0 \le x \le 5}}
\]
\citet[Lemma 7.6.1]{mciver2005abstraction} showed that \ruleref{Bounded Rank} is complete for proving almost sure termination if the state space is finite. While we did not assume a finite state space for our language, this result nonetheless shows that the rule is broadly applicable. In addition, our new \ruleref{Bounded Variant} rule is more expressive, as it allows the invariants to have multiple outcomes. We will see an example of how this is useful in \Cref{sec:von-neumann}.

More sophisticated rules are possible in which the rank need not be bounded. One such rule is shown below, based on that of \citet{mciver2018new}. Instead of bounding the rank, we now require that the \emph{expected} rank decreases each iteration, which is guaranteed in our rule by bounding the amount that it can increase in the case that an increase occurs.
\[
\ruledef{Progressing Rank}{
  \triple{\sure{P\land e \land e_{\mathsf{rank}}=k}}C{
     \sure{P\land e_{\mathsf{rank}} \le k-d} \oplus_{p} \sure{P\land e_{\mathsf{rank}} \le k + {\textstyle\frac{p}{1-p}}d}
   }
}{
  \triple{\sure P}{\whl eC}{\sure{P\land\lnot e}}
}
\]
We can use this rule to prove almost sure termination of the following demonically fair random walk, in which the agent steps towards the origin with probability $\frac12$, otherwise an adversary can choose whether or not the adversary steps away from the origin.
\[
\begin{array}{l}
\code{while}~ x>0 ~\code{do} \\
\quad x\coloneqq x-1 \oplus_{\frac12} (x \coloneqq x+1 \nd \skp)
\end{array}
\]
We instantiate \ruleref{Progressing Rank} with $P\triangleq x \ge 0$, $e_\mathsf{rank} \triangleq x$, $p \triangleq \frac12$, and $d \triangleq 1$ to get the following premise, which is easily proven using \ruleref{Prob}, \ruleref{Nondet}, \ruleref{Assign}, and basic propositional reasoning.
\[
  \triple{\sure{x = k > 0}}{x\coloneqq x-1 \oplus_{\frac12} (x \coloneqq x+1 \nd \skp)}{\sure{0 \le x \le k - 1} \oplus_\frac12 \sure{0 \le x \le k + 1}}
\]
The full derivation for the demonic random walk and soundness proof for a more general version of the \ruleref{Progressing Rank} rule---where $p$ and $d$ do not have to be constants---appear in \Aref{app:new-rule}.
Compared to the original version of this rule due to \citet{mciver2018new}, our rule consolidates three premises into just one.

\section{Case Studies}
\label{sec:examples}

In this section, we present three case studies in how our logic can be used to analyze programs that contain both probabilistic and nondeterministic operations.

\subsection{The Monty Hall Problem}
\label{sec:monty}

The Monty Hall problem is a classic paradox in probability theory in which a game show contestant tries to win a car by guessing which door it is behind. The player has three initial choices; the car is behind one door and the other two contain goats. After choosing a door, the host reveals one of the goats among the unopened doors and the player chooses to \emph{stick} with the original door or \emph{switch}---\emph{which strategy is better?}

We model this problem with the $\progname{Game}$ program on the left side of \Cref{fig:monty-hall}. First, the car is randomly placed behind a door. Next, the player chooses a door. Without loss of generality, we say that the player always chooses door 1. We could have instead universally quantified the choice of the player to indicate that the claim holds for any deterministic strategy. In that case, the proof would be largely the same, although with added cases so that the host does not open the player's door; we instead fix the player's choice to be door 1 for simplicity.
Finally, the host {\em nondeterministically} chooses a door to open, which is neither the player's pick, nor the car. 
\begin{figure}
\[
\begin{array}{ll}
\progname{Game}\triangleq\left\{
\begin{array}{l}
\mathsf{car} \coloneqq 1 \oplus_{\frac13} ( \mathsf{car} \coloneqq 2 \oplus_{\frac12} \mathsf{car} \coloneqq 3) \fse \\
\mathsf{pick} \coloneqq 1 \fse \\
\code{if} ~ \mathsf{car} = 1 ~ \code{then} \\
\quad \mathsf{open} \gets \{ 2, 3\} \\
\code{else if} ~ \mathsf{car} = 2 ~ \code{then} \\
\quad \mathsf{open} \coloneqq 3 \\
\code{else} \\
\quad \mathsf{open} \coloneqq 2
\end{array}
\right.
&
\quad
\progname{Switch} \triangleq\left\{
\begin{array}{l}
\code{if} ~ \mathsf{open} = 2 ~ \code{then} \\
\quad \mathsf{pick} \coloneqq 3 \\
\code{else} \\
\quad \mathsf{pick} \coloneqq 2
\end{array}
\right.
\end{array}
\]
\caption{Left: the Monty Hall program. Right: additional program to switch doors}
\label{fig:monty-hall}
\end{figure}
We now use Demonic OL to determine the probability of winning (that is, $\mathsf{pick} = \mathsf{car}$) in both the stick strategy ($\progname{Game}$) and the switch strategy ($\progname{Game}\fatsemi\progname{Switch}$, where  $\progname{Switch}$ is the program representing the player switching doors, presented on the right of \Cref{fig:monty-hall}). 
We derive the following triple for the $\progname{Game}$ program:
\[
\begin{array}{l}
\ob{\sure\tru}
\\
\quad\progname{Game}
\\
\ob{
  (\sure{\var{car} = 1} \land (\sure{\var{open} = 2} \nd\sure{\var{open} = 3})) \oplus_{\frac13} (\sure{\var{car} = 2 \land \var{open} = 3} \oplus_{\frac12} \sure{\var{car} = 3 \land \var{open} = 2})
}
\end{array}
\]
The derivation, using the rules from \Cref{fig:rules}, is shown in \Cref{fig:game-prog}. Note that to analyze the if statement, we first use the rule of \ruleref{Constancy} with $\var{pick}=1$ and then de-structure the remaining assertion with two applications of \ruleref{Prob Split}.
Below, we show manipulation of the postcondition of \progname{Game} to give us the probability of winning using the \emph{stick} strategy.
Irrelevant information about the opened door is first removed. Next, since $\var{pick} = 1$ in all cases, we weaken $\var{car}=1$ to $\var{pick} = \mathsf{car}$, and use $\var{pick} \neq \var{car}$ in the other outcomes. Finally, we use idempotence of $\oplus_\frac12$ in the last step.
\begin{align*}
& \sure{\var{pick} = 1} \land ((\sure{\var{car} = 1} \land (\sure{\var{open} = 2} \nd\sure{\var{open} = 3}))
\\ & \qquad \phantom{x}\oplus_{\frac13} (\sure{\var{car} = 2 \land \var{open} = 3} \oplus_{\frac12} \sure{\var{car} = 3 \land \var{open} = 2}))
\\
& \quad\implies \sure{\var{pick}= 1} \land (\sure{\var{car}=1} \oplus_{\frac13} (\sure{\var{car}=2} \oplus_{\frac12} \sure{\var{car}=3}))
\\
& \quad\implies \sure{\var{pick} = \var{car}} \oplus_{\frac13} (\sure{\var{pick} \neq\var{car}} \oplus_{\frac12} \sure{\var{pick} \neq\var{car}})
\\
& \quad\implies \sure{\var{pick} = \var{car}} \oplus_{\frac13} \sure{\var{pick} \neq\var{car}}
\end{align*}
So, the player wins with probability $\frac13$ in the stick strategy.
Now, for the \emph{switch} strategy, we can compositionally reason by appending the $\progname{Switch}$ program to the end of the previous derivation and then continue using the derivation rules. We again must de-structure to analyze the if statement, this time using both \ruleref{Prob Split} and also \ruleref{ND Split}.
\[
\begin{array}{l}
\ob{\sure\tru} \\
\;\;\progname{Game} \fse\\
\ob{(\sure{\var{car} = 1} \land (\sure{\var{open} = 2} \nd \sure{\var{open} = 3})) \oplus_{\frac13} (\sure{\var{car} = 2 \land \var{open} = 3} \oplus_{\frac12} \sure{\var{car} = 3 \land \var{open} = 2)}}
\\
\;\;\iftf{\mathsf{open} = 2}{\mathsf{pick} \coloneqq 3}{\mathsf{pick} \coloneqq 2}
\\
\ob{(\sure{\var{car} = 1} \land (\sure{\var{pick} = 3} \nd \sure{\var{pick} = 2})) \oplus_{\frac13} (\sure{\var{car} = 2 \land \var{pick} = 2} \oplus_{\frac12} \sure{\var{car} = 3 \land \var{pick} = 3})}
\\
\ob{\sure{\var{pick} \neq \var{car}} \oplus_{\frac13} (\sure{\var{pick} = \var{car}}\oplus_{\frac12} \sure{\var{pick} = \var{car}})}
\\
\ob{\sure{\var{pick} \neq \var{car}} \oplus_{\frac13} \sure{\var{pick} = \var{car}}}
\end{array}
\]
Note that the last two lines of the derivation are obtained by weakening the postcondition with the rule of \ruleref{Consequence}.
Just like in the previous case, we weaken the postcondition to only assert whether $\var{pick} = \var{car}$ or $\var{pick}\neq\var{car}$, and then collapse two outcomes using idempotence of $\oplus_\frac12$. This time the player wins with probability $\frac23$, meaning that switching doors is the better strategy.

\begin{figure}
\[
\begin{array}{l}
\ob{\sure\tru} \\
\;\;\mathsf{car} \coloneqq 1 \oplus_{\frac13} ( \mathsf{car} \coloneqq 2 \oplus_{\frac12} \mathsf{car} \coloneqq 3) \fse \\
\ob{\sure{\var{car} = 1} \oplus_{\frac13} (\sure{\var{car} = 2} \oplus_{\frac12} \sure{\var{car} = 3})} \\
\;\;\mathsf{pick} \coloneqq 1 \fse \\
\ob{\sure{\var{pick} = 1} \land (\sure{\var{car} = 1} \oplus_{\frac13} (\sure{\var{car} = 2} \oplus_{\frac12} \sure{\var{car} = 3}))} \\
\;\;\code{if} ~ \mathsf{car} = 1 ~ \code{then} \\
\quad\ob{\sure{\var{car} = 1}} \\
\quad\;\; \mathsf{open} \gets \{ 2, 3\} \\
\quad\ob{\sure{\var{car} = 1} \land (\sure{\var{open} = 2} \nd \sure{\var{open} = 3})} \\
\;\;\code{else if} ~ \mathsf{car} = 2 ~ \code{then} \\
\quad\ob{\sure{\var{car} = 2}} \\
\quad\;\; \mathsf{open} \coloneqq 3 \\
\quad\ob{\sure{\var{car} = 2 \land \var{open} = 3}} \\
\;\;\code{else} \\
\quad\ob{\sure{\var{car} = 3}} \\
\quad\;\; \mathsf{open} \coloneqq 2 \\
\quad\ob{\sure{\var{car} = 3 \land \var{open} = 2}} \\
\lrob{\sure{\var{pick} = 1} \land \left(
  \begin{array}{l}
    (\sure{\var{car} = 1} \land (\sure{\var{open} = 2} \nd \sure{\var{open} = 3}))
    \\
    \quad \oplus_{\frac13} (\sure{\var{car} = 2 \land \var{open} = 3} \oplus_{\frac12} \sure{\var{car} = 3 \land \var{open} = 2})
  \end{array}
  \right)}
\end{array}
\]
\caption{Derivation for the $\progname{Game}$ program from \Cref{fig:monty-hall}.}
\label{fig:game-prog}
\end{figure}

\subsection{The Adversarial von Neumann Trick}
\label{sec:von-neumann}

The \citet{neumann1951various} trick is a protocol for simulating a fair coin using a coin of unknown bias $p$. To do so, the coin is flipped twice. If the outcome is heads, tails---occurring with probability $p(1-p)$---then we consider the result to be heads. If the outcome is tails, heads---which also occurs with probability $p(1-p)$---then we consider to result to be tails. Otherwise, we try again.

In this case study, we work with an {\em adversarial} version of the von Neumann trick in which an adversary can alter the bias of the coin on each round, as long as the bias is between $\varepsilon$ and $1-\varepsilon$ for some fixed $0< \varepsilon \le\frac12$. We will show that just like in the original von Neumann trick, and  somewhat surprisingly, the simulated coin is fair in the presence of an adversarial bias. To model this protocol, we let the set $[\varepsilon, 1-\varepsilon]_N$ be a finite subset of the interval of rational numbers $[\varepsilon, 1-\varepsilon] \subseteq \mathbb{Q}$, formally defined as $[\varepsilon, 1-\varepsilon]_N \triangleq \{ \varepsilon + \frac{k(1 - 2\varepsilon)}{N} \mid k = 0\ldots N \}$. The program is shown below.
\[
\progname{AdvVonNeumann} \triangleq \left\{\begin{array}{l}
x \coloneqq \fls \fatsemi y \coloneqq \fls \fse \\
\code{while} ~x = y~ \code{do} \\
\quad p \gets[\varepsilon, 1-\varepsilon]_N \fse \\
\quad x \coloneqq \mathsf{flip}(p) \fse \\
\quad y \coloneqq \mathsf{flip}(p)
\end{array}\right.
\]
So, the program will terminate once $x \neq y$, meaning that one heads and one tails were flipped. We wish to prove that this program almost surely terminates, and that $x=\tru$ and $x=\fls$ occur with equal probability, meaning that we have successfully modeled a fair coin. More formally, we will prove that $\ob{\sure\tru} \ \progname{AdvVonNeumann} \ 
\ob{\sure{x=\tru} \oplus_{\frac12} \sure{x=\fls}}
$. 
We will use the \ruleref{Bounded Variant} rule to analyze the main loop, with the following variants.
\[
  \varphi_0 \triangleq (\sure{x = \tru} \oplus_\frac12 \sure{x = \fls})\land \sure{x \neq y}
  \qquad\qquad
  \varphi_1 \triangleq \sure{x = y}
\]
The variant $\varphi_1$ with the higher rank simply states that $x=y$, meaning that the loop will continue to execute. The lower-ranked variant $\varphi_0$ states both that $x\neq y$---the loop will terminate---and that $x=\tru$ and $x = \fls$ both occur with probability $\frac12$. This is an example of a variant with multiple outcomes that is not supported in pre-expectation reasoning, as mentioned in \Cref{sec:rank}.

\begin{figure}
\[
\begin{array}{l|l}
\begin{array}{l}
\ob{\sure\tru} \\
\;\;x \coloneqq \fls \\
\ob{\sure{\lnot x}} \\
\;\;y \coloneqq \fls \\
\ob{\sure{\lnot x \land \lnot y}} \implies \\
\ob{\sure{x = y}} \\
\;\;\whl{x=y}{} \\
\quad\;\; p \gets [\varepsilon, 1-\varepsilon]_N \fse \\
\quad\;\; x \coloneqq \mathsf{flip}(p) \fse \\
\quad\;\; y \coloneqq \mathsf{flip}(p) \\
\ob{\sure{x=\tru} \oplus_{\frac12} \sure{x=\fls}}
\end{array}
&
\def\arraystretch{1.1}
\begin{array}{l}
\ob{\sure{x = y}} \\
\;\; p \gets [\varepsilon, 1-\varepsilon]_N \fse \\
\ob{\bignd_{q\in [\varepsilon, 1-\varepsilon]_N}  \sure{p=q}} \\
\;\; x \coloneqq \mathsf{flip}(p) \fse \\
\ob{\bignd_{q\in [\varepsilon, 1-\varepsilon]_N}  \sure{p=q} \land (\sure{x=\tru} \oplus_q \sure{x=\fls})} \\
\;\; y \coloneqq \mathsf{flip}(p) \\
\lrob{
  \bignd_{q\in [\varepsilon, 1-\varepsilon]_N}
  \begin{array}{l}
    (\sure{x=\tru} \land (\sure{x= y} \oplus_q \sure{x\neq y})) \oplus_q \phantom{x}\\
     (\sure{x=\fls} \land (\sure{x\neq y} \oplus_q \sure{x= y}))
    \end{array}
} \\
\ob{\bignd_{q\in[\varepsilon, 1-\varepsilon]_N}
    \varphi_0 \oplus_{2q(1-q)} \varphi_1
}
\\
\ob{\varphi_0 \oplus_{2\varepsilon(1-\varepsilon)} (\varphi_0\nd\varphi_1)}
\end{array}
\end{array}
\]
\caption{Derivation of the von Neumann trick program.}
\label{fig:von-neumann}
\end{figure}

Each execution of the loop body will reduce the rank of the variant from 1 to 0 with probability $2p(1-p)$, where $p$ is chosen by the adversary. The \emph{worst case} is that the adversary chooses either $p = \varepsilon$ or $p = 1-\varepsilon$, in which case the probability of terminating the loop is $2\varepsilon(1-\varepsilon)$.
Given that there are only two variants, the \ruleref{Bounded Variant} rule simplifies to:
\[
\inferrule{
  \triple{\varphi_1}{C}{\varphi_0 \oplus_{2\varepsilon(1-\varepsilon)} (\varphi_0 \nd \varphi_1)}
}{
  \triple{\varphi_1}{\whl eC}{\varphi_0}
}
\]
The main derivation is shown on the left of \Cref{fig:von-neumann}, and the premise of the \ruleref{Bounded Variant} rule is on the right. After the two flips, all four probabilistic outcomes are enumerated. This is simplified using the associativity and commutativity rules from \Cref{sec:assertions} to conclude that $\varphi_0 \oplus_{2q(1-q)} \varphi_1$ for each $q$. As mentioned before, since we know that $2q(1-q) \ge 2\varepsilon(1-\varepsilon)$, we can weaken this to be $\varphi_0 \oplus_{2\varepsilon(1-\varepsilon)} (\varphi_0\nd\varphi_1)$. Now, since the assertion no longer depends on $q$, we use idempotence to remove the outer $\nd$. In the end, we get the postcondition $\sure{x=\tru}\oplus_\frac12 \sure{x=\fls}$, as desired.

\subsection{Probabilistic SAT Solving by Partial Rejection Sampling}
\label{sec:sat-solving}

Rejection sampling is a standard technique for generating random samples from certain distributions.
A basic version of rejection sampling can be used when a program has a way to generate random samples uniformly from a set $X$, and needs to generate uniform random samples from a set $S$, where $S \subseteq X$.
To do so, a simple rejection sampling procedure will draw a sample $x$ from $X$ and then check whether $x \in S$.
If $x \in S$, the rejection sampler is said to \emph{accept} $x$, and returns it.
However, if $x \notin S$, the sampler is said to \emph{reject} $x$, and repeats the process with a fresh sample from $X$.

In some situations, the set $X$ is a product of sets $X_1 \times \cdots \times X_n$, and a sample $x = (x_1, \dots, x_n)$ from $X$ is generated by independently drawing samples $x_1, \dots, x_n$, where each $x_i \in X_i$.
In this case, when $x$ is rejected, rather than redrawing \emph{all} of the $x_i$ to form a new sample from $X$, one might consider instead trying to \emph{partially} resample the components of $x$.
In particular if $x$ is \emph{close} to being in $S$, then one might try to only redraw some subset of components $x_{j_1}, \dots, x_{j_k}$, and re-use the other components of $x$ to form a new sample $x'$ to test for membership in $S$.

In general, partial resampling can result in drawing samples that are not \emph{uniformly} distributed over the set $S$.
However, \citet{guo2019uniform} observed that under certain conditions on the set $S$ and $X$, a partial rejection sampling procedure \emph{does} generate uniform samples from $S$.
In particular, when the $x_i$ are boolean variables, and the test for $(x_1, \dots, x_n) \in S$ can be encoded as a boolean formula $\phi$ over these variables, then it suffices for $\phi$ to be a so-called \emph{extremal} formula.
\citet{guo2019uniform} showed that many algorithms for sampling combinatorial structures can be formulated in terms of sampling a satisfying assignment to an extremal formula.
In this example, we consider a partial rejection sampler for generating a random satisfying assignment for a formula in 3-CNF form. We will prove that the sampler almost surely terminates if the formula has a satisfying assignment\footnote{Since this termination property holds even if the formula does not satisfy the extremal property, we will not formally define the extremal property or assume it as a precondition.}.

\begin{figure}
\[
\begin{array}{l|l|l}
\begin{array}{l}
\prs\triangleq \\
\;\; b \coloneqq \eval \fse \\
\;\; \code{while} ~\lnot{b}~ \code{do} \\
\;\; \quad \selclause \fse \\
\;\; \quad \resampleclause\fse \\
\;\; \quad b\coloneqq \eval \\
\\
\resampleclause\triangleq \\
\;\; \vars[\clvars[s][1]] \coloneq \flip{\tfrac12} \fse  \\
\;\; \vars[\clvars[s][2]] \coloneq \flip{\tfrac12} \fse  \\
\;\; \vars[\clvars[s][3]] \coloneq \flip{\tfrac12}
\end{array}
&
\begin{array}{l}
\selclause\triangleq \\
\;\;  s \coloneq -1 \fse \\
\;\;  i \coloneq 1  \fse \\ 
\;\;  \code{while} ~i \le M~ \code{do} \\
\;\;  \quad \code{if} ~\lnot \evalclause(i)~ \code{then} \\
\;\; \quad \quad \code{if} ~s = -1~ \code{then} \\
\;\; \quad \quad \quad s \coloneq i  \\
\;\; \quad \quad \code{else}  \\
\;\; \quad \quad \quad \skp \nd s\coloneqq i\fse \\
\;\; \quad  i \coloneq i + 1 \fse
\\\\
\end{array}
&
\begin{array}{l}
\evalclause(i)\triangleq \\
\;\;\xnor{\clsigns[i][1]}{\vars[\clvars[i][1]]} \orb\phantom{x} \\
 \;\;\xnor{\clsigns[i][2]}{\vars[\clvars[i][2]]} \orb\phantom{x} \\
 \;\;\xnor{\clsigns[i][3]}{\vars[\clvars[i][3]]}  \\
 \\\\
\eval \triangleq \\
\;\; \evalclause(1) \land\phantom{x} \\
\;\; \evalclause(2) \land\phantom{x} \\
\;\; \qquad\quad\vphantom{\int^0}\smash[t]{\vdots} \\
\;\; \evalclause(M)
\\
\end{array}
\end{array}
\]
\caption{SAT solving via rejection sampling, split into subroutines.}
\label{fig:solver-program}
\end{figure}

\Cref{fig:solver-program} shows the solver program, $\prs$, broken up into subroutines\footnote{Note that our language does not include subroutines, but these routines are interpreted as macros and are inlined into the main program. We only separate them for readability.}. 
The clauses are encoded using two 2-dimensional lists, $\clvars$ and $\clsigns$, each of size $M \times 3$, where $M$ is the number of clauses. See \Aref{app:list-sem} for the semantics of list operations. The entry $\clvars[i][j]$ gives the variable of the $j^\text{th}$ variable in clause $i$, and $\clsigns[i][j]$ is $0$ if this variable occurs in negated form, and is $1$ otherwise. The $\odot$ operation is \emph{xnor}, so $1\odot0 = 0\odot1 = 0$ and $0\odot0 = 1\odot1=1$. The program stores its current truth-value assignment for each variable in the list $\vars$.

Each iteration of the loop in $\prs$ starts by nondeterministically selecting an unsatisfied clause $s$ to resample via the $\selclause$ subroutine. To do so, it iterates over the clauses, checking if each one is satisfied using $\evalclause$. When an unsatisfied clause is found, $s$ is nondeterministically either updated to $i$ or left as is (unless $s=-1$, in which case $s$ is updated to $i$, to ensure that some unsatisfied clause is picked).
Nondeterminism allows us to under-specify \emph{how} the sampler selects a clause to resample, which in practice might be based on various heuristics. By proving almost-sure termination for this non-deterministic version, we establish almost-sure termination no matter which heuristics are used, including randomized ones.
Presuming that the formula is not yet satisfied ($\eval=\fls$), the $\selclause$ routine selects an $s$ such that $1\le s \le M$ and $\evalclause(s) = \fls$, which is captured by the following specification and proven in \Aref{app:sat-solving}.
\[
  \triple{\sure{\eval = \fls}}{\selclause}{\sure{1 \le s \le M \land \evalclause(s) = \fls}}
\]
Next, the three variables in the selected clause are resampled. In order to prove that the program almost surely terminates, we need to show that the resampling operation brings the process closer to termination with nonzero probability. To do this, we measure how close the candidate solution is to some satisfying assignment $\xgood$ (recall we assumed that at least one such satisfying assignment exists). Closeness is measured via the Hamming distance, computed as follows, where the Iverson brackets $[e]$ evaluates to 1 if $e$ is true and 0 is $e$ is false, and $N$ is the number of variables.
\[
  \mathsf{dist}(x, y) \triangleq \textstyle\sum_{i=1}^{N} \big[x[i] \neq y[i]\big]
\]
Now, we can give a specification for $\resampleclause$ in terms of the Hamming distance. That is, if $\dist(\vars, \xgood)$ is initially $k$ and clause $s$ is not satisfied, then resampling $s$ will strictly reduce the Hamming distance with probability at least $\frac18$.
The reason for this is that before resampling, $\vars$ and $\xgood$ must disagree on at least one of the variables in clause $s$, since clause $s$ is not satisfied by $\vars$.
After resampling, there is at least a $\frac18$ probability that all 3 resampled variables agree with $\xgood$, in which case the Hamming distance is reduced by at least 1. The full proof is in \Aref{app:sat-solving}.
\[
\triple{\sure{\dist(\vars, \xgood) = k \land \evalclause(s) = \fls}}{\resampleclause}{\sure{\dist(\vars, \xgood) < k} \oplus_{\frac18} \sure\tru}
\]
Using these specifications, we now prove that $\prs$ almost surely terminates. The derivation is shown in \Cref{fig:solver}. We instantiate \ruleref{Bounded Rank} to analyze the loop with the following parameters:
\[
  P \;\;\triangleq\;\; b = \eval
  \qquad\qquad
  e_\mathsf{rank} \;\;\triangleq\;\; [\lnot\eval] \cdot \dist(\vars,\xgood)
  \qquad\qquad
  p\;\;\triangleq\;\; \frac18
\]
The invariant $P$ simply states that $b$ indicates whether the current assignment of variables satisfies the formula. The ranking function is equal to the Hamming distance between $\vars$ and the sample solution $\xgood$ if the the formula is not yet satisfied, otherwise it is zero, which accounts for the fact that the program may find a solution other than $\xgood$. We also remark that $e_\mathsf{rank}$ is bounded between $1$ and $N$ (where $N$ is the total number of variables) as long as the formula is not yet satisfied. As we saw in the specification for $\resampleclause$, the probability of reducing the rank is $\frac18$.

\begin{figure}
\[
\begin{array}{l}
\ob{\sure\tru}\\
\;\; b \coloneqq \eval \fse \\
\ob{\sure{b = \eval}}\\
\;\; \code{while} ~\lnot{b}~ \code{do} \\
\quad \ob{\sure{b = \eval \land \lnot b\land [\lnot\eval]\cdot\dist(\vars,\xgood) = k}} \\
\quad \ob{\sure{\eval = \fls\land\dist(\vars,\xgood) = k}} \\
\;\; \quad \selclause \fse \\
\quad \ob{\sure{0 \le s < M \land \evalclause(s) = \fls \land \dist(\vars,\xgood) = k}} \\
\;\; \quad \resampleclause \fse \\
\quad \ob{\sure{\dist(\vars,\xgood) < k} \oplus_\frac18 \sure\tru} \\
\;\; \quad b \coloneqq \eval \\
\quad \ob{\sure{b = \eval \land \dist(\vars,\xgood) < k} \oplus_\frac18 \sure{b=\eval}} \\
\quad \ob{\sure{b = \eval \land [\lnot \eval]\cdot \dist(\vars,\xgood) < k} \oplus_\frac18 \sure{b=\eval}} \\
\ob{\sure{\eval = \tru}}
\end{array}
\]
\caption{Derivation of the $\prs$ program, where $\xgood$ is a known satisfying assignment.}
\label{fig:solver}
\end{figure}

Entering the loop, we see that $\eval$ must be false, so the rank is just $\dist(\vars,\xgood)$. Applying the specifications for the two subroutines, we prove that the Hamming distance strictly decreases with probability at least $p$. The assignment to $b$ then reestablishes the invariant. When the Hamming distance has decreased, we also have that $e_\mathsf{rank}$ decreased, as multiplying by $[\lnot\eval]$ can only make the term smaller. Upon exiting the loop, we have that $b = \tru$ and hence the final postcondition $\eval = \tru$, meaning that the formula is satisfied and the program almost surely terminates.

\citet{error_credits} prove termination of a similar randomized SAT solving technique using a separation logic for reasoning about upper bounds on probabilities of non-termination. Because the language they consider does not have non-determinism, they fix a particular strategy for selecting clauses to resample, whereas the use of nondeterministic choice in the proof above implies termination for any strategy that selects an unsatisfied clause.
Their proof essentially shows that for any $\varepsilon > 0$, after some number of iterations, the Hamming distance will decrease with probability at least $1 - \varepsilon$.
The \ruleref{Bounded Rank} rule effectively encapsulates this kind of reasoning in our proof.

\section{Related Work}
\label{sec:related}

\subsubsection*{Program Logics}

Demonic Outcome Logic takes inspiration from program logics for reasoning about purely probabilistic programs, such as Probabilistic Hoare Logic \cite{hartog,corin2006probabilistic,den_hartog1999verifying}, VPHL \cite{rand2015vphl}, Ellora \cite{ellora}, and Outcome Logic \cite{outcome,zilberstein2024outcome,zilberstein2024relatively}.
Those logics provide means to prove properties about the distributions of outcomes in probabilistic programs, to which we added the ability to also reason about demonic nondeterminism. 

Although this paper introduces the first logic for reasoning about the \emph{outcomes} in demonic probabilistic programs, there is some prior work on other styles of analysis.
Building on the work of \citet{varacca2002powerdomain,varacca2003probability}, Polaris is a relational separation logic for reasoning about concurrent probabilistic programs \cite{tassarotti2018verifying,polaris}. Specifications take the form of refinements, where a complex program is shown to behave equivalently to an idealized version. Probabilistic analysis can then be done on the idealized program to determine its expected behavior, but it is external to the program logic. Polaris also does not support unbounded looping, and therefore it cannot be used to analyze our last two case studies.

\subsubsection*{Weakest Pre-Expectations}
Weakest pre-expectation ($\mathsf{wp}$) transformers are calculi for reasoning about probabilistic programs in terms of expected values \cite{wpe}. They were inspired by propositional weakest precondition calculi \cite{gcl,Dijkstra76}, Probabilistic Propositional Dynamic Logic \cite{ppdl}, and probabilistic predicate transformers \cite{jones}. Refer to \citet{kaminski} for a thorough overview of this technique.

From the start, $\wpre$ supported nondeterminism; in fact, $\wpre$ emerged from a line of work on semantics for randomized nondeterministic programs \cite{jifeng1997probabilistic,mciver2001partial,morgan1996refinement}. Nondeterminism is handled by lower-bounding expectations, corresponding to larger expected values being \emph{better}. An \emph{angelic} variant can alternatively be used for upper bounds.

Work on $\wpre$ has intersected with termination analysis for probabilistic programs. Some of this work uses martingales \cite{chakarov2013probabilistic} to show that programs terminate with finite expected running time. More sophisticated techniques exist for almost sure termination too \cite{kaminski,mciver2005abstraction,mciver2018new}.

As noted by \citet[\S 2.3.3]{kaminski}, the choice of either upper or lower bounding the expected values is \emph{``extremal''}---it forces a view where expectations must be either maximized or minimized, as opposed to our approach where multiple outcomes can be represented in one specification. However, reasoning about outcomes and expectations are not mutually exclusive; \citet[Theorem 1]{ellora} showed how to embed a $\wpre$ calculus in a probabilistic program logic. A similar construction is possible in Demonic Outcome Logic.

\subsubsection*{Powerdomains for Probabilistic Nondeterminism}

Powerdomains are a well-studied domain-theoretic tool for reasoning about looping nondeterministic programs, providing a means for defining a continuous domain in which loops can be interpreted as fixed points. This revolves around defining appropriate orders over sets of states to show that iterated actions eventually converge.
Given a partially ordered domain of program states $\langle \Sigma, \le\rangle$, there are three typical choices for orders over sets of states, known as the Hoare, \citet{smyth1978power}, and Egli-Milner orders, defined below:
\[
\arraycolsep=.25em
\begin{array}{lclcl}
S &\sqsubseteq_{\mathsf{H}}& T & \qquad\text{iff}\qquad &
  \forall \sigma\in S.\quad
   \exists \tau\in T.\quad
    \sigma \le \tau
\\
S &\sqsubseteq_{\mathsf{S}} &T & \text{iff} & \forall \tau\in T.\quad
   \exists \sigma\in S.\quad
    \sigma \le \tau
\\
S &\sqsubseteq_{\mathsf{EM}} &T & \text{iff} & S \sqsubseteq_{\mathsf{H}} T \quad\text{and}\quad S \sqsubseteq_{\mathsf{S}} T
\end{array}
\]
In general, none of these relations are antisymmetric, making them \emph{preorders}, whereas domain theoretic tools for finding fixed points operate on  \emph{partial orders}. So the sets representing the program semantics must be closed in order to obtain a proper domain. This closure operation loses precision of the semantics, incorporating additional possibilities which are not always intuitive.

The Hoare order requires a down-closure, essentially meaning that nontermination may always be an option. This makes it a good choice for partial correctness, where we only wish to determine what happens \emph{if} the program terminates, as in Hoare Logic.
The \citet{smyth1978power} order, which we use in this paper, requires an upwards closure, so that nontermination becomes erratic behavior. This makes it a good choice for total correctness \cite{manna1974axiomatic}, which is concerned only with terminating programs where erratic behavior does not arise.

In the Egli-Milner case---and the associated \citeauthor{plotkin1976powerdomain} Powerdomain [\citeyear{plotkin1976powerdomain}]---the more precise, but also less intuitive Egli-Milner closure is used.
\citet{mciver2001partial} created a denotational model where fixed points are taken with respect to the Egli-Milner order rather than the \citet{smyth1978power} one.
As such, they require the domain of computation to be Egli-Milner closed, which means that $S = \mathord\uparrow S \cap \mathord\downarrow S$. Unlike up-closedness (required for the Smyth approach), which is preserved by all the operations in \Cref{fig:semantics}, the semantics of \citet{mciver2001partial} must take the Egli-Milner closure after nondeterministic and probabilistic choice and after sequential composition, making the model more complex and adding outcomes that do not have an obvious operational meaning. Refer to \citet{tix2009semantic,keimel2017mixed} for a more complete exploration of that approach.

Let us examine the semantics of a coin flip in order to demonstrate why the Smyth order is preferable to Hoare. The variable $x$ is assigned the values $\tru$ or $\fls$ each with probability $\frac12$. So, the result of running the program is a singleton set containing the aforementioned distribution.
\[
\de{x \coloneqq \mathsf{flip}\left({\textstyle\frac12}\right)}(\sigma) = \left\{\;\;
  \arraycolsep=0em
  \def\arraystretch{1.25}
  \scriptsize
  \begin{array}{llcl}
    \sigma[ x \coloneqq \tru &] & \quad\mapsto\quad & \frac12
    \\
    \sigma[ x \coloneqq \fls &] & \mapsto & \frac12
  \end{array}\;\;
\right\}
\]
If we were to use the Hoare powerdomain, then we would need to down-close this set, adding all smaller distributions too. This not only means that nontermination is possible, but we would not even be able to determine that $x=\tru$ and $x=\fls$ occur with equal probability.
\[
\de{x \coloneqq \mathsf{flip}\left({\textstyle\frac12}\right)}(\sigma) = \left\{
  \arraycolsep=0em
  \;\;
  \scriptsize
  \begin{array}{llcl}
    \sigma[ x \coloneqq \tru&] & \quad\mapsto\quad & p
    \\
    \sigma[ x \coloneqq \fls&] & \mapsto & q
    \\
    \bot && \mapsto & 1-p-q
  \end{array}
  \;\;
  \middle|
  \quad
  p \le \frac12, ~q \le \frac12
  \quad
\right\}
\]
This was the approach taken by \citet{varacca2002powerdomain}, and the loss of precision is reflected in the adequacy theorems of that work. In particular, \citeauthor{varacca2002powerdomain}'s Proposition 6.10 shows that the denotational model includes outcomes that may not be possible according to the associated operational model.
By contrast, the up-closure---required by the Smyth order---adds nothing for this program; the semantics is already a full distribution and therefore there are no distributions larger than it. We can therefore conclude that the two outcomes occur with probability exactly $\frac12$, as desired.

This example demonstrates that the notion of partial correctness (as embodied by the Hoare order) does not make much sense in probabilistic settings, since it translates to uncertainty about the minimum probability of an event. Total correctness, on the other hand, does make sense, and corresponds to the notion of \emph{almost sure termination}, which is a property of great interest in probabilistic program analysis \cite{mciver2018new,chakarov2013probabilistic}.

The problem with the Smyth order is that a semantics based on it is not \citet{scott1972continuous} continuous in the presence of unbounded nondeterminism \cite{apt1986countable,s_ondergaard1992non}. This is the reason why \citet{jifeng1997probabilistic} instead use the Knaster-Tarski theorem to guarantee the fixed point existence via transfinite iteration, which only requires monotonicity and not Scott continuity. The main shortcoming of \citeauthor{jifeng1997probabilistic}'s approach is that it did not guarantee non-emptiness of the set of result distributions, meaning that some programs may have vacuous semantics.

To address this, \citet{wpe} added the additional requirement that domain only include topologically closed sets (\citeauthor{wpe} called this property \emph{Cauchy closure}). 
As we mentioned in \Cref{sec:fixpoint} and proved in \Aref{app:fixpoint}, closure ensures that no programs are modeled as empty sets, but it also prevents commands from exhibiting unbounded nondeterminism. For example, it is not possible to represent a program $x \coloneqq\bigstar$, which nondeterministically selects a value for $x$ from the natural numbers---the set $\mathbb N$ is not closed since it does not contain a limit point.

\citet{mciver2005abstraction} suggested that topological closure opens up the possibility of Scott continuity.
In addition, there has been work to combine classical powerdomains for nondeterminism \cite{plotkin1976powerdomain,smyth1978power} with the probabilistic powerdomain of \citet{jones,jones1989probabilistic}. This was first pursued by \citet{tix1999continuous}, and was later refined in \citet{tix2000convex,tix2009semantic,keimel2017mixed}. They obtain a Scott continuous composition operation (which they call $\hat f$) via a universal property, as opposed to the direct construction of \citet{jacobs2008coalgebraic} that we use.

\subsubsection*{Monads for Probabilistic Nondeterminism}
\citet{varacca2002powerdomain,varacca2003probability} introduced powersets of \emph{indexed valuations}. An indexed valuation behaves similarly to a distribution, but the idempotence property is removed, so that $X \oplus_p X \neq X$.
As shown by \citet{varacca_winskel_2006}, a powerset of indexed valuations has a \citet{beck1969distributive} distributive law, and is therefore a monad. However, indexed valuations are difficult to work with since equivalence is taken modulo renaming of the indices.

\citet[Theorem 6.5]{varacca2002powerdomain} proved that denotational models based on indexed valuations are equivalent to operational models in which a \emph{deterministic} scheduler resolves the nondeterminism. Given our goals of modeling \emph{adversarial} nondeterminism, we opted to use convex sets, which model a more powerful probabilistic scheduler, giving us robust guarantees in a stronger threat model.

An alternative approach is to flip the order of composition and work instead with distributions of nondeterministic outcomes. While the distribution monad does not compose with powerset, it does compose with multiset, as shown by \citet{jacobs2021multisets} and further explored by \citet{kozen2024multisets}.
The barrier to this approach is that the multisets must be finite, but it is easy to construct programs that reach infinitely many nondeterministic outcomes via while loops. So this model cannot be used to represent arbitrary programs from the language in \Cref{sec:semantics}.
The use of multiset instead of powerset is again an instance of removing an idempotence law, this time for nondeterminism: $X \nd X \neq X$. Indeed, idempotence is the key reason why no distributive law exists in both cases \cite{parlant2020monad,zwart2019no,zwart2020non}.

\subsubsection*{Other Semantic Approaches}
\citet{segala1995modeling} created a model in which a tree of alternating probabilistic and nondeterministic choices is collapsed into a set of distributions collected from all combinations of nondeterministic choices. However, this model does not lead to a compositional semantics.
Additional operational models of probabilistic nondeterminism have been studied through the lens of process algebras \cite{hartog,hartog1999mixing,den_hartog1998comparative,mislove2004axioms}.
In addition, coalgebraic methods have been used to define trace semantics and establish bisimilarity of randomized nondeterministic automata \cite{jacobs2008coalgebraic,bonchi2019theory,bonchi2021presenting,bonchi2022theory,bonchi2021distribution}.

\citet{DBLP:journals/pacmpl/AguirreB23} note the difficulties of building denotational models that combine probabilistic and nondeterministic choice with other challenging semantic features. Instead, they start with an operational semantics for probabilistic and nondeterministic choice and then construct a step-indexed logical relations model for a typed, higher-order language with polymorphism and recursive types. Using this logical relations model, they derive an equational theory for contextual equivalence and show that it validates many of the equations found in denotational models.

\section{Conclusion}
\label{sec:conclusion}

This paper introduced \emph{Demonic Outcome Logic}, a logic for outcome based reasoning about programs that are both randomized and nondeterministic, a combination that presents many challenges for program semantics and analysis. The logic includes several novel features, such as equational laws for manipulating pre- and postconditions and rules for loops that both establish termination and quantify the distribution of final outcomes from a single premise. We build on a large body of work on semantics for probabilistic nondeterminism \cite{jifeng1997probabilistic,morgan1996refinement,wpe,jacobs2008coalgebraic,tix2009semantic,varacca2002powerdomain}, and also draw inspiration from Outcome Logic \cite{outcome,zilberstein2024outcome,zilberstein2024relatively} and weakest pre-expectation calculi \cite{wpe,kaminski,zhang2024quantitative}. The resulting logic contains  rules that enable effective reasoning about distributions of outcomes in randomized nondeterministic programs, as illustrated through the three presented case studies. The simplicity of the rules is enabled by a carefully chosen denotational semantics that allows us to hide the complex algebraic properties of the domain in the proof of their soundness. Compared to weakest pre-expectation reasoning, the propositional approach afforded by Demonic Outcome Logic enables reasoning about multiple outcomes in tandem, leading to more expressive specifications, and the loop rules rely on fewer, simpler premises. 

Moving forward, we want to go beyond standard nondeterminism and extend the logic for reasoning about probabilistic fine-grain concurrency with shared memory. This will require fundamental changes to the denotational semantics and inference rules, although prior work on Concurrent Separation Logic \cite{csl}, Outcome Separation Logic \cite{zilberstein2024outcome}, and Concurrent Kleene Algebra~\cite{hoare2011concurrent} will provide a good source of inspiration.
Using the resulting logic, we will verify concurrent algorithms such as distributed cryptographic protocols, for which state of the art techniques use limited models of concurrency and operate by establishing observational equivalence and then separately proving properties of an idealized program \cite{gancher2023core}. By contrast, we plan to develop a logic based on a fine-grain concurrency model, which can prove direct specifications involving probabilistic outcomes. We also plan to explore a mechanized implementation of the logic, building on existing frameworks for (concurrent) separation logic such as Iris \cite{iris1}.

\section*{Acknowledgments}

This work was partially supported by ERC grant Autoprobe (no. 101002697), ARIA's Safeguarded AI programme, and NSF grant CCF-2008083. Some work on this paper was completed during a workshop at the Bellairs Research Institute of McGill University; we thank Prakash Panangaden for the invitation and the institute and their staff for providing a wonderful research environment. 

\bibliographystyle{ACM-Reference-Format}
 \bibliography{refs}
 
 \ifx\extended\undefined\else
\allowdisplaybreaks
\appendix
\clearpage

{\noindent \huge\bfseries\sffamily Appendix}

\section{Examples and Counterexamples}

\subsection{Non-Idempotence of Logical Operators}
\label{app:idempotence}

In this section, we show why the idempotence rule $\varphi\nd\varphi\Rightarrow\varphi$ does not apply when assertions model sets of distributions of states. To do so, we first need to define a new semantics for the logical operators:
\[
\begin{array}{lcllllll}
S \vDash \varphi \nd \psi
  &\text{iff}&
  S = S_1 \nd S_2
  &\text{and}&
  S_1 \vDash\varphi
  &\text{and}&
  S_2 \vDash\psi
  &\text{for some}~S_1, S_2 \in \C(\Sigma)
\\
S \vDash \varphi \oplus_p \psi
  &\text{iff}&
  S = S_1 \oplus_p S_2
  &\text{and}&
  S_1 \vDash\varphi
  &\text{and}&
  S_2 \vDash\psi
  &\text{for some}~S_1, S_2 \in \C(\Sigma)
\\
S\vDash \sure{P}
  &\text{iff}&
  \multicolumn{4}{l}{
    \bigcup_{\mu\in S}\supp(\mu) \subseteq \sem P
  }
\end{array}
\]
This is similar to how the \emph{outcome conjunction} of Outcome Logic is defined, where the collection of outcomes is split according to the same nondeterminism operation as is used in the program semantics, to satisfy the two assertions, $\varphi$ and $\psi$, individually \cite{outcome}.
Now, let us revisit the coin flip game. The semantics of the program is shown below.
\[
  \de{y \gets\Bool \fatsemi x\coloneqq\mathsf{flip}\left(\tfrac12\right)}(\sigma) = \left\{\;
  \arraycolsep=0em
    \def\arraystretch{1.25}
    \begin{array}{lll}
      \sigma[x \coloneqq \tru, &y\coloneqq\tru&] \mapsto \frac12\cdot p
      \\
      \sigma[x \coloneqq \fls,~ &y\coloneqq\tru &] \mapsto \frac12\cdot p
      \\
      \sigma[x \coloneqq \tru, &y\coloneqq\fls &] \mapsto \frac12\cdot(1-p)
      \\
      \sigma[x \coloneqq \fls, &y\coloneqq\fls &] \mapsto \frac12\cdot(1-p)
    \end{array}
    \quad\middle|\quad
    p\in[0,1]~
  \right\}
\]
Let us call this set $S$.
It is not hard to see that $S$ is the convex union of two sets that satisfy $\sure{x =y} \oplus_\frac12 \sure{x\neq y}$, so we have that:
\[
  S \vDash (\sure{x =y} \oplus_\frac12 \sure{x\neq y}) \nd (\sure{x =y} \oplus_\frac12 \sure{x\neq y})
\]
It is tempting to say that $S\vDash \sure{x=y} \oplus_\frac12 \sure{x\neq y}$, but this is not the case. To prove this, we will show that $S \neq S_1 \nd S_2$ for any $S_1$ and $S_2$ such that $S_1 \vDash \sure{x=y}$ and $S_2 \vDash \sure{x\neq y}$. If $S_1\vDash \sure{x=y}$ and $S_2\vDash \sure{x\neq y}$, then they must be of the following forms, where $T_1,T_2 \subseteq [0,1]$.
\[\footnotesize
S_1 \triangleq
\left\{\;
\arraycolsep=0em
  \begin{array}{lll}
    \sigma[x \coloneqq \tru, &y\coloneqq \tru &] \mapsto p
    \\
    \sigma[x \coloneqq \fls, &y\coloneqq \fls &] \mapsto 1-p
  \end{array}
  \;\middle|\;
  p\in T_1
\;\right\}
\qquad
S_2 \triangleq
\left\{\;
\arraycolsep=0em
  \begin{array}{lll}
    \sigma[x \coloneqq \tru, &y\coloneqq \fls &] \mapsto q
    \\
    \sigma[x \coloneqq \fls, &y\coloneqq \tru &] \mapsto 1-q
  \end{array}
  \;\middle|\;
  q\in T_2
\;\right\}
\]
Clearly, $S_1 \vDash \sure{x=y}$ and $S_2\vDash \sure{x\neq y}$, but $S \neq S_1 \nd S_2$. To see this, let us consider $S_1 \nd S_2$: 
\begin{align*}
  S_1 \nd S_2
  &= \left\{\;
\arraycolsep=0em
  \begin{array}{lll}
    \sigma[x \coloneqq \tru, &y\coloneqq \tru &] \mapsto r\cdot p
    \\
    \sigma[x \coloneqq \fls, ~&y\coloneqq \fls &] \mapsto r\cdot (1-p)
    \\
    \sigma[x \coloneqq \tru, & y\coloneqq \fls&] \mapsto (1-r) \cdot q
    \\
    \sigma[x \coloneqq \fls, & y\coloneqq \tru&] \mapsto (1-r) \cdot (1-q)
  \end{array}
  \;\middle|\;\;
  r\in[0,1], ~p \in T_1, ~q \in T_2\;\;
\right\}
\end{align*}
Whereas in $S$, the probability that $x$ is true or false given a fixed value of $y$ is always exactly $\frac12$, that is not the case in $S_1\nd S_2$. For instance, when $r = 1$ and $p$ is any arbitrary element of $T_1$, we get:
\[
  \left(
  \arraycolsep=0em
  \begin{array}{lll}
    \sigma[x \coloneqq \tru, &y\coloneqq \tru &] \mapsto p
    \\
    \sigma[x \coloneqq \fls, ~&y\coloneqq \fls &] \mapsto 1-p
  \end{array}
  \right)
  \in S_1 \nd S_2
\]
But clearly that distribution is not in $S$, since the two outcomes where $x=y$ must occur with aggregate probability $\frac12$.
So, we just saw that for any $S_1 \vDash \sure{x=y}$ and $S_2 \vDash \sure{x\neq y}$, there exists a $\mu \in S_1 \nd S_2$ such that $\mu \notin S$. This suggests we could modifiy the semantics above to say that $S \subseteq S_1 \nd S_2$ instead of $S=S_1 \nd S_2$:
\[
S \vDash \varphi \nd \psi
  \quad\text{iff}\quad
  {\color{purple}S \subseteq S_1 \nd S_2}
  \quad\text{and}\quad
  S_1 \vDash\varphi
  \quad\text{and}\quad
  S_2 \vDash\psi
  \quad\text{for some}~S_1, S_2 \in \C(\Sigma_\bot)
\]
However, this removes all reachability claims since \eg $S_1 \subseteq S_1 \nd S_2$, and therefore $\varphi\Rightarrow \varphi\nd \psi$ for any satisfiable $\psi$. In fact, we get that $S \vDash \varphi \nd \psi$ iff for every $\mu \in S$, there exists a probability $p$ such that $\mu \vDash \varphi \oplus_p \psi$, which is exactly the semantics we have defined in \Cref{sec:assertions}. This suggests that a demonic interpretation is inevitable if idempotence is a desired propositional property.

\subsection{Derivation of Coin Flip Programs}
\label{app:coin}

In this section, we give the full derivation for the coin flip game that was introduced in \Cref{sec:overview}. Below, we show the derivation for the variant in which the adversary picks first. The variant in which the coin flip happens first is completely analogous.
\begin{equation}\label{eq:flip-true}
      \inferrule*[right=\ruleref{Prob}]{
        \inferrule*[right=\ruleref{Assign}]{\;}{
          \triple{\sure{\tru=y}}{x \coloneqq\tru}{\sure{x=y}}
        }
        \\
        \inferrule*[Right=\ruleref{Consequence}]{
          \inferrule*[Right=\ruleref{Assign}]{\;}{
            \triple{\sure{\fls \neq y}}{x\coloneqq\fls}{\sure{x\neq y}}
          }
        }{
          \triple{\sure{\tru=y}}{x\coloneqq\fls}{\sure{x\neq y}}
        }
      }{
        \triple{\sure{\tru=y}}{x\coloneqq\mathsf{flip}\left(\tfrac12\right)}{\sure{x=y} \oplus_\frac12 \sure{x\neq y}}
      }
\end{equation}

\begin{equation}\label{eq:flip-false}
      \inferrule*[right=\ruleref{Prob}]{
        \inferrule*[right=\ruleref{Consequence}]{
          \inferrule*[Right=\ruleref{Assign}]{\;}{
            \triple{\sure{\tru \neq y}}{x\coloneqq\tru}{\sure{x\neq y}}
          }
        }{
          \triple{\sure{\fls=y}}{x \coloneqq\tru}{\sure{x\neq y}}
        }
        \\
        \inferrule*[Right=\ruleref{Assign}]{\;}{
          \triple{\sure{\fls = y}}{x\coloneqq\fls}{\sure{x= y}}
        }
      }{
        \triple{\sure{\fls = y}}{x\coloneqq\mathsf{flip}\left(\tfrac12\right)}{\sure{x\neq y} \oplus_\frac12 \sure{x= y}}
      }
\end{equation}

\begin{equation}\label{eq:flip-deriv}
  \inferrule*[right=\ruleref{Consequence}]{
    \inferrule*[Right=\ruleref{ND Split}]{
      (\ref{eq:flip-true})
      \\
      (\ref{eq:flip-false})
    }{
      \triple{\sure{\tru = y} \nd \sure{\fls = y}}{x \coloneqq\mathsf{flip}\left(\tfrac12\right)}{(\sure{x = y} \oplus_\frac12 \sure{x\neq y}) \nd (\sure{x \neq y} \oplus_\frac12 \sure{x= y})}
    }
  }{
    \triple{\sure{y = \tru} \nd \sure{y=\fls}}{x \coloneqq\mathsf{flip}\left(\tfrac12\right)}{\sure{x = y} \oplus_\frac12 \sure{x\neq y}}
  }
\end{equation}

\[
\inferrule*[right=\ruleref{Seq}]{
  \inferrule*[right={\Cref{lem:nd-rule}}]{\;}{
    \triple{\sure{\tru}}{y \gets\Bool}{\sure{y = \tru} \nd \sure{y=\fls}}
  }
  \\
  (\ref{eq:flip-deriv})
}{
  \triple{\sure{\tru}}{y \gets\Bool \fatsemi x \coloneqq\mathsf{flip}\left(\tfrac12\right)}{ \sure{x = y} \oplus_\frac12 \sure{x\neq y}}
}
\]

\section{Semantics}
\label{app:mono}

In this section, we provide some of the details about the well-definedness of the program semantics that are omitted from \Cref{sec:semantics}.

\subsection{Properties of the Kleisli Extension}

\begin{lemma}\label{lem:countably-convex}
If $S\subs \D(X)$ is convex and closed, then $S$ is countably convex; that is, every countable convex combination of elements of $S$ is in $S$.
\end{lemma}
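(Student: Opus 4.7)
Let $\mu = \sum_{i=1}^{\infty} p_i \mu_i$ be a countable convex combination with $\mu_i \in S$, $p_i \ge 0$, and $\sum_i p_i = 1$. The plan is to exhibit $\mu$ as the pointwise limit of a sequence of finite convex combinations of the $\mu_i$, each of which lies in $S$ by ordinary (finite) convexity, and then invoke topological closure of $S$ in $[0,1]^X$ to conclude $\mu \in S$.

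First, I would verify that $\mu$ is a well-defined element of $\D(X)$: for each $x$, the series $\mu(x) = \sum_i p_i \mu_i(x)$ has non-negative terms bounded by $\sum_i p_i = 1$, so it converges to a value in $[0,1]$; and by Tonelli for non-negative double series, $\sum_{x} \mu(x) = \sum_i p_i \sum_x \mu_i(x) = \sum_i p_i = 1$, so $\mu$ is a proper distribution.

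Next, let $q_n = \sum_{i=1}^{n} p_i$. Since $\sum_i p_i = 1$, eventually $q_n > 0$, and without loss of generality we may truncate so this holds for all $n$. Define
\[
\nu_n \;\triangleq\; \frac{1}{q_n} \sum_{i=1}^{n} p_i \mu_i.
\]
Then $\nu_n$ is a finite convex combination of $\mu_1, \dots, \mu_n \in S$ with coefficients $p_i/q_n \ge 0$ summing to $1$, so iterating the finite convexity hypothesis gives $\nu_n \in S$ for every $n$.

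For each fixed $x \in X$, we have $\sum_{i=1}^n p_i \mu_i(x) \to \mu(x)$ as $n \to \infty$ (by definition of $\mu(x)$) and $q_n \to 1$, hence $\nu_n(x) \to \mu(x)$. Convergence in the product topology on $[0,1]^X$ is precisely pointwise convergence, so $\nu_n \to \mu$ in the ambient topology. Since each $\nu_n \in S$ and $S$ is closed, $\mu \in S$, as required.

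I do not expect a serious obstacle here: the only mildly delicate point is handling the degenerate case where early $q_n$ vanish (easily sidestepped by truncation) and being careful that the product topology indeed reduces to pointwise convergence (which is standard). The core of the argument is the normalization $\nu_n = q_n^{-1}\sum_{i\le n} p_i \mu_i$, which turns the partial sums—themselves only subprobability measures—into honest convex combinations to which finite convexity applies.
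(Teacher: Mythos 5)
Your proof is correct and follows essentially the same route as the paper's: normalize the partial sums $\sum_{i\le n} p_i\mu_i$ by $q_n=\sum_{i\le n}p_i$ to obtain finite convex combinations lying in $S$, observe that these converge pointwise (hence in the product topology) to the countable combination, and conclude by topological closure. The only cosmetic difference is that you additionally verify the limit is a proper distribution and handle vanishing initial partial sums by truncation, where the paper simply assumes $a_0>0$.
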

\begin{proof}
The basic open sets are of the form $(\prod_{x\in F}U_x)\times[0,1]^{X\setminus F}$, where $F\subs X$ is finite and $U_x$ are basic open intervals. Let $\sum_{i\in I} a_i\mu_i$ be a countable convex combination of elements of $S$, $a_0>0$. Let $c_n = \sum_{i\le n} a_i$. By convexity, $\sum_{i\le n}(a_i/c_n)\mu_i$ are in $S$, since these are finite convex combinations. Since $c_n\to 1$, for sufficiently large $n$, $c_n > a_i/(\eps + a_i)$, so $(a_i/c_n) - a_i < \eps$. Thus every basic open set $(\prod_{x\in F}U_x)\times[0,1]^{X\setminus F}$ containing $\sum_i a_i\mu_i$ contains all but finitely many $\sum_{i\le n}(a_i/c_n)\mu_i$. Since $S$ is closed, $\sum_i a_i\mu_i\in S$.
\end{proof}
In \Cref{sec:prelim}, we introduced the Kleisi extension $f^\dagger\colon\C(X) \to \C(Y)$, but to verify this typing we must show that $f^\dagger(S)\in \C(Y)$ for any $f\colon X\to\C(Y)$ and $S\in \C(X)$, that is, $f^\dagger(S)$ is convex, closed, and up-closed. It was asserted by \citet{jifeng1997probabilistic} that $f^\dagger(S)$ is convex and up-closed, but they did not present the proof, so we provide it here.

\begin{lemma}
$f^\dagger(S)$ is convex.
\end{lemma}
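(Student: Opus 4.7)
The plan is to take two arbitrary elements $\xi_1,\xi_2 \in f^\dagger(S)$, write each in the canonical form guaranteed by the definition of $f^\dagger$, and then, for any $p\in[0,1]$, build a witness showing that $p\xi_1 + (1-p)\xi_2 \in f^\dagger(S)$. By definition, we may write $\xi_i = \sum_{x\in\supp(\mu_i)} \mu_i(x)\cdot \nu^i_x$ with $\mu_i\in S$ and $\nu^i_x \in f_\bot(x)$ for $i=1,2$. The obvious candidate for the underlying distribution in $S$ is $\mu \triangleq p\mu_1 + (1-p)\mu_2$, which lies in $S$ because $S$ is convex.

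Next, for each $x\in\supp(\mu) = \supp(\mu_1)\cup\supp(\mu_2)$, I would define a distribution $\nu_x \in f_\bot(x)$ by cases. When $x\in\supp(\mu_1)\cap\supp(\mu_2)$, set
\[
\nu_x \;\triangleq\; \tfrac{p\mu_1(x)}{\mu(x)}\,\nu^1_x \;+\; \tfrac{(1-p)\mu_2(x)}{\mu(x)}\,\nu^2_x,
\]
which is a genuine convex combination of $\nu^1_x$ and $\nu^2_x$ since the coefficients sum to $1$. When $x$ lies in only one of the supports, take $\nu_x$ to be the corresponding $\nu^i_x$. Membership $\nu_x\in f_\bot(x)$ then follows from convexity of $f_\bot(x)$: for $x\in X$ this is because $f(x)\in\C(Y)$ is convex by assumption, and for $x=\bot$ it is because $f_\bot(\bot) = \upcl\{\delta_\bot\} = \D(Y_\bot)$ is trivially convex.

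Finally, I would verify algebraically that
\[
\sum_{x\in\supp(\mu)} \mu(x)\cdot \nu_x \;=\; p\!\sum_{x\in\supp(\mu_1)} \mu_1(x)\cdot \nu^1_x \;+\; (1-p)\!\sum_{x\in\supp(\mu_2)} \mu_2(x)\cdot \nu^2_x \;=\; p\xi_1 + (1-p)\xi_2,
\]
so $p\xi_1+(1-p)\xi_2$ is realized as an element of $f^\dagger(S)$ using $\mu$ and $\{\nu_x\}$ as the witnesses. The only slightly delicate point is bookkeeping the supports: one must be careful that the coefficient $p\mu_1(x)/\mu(x)$ is well-defined whenever it appears, which is ensured by the case split on $\supp(\mu_1)\cap\supp(\mu_2)$. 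The sums themselves are countable but converge unconditionally since all terms are nonnegative and the total mass is bounded by $1$, so rearrangement causes no issue. I expect the entire argument to be a short direct computation; no deeper use of closure or up-closure of $S$ is needed for convexity per se.
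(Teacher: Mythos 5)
Your proposal is correct and follows essentially the same argument as the paper's proof: form the convex combination $p\mu_1+(1-p)\mu_2$ in $S$, reweight the continuation distributions $\nu^i_x$ by $p\mu_i(x)/\mu(x)$ so that each $\nu_x$ is a convex combination inside $f_\bot(x)$, and regroup the sum. Your explicit case split on $\supp(\mu_1)\cap\supp(\mu_2)$ and the remark about $f_\bot(\bot)$ are slightly more careful than the paper's presentation, but the underlying idea is identical.
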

\begin{proof}
Take any $\mu,\nu\in f^\dagger(S)$, so that means that there must be $\mu',\nu' \in S$, $(\mu_x)_{x\in\supp(\mu')}$, and $(\nu_x)_{x\in\supp(\nu')}$ such that $\mu = \sum_{x\in \supp(\mu')} \mu'(x)\mu_x$ and $\nu = \sum_{x\in\supp(\nu')} \nu'(x)\nu_x$ and $\mu_x,\nu_x \in f(x)$ for all $x$. Now, take any $p$:
\begin{align*}
  p\mu + (1-p)\nu
  &= p\smashoperator{\sum_{x\in \supp(\mu')}} \mu'(x)\mu_x + (1-p)\smashoperator{\sum_{x\in\supp(\nu')}} \nu'(x)\nu_x
  \\
  & = \sum_{x \in X} p\mu'(x)\mu_x + (1-p)\nu'(x)\nu_x
  \\
  \intertext{Let $\xi = p\mu' + (1-p)\nu'$. Clearly $\xi\in S$ since it is a convex combination of $\mu',\nu'\in S$.}
  & = \smashoperator{\sum_{x \in \supp(\xi)}} \xi(x)\left( p\frac{\mu'(x)}{\xi(x)}\mu_x + (1-p)\frac{\nu'(x)}{\xi(x)} \nu_x \right) 
\end{align*}
Now, we also clearly see that $p\frac{\mu'(x)}{\xi(x)}\mu_x + (1-p)\frac{\nu'(x)}{\xi(x)} \nu_x \in f(x)$ since it is a convex combination of $\mu_x$ and $\nu_x$. Therefore, $p\mu + (1-p)\nu \in f^\dagger(S)$, and therefore $f^\dagger(S)$ is convex.

\end{proof}

\begin{lemma}
$f^\dagger(S)$ is closed in the product topology.
\end{lemma}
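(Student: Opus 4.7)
The plan is to exploit that $X_\bot$ and $Y_\bot$ are countable, so $[0,1]^{X_\bot}$ and $[0,1]^{Y_\bot}$ are metrisable in the product topology and closure of $f^\dagger(S)$ can be checked sequentially. I would start by fixing a sequence $(\xi_n) \subseteq f^\dagger(S)$ with $\xi_n \to \xi \in \mathcal{D}(Y_\bot)$, and unpacking the definition of the Kleisli extension: for each $n$ there exist $\mu_n \in S$ and a family $(\nu_{x,n})_{x \in X_\bot}$ with $\nu_{x,n} \in f_\bot(x)$ such that $\xi_n = \sum_x \mu_n(x)\,\nu_{x,n}$, where I extend $\nu_{x,n}$ outside $\supp(\mu_n)$ arbitrarily inside $f_\bot(x)$ without changing the sum.

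The next step is a compactness-plus-diagonal extraction. Each $f_\bot(x)$ is a closed subset of $[0,1]^{Y_\bot}$, which is compact by Tychonoff, and similarly $S$ is compact in $[0,1]^{X_\bot}$. Extract a subsequence along which $\mu_n \to \mu \in S$; then, iterating through the countable set $X_\bot$ and diagonalising, refine to a single subsequence $(n_k)$ for which $\mu_{n_k} \to \mu \in S$ and $\nu_{x,n_k} \to \nu_x \in f_\bot(x)$ simultaneously for every $x \in X_\bot$.

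The remaining step---which I expect to be the main obstacle---is to justify $\xi(y) = \sum_x \mu(x)\,\nu_x(y)$ for each $y$, i.e.\ to commute the limit in $k$ with the possibly infinite sum over $x$ in $\xi_{n_k}(y) = \sum_x \mu_{n_k}(x)\,\nu_{x,n_k}(y)$. I would decompose
\[
  \sum_x \mu_{n_k}(x)\,\nu_{x,n_k}(y)
  \;=\; \sum_x \mu(x)\,\nu_{x,n_k}(y) \;+\; \sum_x \bigl(\mu_{n_k}(x)-\mu(x)\bigr)\,\nu_{x,n_k}(y).
\]
The first summand converges to $\sum_x \mu(x)\,\nu_x(y)$ by dominated convergence, since $\mu(x)\,\nu_{x,n_k}(y) \le \mu(x)$ is summable in $x$ and converges pointwise in $x$ to $\mu(x)\,\nu_x(y)$. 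The second summand is bounded in absolute value by $\sum_x |\mu_{n_k}(x)-\mu(x)|$, which vanishes by Scheff\'e's lemma applied to the proper distributions $\mu_{n_k}$ and $\mu$ on the countable set $X_\bot$.

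The delicate point here is the applicability of Scheff\'e's lemma, which requires $\mu$ to be a proper distribution rather than a sub-probability; this is where the role of $\bot$ as an absorber of mass together with the closedness of $S$ within $\mathcal{D}(X_\bot)$ (in the subspace topology) is used to rule out any escape of mass to infinity. Once these limit exchanges are validated, we obtain $\xi = \sum_x \mu(x)\,\nu_x$ with $\mu \in S$ and $\nu_x \in f_\bot(x)$, so $\xi \in f^\dagger(S)$ and the proof is complete.
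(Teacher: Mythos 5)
Your proof is correct in the setting of the paper, but it takes a genuinely different route from the one the authors use. The paper packages the whole argument topologically: it defines the mixture map $h(\mu,(\nu_x)_x)=\sum_x\mu(x)\nu_x$, verifies by a direct $\varepsilon$--$\delta$ computation on subbasic opens that $h$ is continuous, and then concludes that $f^\dagger(S)=h\bigl(S\times\prod_x f_\bot(x)\bigr)$ is the continuous image of a compact set in a Hausdorff space, hence closed. You instead exploit countability of $X_\bot$ and $Y_\bot$ to make the cube metrisable, reduce closedness to sequences, extract a diagonal subsequence using sequential compactness of $S$ and of each $f_\bot(x)$, and then justify the limit identity $\xi=\sum_x\mu(x)\nu_x$ by hand via dominated convergence plus Scheff\'e's lemma. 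The trade-off: the paper's argument needs no countability and works for arbitrary index sets (all the analysis is hidden in the continuity of $h$ and Tychonoff), whereas yours is more elementary topologically but leans on the countability of the state space (which the paper does establish separately) and makes the ``no escape of mass'' phenomenon explicit rather than implicit --- your use of Scheff\'e is exactly where the paper's proof silently uses that $S$ consists of \emph{proper} distributions and is compact. Two small points to tighten: (i) what you actually need for the extraction and for Scheff\'e is that $S$ and each $f_\bot(x)$ are closed in the full cube $[0,1]^{X_\bot}$ (equivalently compact), not merely relatively closed in $\D(X_\bot)$; this is the same reading of ``closed'' that the paper's own proof relies on when it invokes Tychonoff, so you should state it that way rather than ``closed in the subspace topology''; (ii) if the lemma is meant for uncountable $X$ or $Y$, the sequential/diagonal argument would have to be replaced by nets or by the paper's direct continuity argument.
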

\begin{proof}
Let
\begin{align*}
& h \colon \D(X)\times\prod_{x\in X}\D(Y_\bot)\to \D(Y_\bot) &&
h(\mu,(\nu_x)_{x\in X}) = \sum_{x\in X} \mu(x)\nu_x
\end{align*}
We first show that $h$ is continuous in the product topology. We must to show that the preimage of any subbasic open set $(a,b)\times[0,1]^{Y_\bot\setminus y}$ of $[0,1]^{Y_\bot}$ is open in $\D(X)\times\prod_{x\in X}\D(Y_\bot)$. The preimage is
\begin{align*}
h^{-1}((a,b)\times[0,1]^{Y_\bot\setminus y})
&= \set{(\mu,(\nu_x)_{x\in X})}{h(\mu,(\nu_x)_{x\in X})\in(a,b)\times[0,1]^{Y_\bot\setminus y}}\\
&= \set{(\mu,(\nu_x)_{x\in X})}{\sum_{x\in X} \mu(x)\nu_x\in(a,b)\times[0,1]^{Y_\bot\setminus y}}\\
&= \set{(\mu,(\nu_x)_{x\in X})}{\sum_{x\in X} \mu(x)\nu_x(y)\in(a,b)}.
\end{align*}
If this set is nonempty, say $\sum_{x\in X} c(x)\mu_x(y)\in(a,b)$, let $0 < \eps < (b-a)/2$ and let $F\subs X$ be a large enough finite set that $\sum_{x\in F} c(x)\mu_x(y)\in(a+\eps,b-\eps)$. Let $\delta>0$ be such that $2\delta+\delta^2 < \eps/\len F$ and define
\begin{align*}
U &= \set{\mu\in \D(X)}{\forall x\in F\ \mu(x)\in(c(x)-\delta,c(x)+\delta)}\\
V_x &= \begin{cases}
\set{\nu\in f(x)}{\nu_x(y)\in(\mu_x(y)-\delta,\mu_x(y)+\delta)}, & \text{if $x\in F$,}\\
f(x), & \text{if $x\in X\setminus F$.}
\end{cases}
\end{align*}
Then $U\times \prod_{x\in F} V_x$ is an open subset of $\D(X)\times\prod_{x\in X} f(x)$ in the relative topology. If $(d,(\rho_x)_{x\in X}) \in U\times\prod_{x\in X} V_x$, then for all $x\in F$,
\begin{align*}
& c_x-\delta < d(x) < c(x)+\delta && \mu_x(y)-\delta < \rho_x(y) < \mu_x(y)+\delta.
\end{align*}
It follows that
\begin{align*}
& \sum_{x\in F}(c(x)-\delta)(\mu_x(y)-\delta) < \sum_{x\in F}d(x)\rho_x(y) < \sum_{x\in F}(c(x)+\delta)(\mu_x(y)+\delta)\\
&\Imp\ \sum_{x\in F}c(x)\mu_x(y) - (2\delta - \delta^2)\len F < \sum_{x\in F}d(x)\rho_x(y) < \sum_{x\in F}c(x)\mu_x(y) + (2\delta + \delta^2)\len F\\
&\Imp\ \sum_{x\in F}c(x)\mu_x(y) - \eps < \sum_{x\in F}d(x)\rho_x(y) < \sum_{x\in F}c(x)\mu_x(y) + \eps\\
&\Imp\ \sum_{x\in F}d(x)\rho_x(y) \in (a,b),
\end{align*}
therefore
\begin{align*}
U\times\prod_{x\in X} V_x &\subs \set{(\mu,\nu_x \mid x\in X)}{\sum_{x\in X} \mu(x)\nu_x(y)\in(a,b)}.
\end{align*}
We have shown that $h$ is continuous. The set $S\times\prod_{x\in X}f(x)$ is a closed subset of $\D(X)\times\prod_{x\in X}\D(Y_\bot)$ and its image under $h$ is $f^\dagger(S)$. The space $\D(X)\times\prod_{x\in X}\D(Y_\bot)$, being a closed subset of the compact space $[0,1]^X\times\prod_{x\in X}[0,1]^{Y_\bot}$, is itself compact in the relative topology. The space $[0,1]^{Y_\bot}$ is clearly Hausdorff, and it is well known that the image of a closed set under a continuous map from a compact space to a Hausdorff space is closed.
\end{proof}

\begin{lemma}
$f^\dagger(S)$ is up-closed.
\end{lemma}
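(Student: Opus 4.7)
The goal is to show that any $\sigma\in\D(Y_\bot)$ with $\rho\sqsubseteq_\D\sigma$ for some $\rho\in f^\dagger(S)$ also lies in $f^\dagger(S)$. My plan is to reuse the same witness $\mu\in S$ for $\rho$ and only adjust the continuations $\nu_x\in f_\bot(x)$: I will shift the $\bot$-mass that $\sigma$ removes (relative to $\rho$) into genuine $Y$-values by a single uniform rescaling, then invoke up-closedness of each $f_\bot(x)\in\C(Y)$ to conclude that the rescaled continuations still lie in $f_\bot(x)$.

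Concretely, unpack the definition as $\rho=\sum_{x\in\supp(\mu)}\mu(x)\nu_x$ with $\mu\in S$ and $\nu_x\in f_\bot(x)$. Since $\rho\sqsubseteq_\D\sigma$ and both are probability distributions, $\sigma(y)\ge\rho(y)$ on $Y$ and $\sigma(\bot)\le\rho(\bot)$. Set $\varepsilon:=\rho(\bot)-\sigma(\bot)\ge 0$; if $\varepsilon=0$ then $\sigma=\rho$ and we are done, so assume $\varepsilon>0$, which forces $\rho(\bot)>0$. Define the scalar $c:=\varepsilon/\rho(\bot)\in(0,1]$ and the distribution $\lambda\in\D(Y)$ by $\lambda(y):=(\sigma(y)-\rho(y))/\varepsilon$; $\lambda$ has total mass $1$ because $\sigma$ and $\rho$ have equal total mass.

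Now for each $x\in\supp(\mu)$ set
\[
\nu'_x \;:=\; \nu_x + c\,\nu_x(\bot)\,(\lambda - \delta_\bot).
\]
A direct check shows $\nu'_x\in\D(Y_\bot)$ (total mass preserved), $\nu'_x(\bot)=(1-c)\nu_x(\bot)\le\nu_x(\bot)$, and $\nu'_x(y)\ge\nu_x(y)$ for $y\in Y$, so $\nu_x\sqsubseteq_\D\nu'_x$. Up-closedness of $f_\bot(x)$ then yields $\nu'_x\in f_\bot(x)$. Using $\sum_x\mu(x)\nu_x(\bot)=\rho(\bot)$, the aggregate is
\[
\sum_x \mu(x)\nu'_x \;=\; \rho + c\Bigl(\sum_x\mu(x)\nu_x(\bot)\Bigr)(\lambda-\delta_\bot) \;=\; \rho + c\,\rho(\bot)(\lambda-\delta_\bot) \;=\; \rho + \varepsilon\lambda - \varepsilon\delta_\bot \;=\; \sigma,
\]
witnessing $\sigma\in f^\dagger(S)$.

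The only piece requiring care is that the single scaling constant $c$ works uniformly for every $x$: because $c\le 1$ we never try to remove more $\bot$-mass from any individual $\nu_x$ than it has, while the weighted sum $\sum_x\mu(x)\nu_x(\bot)=\rho(\bot)$ matches the denominator of $c$ exactly, so the aggregate shift comes out to $\varepsilon(\lambda-\delta_\bot)=\sigma-\rho$. I do not expect any further obstacle; in particular it is not necessary to tailor $\lambda$ or the amount of shift to each $x$ separately.
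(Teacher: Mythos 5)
Your proof is correct. Each $\nu'_x = \nu_x + c\,\nu_x(\bot)(\lambda-\delta_\bot)$ is a proper distribution dominating $\nu_x$ on $Y$, so up-closedness of $f_\bot(x)$ gives $\nu'_x\in f_\bot(x)$, and the aggregate computation $\sum_x\mu(x)\nu'_x=\rho+\varepsilon(\lambda-\delta_\bot)=\sigma$ checks out (including the normalization $|\lambda|=1$, which follows from $\rho$ and $\sigma$ having equal total mass). The paper proves the same statement by a different route: it also keeps the witness $\mu\in S$ fixed and seeks increments $a_x$ of mass to move from $\bot$ into $Y$ inside each continuation, but rather than writing these down explicitly it imposes two inequality constraints on the family $(a_x)_x$, picks a $\sqsubseteq_\D$-maximal family satisfying them, and argues that at a maximal point one of the two constraints must be universally tight, from which the decomposition of the larger distribution follows in either case. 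Your uniform-rescaling construction --- removing the fraction $c=\varepsilon/\rho(\bot)$ of each $\nu_x$'s $\bot$-mass and redistributing it along the single fixed profile $\lambda$ --- is more elementary and fully constructive: it sidesteps the existence-of-a-maximal-element step and the ensuing case analysis, and the key point you flag (that one scaling constant works for all $x$ simultaneously because $\sum_x\mu(x)\nu_x(\bot)=\rho(\bot)$ matches the denominator of $c$) is exactly the right thing to check. Both arguments ultimately rest on the same two ingredients: the freedom to enlarge each continuation within the up-closed set $f_\bot(x)$, and the fact that the deficit $\sigma-\rho$ lives entirely in mass transferred off $\bot$.
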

\begin{proof}
Take any $\sum_{x\in\supp(\mu)} \mu(x)\nu_x \in S$ and let $\sum_{x\in\supp(\mu)} \mu(x)\nu_x \sqle_\D \xi$, so for all $y\in Y$:
\[
  \smashoperator{\sum_{x\in\supp(\mu)}} \mu(x)\cdotp\nu_x(y) \le \xi(y)
\]
Let $\set{a_x}{x\in X}$ be a collection of probability distributions on $Y_\bot$ such that the $a_x$ are $\sqle_\D$-maximal subject to
\begin{itemize}
\item
$\forall x\in X\ \ a_x(Y) \le \nu_x(\bot)$
\item
$\forall s\in Y\ \ \sum_x b_xa_x(s) \le (\mu - \sum_x b_x\nu_x)(s)$.
\end{itemize}
The distributions $a_x= \delta_\bot$ satisfy these constraints, so there exists a maximal one.
By maximality, one of the two constraints must be universally tight, that is, either
\begin{itemize}
\item
$\forall x\in X\ \ a_x(Y) = \nu_x(\bot)$, or
\item
$\forall s\in Y\ \ \sum_x b_xa_x(s) = (\mu - \sum_x b_x\nu_x)(s)$.
\end{itemize}

Let $\nu_x'(s) = \nu_x(s)+a_x(s)$ for $s\in Y$ and $\nu_x'(\bot) = \nu_x(\bot)-a_x(Y)$. For all $s\in Y$,
\begin{align*}
\sum_x b_xa_x(s) \le (\mu - \sum_x b_x\nu_x)(s)\ &\Imp\ \sum_x b_x\nu_x(s) + \sum_x b_xa_x(s) \le \mu(s)\\
&\Imp\ \sum_x b_x\nu_x'(s)\le\mu(s),
\end{align*}
so $\sum_x b_x\nu_x'\sqle\mu$.
If the first constraint is universally tight, then all $\nu_x'(\bot) = 0$, thus $\mu(\bot) \le \sum_x b_x\nu_x'(\bot) = 0$. Since $\sum_x b_x\nu_x'\sqle\mu$ and $\mu(\bot) = \sum_x b_x\nu_x'(\bot) = 0$, we have $\sum_x b_x\nu_x' = \mu$, so the second constraint is universally tight as well. Thus in either case, $\sum_x b_x\nu_x' = \mu$.

As $f(x)$ is up-closed and $\nu_x\sqle\nu_x'$, we have $\nu_x'\in f(x)$, thus $\mu = \sum_x b_x\nu_x'\in f^\dagger(S)$ by definition of $f^\dagger$. Since $\mu\sqsupseteq\sum_x b_x\nu_x$ was arbitrary, $f^\dagger(S)$ is up-closed.
\end{proof}

\subsection{Monad Laws}
\label{app:monad-laws}

\begin{theorem}[Left Identity]
$\eta^\dagger = \mathsf{id}$
\end{theorem}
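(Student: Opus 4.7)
The plan is to establish $\eta^\dagger(S) = S$ for every $S \in \C(X)$ by proving each inclusion directly from the definitions of $\eta$ and $(-)^\dagger$ in \Cref{sec:prelim}.

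For the easy direction $S \subseteq \eta^\dagger(S)$, I would take any $\mu \in S$ and instantiate the Kleisli formula with $\nu_x \coloneqq \delta_x$ for every $x \in \supp(\mu)$. This is a legal choice because $\delta_x \in \upcl\{\delta_x\} = \eta_\bot(x)$ for $x \in X$, and similarly $\delta_\bot \in \upcl\{\delta_\bot\} = \eta_\bot(\bot)$ in case $\bot \in \supp(\mu)$. Then $\sum_{x \in \supp(\mu)} \mu(x) \cdot \delta_x = \mu$, so $\mu \in \eta^\dagger(S)$.

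The more delicate direction is $\eta^\dagger(S) \subseteq S$. Given $\xi \in \eta^\dagger(S)$, write $\xi = \sum_{x \in \supp(\mu)} \mu(x) \cdot \nu_x$ for some $\mu \in S$ and $\nu_x \in \eta_\bot(x)$. The key observation is that the up-closure degenerates at proper elements: for $x \in X$, $\nu_x \in \upcl\{\delta_x\}$ forces $\nu_x(x) \ge 1$, and since $\nu_x$ is a probability distribution this pins down $\nu_x = \delta_x$ exactly. Only at $x = \bot$ does $\nu_\bot$ have genuine freedom, since $\eta_\bot(\bot) = \upcl\{\delta_\bot\} = \D(X_\bot)$.

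Consequently $\xi$ agrees with $\mu$ on all proper states up to extra mass coming from redistributing $\mu(\bot)$: for every $y \in X$, $\xi(y) = \mu(y) + \mu(\bot) \cdot \nu_\bot(y) \ge \mu(y)$, which is exactly $\mu \sqsubseteq_\D \xi$. Since $S$ is up-closed, $\xi \in S$, finishing the proof. The main obstacle is really a bookkeeping one: keeping the special role of $\bot$ straight and noticing that the up-closure in $\eta$ is nontrivial only there---up-closedness of $S$ is then exactly what is needed to absorb the $\bot$-slack into membership.
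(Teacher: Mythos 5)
Your proof is correct and follows essentially the same route as the paper, which identifies $\eta^\dagger(S)$ with $\upcl S$ and concludes by up-closedness of $S$. You simply make explicit the two facts the paper leaves implicit---that $\upset{\delta_x}=\{\delta_x\}$ for proper $x$ (so the only slack comes from $\nu_\bot$) and that the resulting $\xi$ dominates $\mu$ in $\sqsubseteq_\D$---which is a welcome level of detail but not a different argument.
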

\begin{proof}
For any $S\in\C(X)$:
\begin{align*}
  \eta^\dagger(S)
  &= \left\{
      \sum_{x \in\supp(\mu)} \mu(x)\cdotp \nu_x
       \;\;\middle|\;\;
       \mu \in S,
       \forall x\in\supp(\mu). \nu_x \in \eta_\bot(x)
   \right\}
   \intertext{Note that $\eta_\bot(x) = \eta(x) = \upset{\delta_x}$.}
  &= \left\{
      \sum_{x \in\supp(\mu)} \mu(x)\cdotp \nu_x
       \;\;\middle|\;\;
       \mu \in S,
       \forall x\in\supp(\mu). \nu_x \in \upset{\delta_x}
   \right\}
   \intertext{The set above contains every $\mu\in S$, as well as every $\mu'$ such that $\mu \sqsubseteq_\D \mu'$ for some $\mu \in S$. In other words, the set above is $\upcl S$, however, $S$ is already up-closed, so the set above is simply $S$.}
   &= \upcl S = S
\end{align*}
\end{proof}

\begin{theorem}[Right Identity]
$f^\dagger \circ \eta = f$
\end{theorem}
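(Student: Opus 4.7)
The plan is to unfold both sides at an arbitrary $x \in X$ and show that $f^\dagger(\eta(x)) = f(x)$ by direct computation. The key preliminary observation will be that $\eta(x) = \upset{\delta_x}$ is actually the singleton $\{\delta_x\}$ whenever $x \in X$. This is because the order $\sqsubseteq_\D$ is defined pointwise on proper states: for $\mu \sqsupseteq_\D \delta_x$ we would need $\mu(x) \ge \delta_x(x) = 1$, which forces $\mu(x) = 1$, and since $\mu$ is a probability distribution on $X_\bot$ with total mass one, this pins down $\mu = \delta_x$.

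Having established that $\eta(x) = \{\delta_x\}$, the rest is a direct unfolding of the Kleisli extension. By definition,
\[
f^\dagger(\{\delta_x\}) = \left\{ \sum_{y \in \supp(\mu)} \mu(y) \cdot \nu_y \;\middle|\; \mu \in \{\delta_x\},\ \forall y \in \supp(\mu).\ \nu_y \in f_\bot(y) \right\}.
\]
The only choice for $\mu$ is $\delta_x$, whose support is $\{x\}$, so the sum collapses to $\delta_x(x) \cdot \nu_x = \nu_x$. Since $x \in X$ (and not $\bot$), we have $f_\bot(x) = f(x)$, so the resulting set is precisely $\{\nu_x \mid \nu_x \in f(x)\} = f(x)$.

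There is no real obstacle here — the result follows from the definitions once the observation about $\upset{\delta_x}$ is in place. The only subtlety worth highlighting explicitly in the write-up is \emph{why} the up-closure of a Dirac at a proper state is trivial: it relies on $\delta_x$ already being maximal in $\sqsubseteq_\D$ for $x \in X$, in contrast to $\delta_\bot$ whose up-closure is all of $\D(X_\bot)$. This asymmetry is exactly what makes $\eta$ well-behaved on proper elements while still accommodating $\bot$ through the $f_\bot$ construction.
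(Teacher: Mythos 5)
Your proof is correct and follows essentially the same route as the paper's: unfold the Kleisli extension at $\eta(x)$, note that $\eta(x)=\upset{\delta_x}=\{\delta_x\}$ and $f_\bot(x)=f(x)$ for $x\in X$, and watch the sum collapse to $f(x)$. Your explicit justification of why $\upset{\delta_x}$ is a singleton (maximality of $\delta_x$ for proper $x$) is a detail the paper states only in its preliminaries, but it is exactly the right reason and matches the paper's argument.
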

\begin{proof}
For any $f\colon X\to\C(Y)$ and $x\in X$:
\begin{align*}
  f^\dagger(\eta(x))
  &= \left\{
      \sum_{x \in\supp(\mu)} \mu(x)\cdotp \nu_x
       \;\;\middle|\;\;
       \mu \in \eta(x),
       \forall x\in\supp(\mu). \nu_x \in f_\bot(x)
   \right\}
   \intertext{Since $x\neq\bot$, then $\eta(x) = \{ \delta_x \}$ and $f_\bot(x) = f(x)$.}
  &= \{
      \nu_x
       \mid
       \nu_x \in f_\bot(x)
   \}
   = f(x)
\end{align*}
\end{proof}

\begin{lemma}\label{lem:convex-assoc}
For any $\mu \in \D(X)$, collection $(\nu_x)_{x\in \supp(\mu)}$ such that each $\nu_x \in \D(Y)$, and collection $(S_y)_{y \in Y}$ such that each $S_y \in \C(Z)$:
\[
  \left\{
    \smashoperator[r]{\sum_{x \in \supp(\mu)}} \mu(x)\cdot
    \smashoperator{\sum_{y\in\supp(\nu_x)}} \nu_x(y) \cdotp \nu'_{x, y}
    ~\Big|~
    \forall x, y.\ \nu'_{x, y} \in S_y
  \right\}
  =
  \left\{
    {\sum_{y \in Y}}
      \left( \smashoperator[r]{\sum_{x \in \supp(\mu)}} \mu(x) \cdotp \nu_x \right)(y)
      \cdotp \nu'_y
    ~\Big|~ 
    \forall x,y.\ \nu'_y \in S_y
  \right\}
\]
\end{lemma}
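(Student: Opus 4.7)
}
The plan is to prove the two set inclusions separately, using Fubini-style rearrangement of nonnegative series to line up the terms, and then exploit convexity (and countable convexity, \Cref{lem:countably-convex}) of each $S_y$ to pass between a family of $\nu'_{x,y}$ indexed by $(x,y)$ and a single family $\nu'_y$ indexed by $y$. Write $\xi \triangleq \sum_{x\in\supp(\mu)} \mu(x)\cdot \nu_x$ for the weighted mixture that appears on the right-hand side; note that $\xi(y) = \sum_{x\in\supp(\mu)} \mu(x)\cdot \nu_x(y)$ for every $y\in Y$, and $\supp(\xi) = \bigcup_{x\in\supp(\mu)}\supp(\nu_x)$.

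The inclusion $\supseteq$ is the easy direction. Given any $\nu'_y\in S_y$ on the right, I would set $\nu'_{x,y} \triangleq \nu'_y$ for all $x\in\supp(\mu)$ and compute
\[
  \smashoperator{\sum_{x\in\supp(\mu)}} \mu(x) \smashoperator{\sum_{y\in\supp(\nu_x)}} \nu_x(y)\cdot \nu'_{x,y}
  \;=\; \sum_{y\in Y} \Bigl(\smashoperator[r]{\sum_{x\in\supp(\mu)}} \mu(x)\cdot \nu_x(y)\Bigr) \nu'_y
  \;=\; \sum_{y\in Y}\xi(y)\cdot \nu'_y,
\]
where the first equality is a straightforward rearrangement of a nonnegative series.

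The inclusion $\subseteq$ is where the real work lies. Fix an element of the left-hand side given by a family $(\nu'_{x,y})$ with $\nu'_{x,y}\in S_y$. For each $y\in\supp(\xi)$, the weights $\mu(x)\cdot \nu_x(y)/\xi(y)$, as $x$ ranges over $\supp(\mu)$, form a (countable) probability distribution, so
\[
  \nu'_y \;\triangleq\; \smashoperator{\sum_{x\in\supp(\mu)}} \frac{\mu(x)\cdot \nu_x(y)}{\xi(y)}\cdot \nu'_{x,y}
\]
is a countable convex combination of elements of $S_y$, hence lies in $S_y$ by \Cref{lem:countably-convex}. For $y\notin\supp(\xi)$, $\xi(y)=0$ and the corresponding summand will not matter, so pick any $\nu'_y\in S_y$ (the set is nonempty by definition of $\C(Z)$). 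Then by construction $\xi(y)\cdot \nu'_y = \sum_{x\in\supp(\mu)} \mu(x)\cdot \nu_x(y)\cdot \nu'_{x,y}$, and summing over $y$ and reindexing yields the required equality.

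The main obstacle is the convexity step: without \Cref{lem:countably-convex} the argument would not go through when $\supp(\mu)$ is infinite, since the mixture defining $\nu'_y$ can genuinely be a countable (not finite) convex combination. The one subtlety beyond that is the harmless bookkeeping for $y\notin\supp(\xi)$, which requires nonemptiness of each $S_y$ to supply a witness $\nu'_y$.
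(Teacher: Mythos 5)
Your proof follows essentially the same route as the paper's: the easy inclusion takes $\nu'_{x,y} \triangleq \nu'_y$ uniformly in $x$, and the hard inclusion builds $\nu'_y$ as the renormalized mixture $\sum_{x}\frac{\mu(x)\nu_x(y)}{\xi(y)}\nu'_{x,y}$, which is exactly the paper's $\xi_y$. If anything you are slightly more careful than the paper: you explicitly invoke countable convexity (\Cref{lem:countably-convex}) where the paper appeals only to convexity even though the mixture over $\supp(\mu)$ may be countably infinite, and you supply a witness $\nu'_y\in S_y$ for $y\notin\supp(\xi)$ via nonemptiness, a degenerate case the paper leaves implicit.
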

\begin{proof}
We prove the equality by establishing that each set is a subset of the other. We first prove the $\supseteq$ direction. Any element $\xi$ from the set on the left has the form $\sum_x \mu(x) \cdotp \sum_y \nu_x(y) \cdotp \nu'_{x, y}$ where $\nu'_{x, y} \in S_y$ for each $x$ and $y$. Now, for each $y\in Y$, we construct $\xi_y$ as follows:
\[
  \xi_y \triangleq \smashoperator[l]{\sum_{x\in\supp(\mu)}} \frac{\mu(x) \cdotp \nu_x(y)}{\sum_{x' \in\supp(\mu)} \mu(x')\cdotp \nu_{x'}(y)}\cdotp \nu'_{x,y}
\]
Since each $\nu'_{x,y} \in S_y$, then clearly $\xi_y$ is a convex combination of elements of $S_y$, and since $S_y$ is convex, then $\xi_y\in S_y$ for each $y$. Now, we have:
\begin{align*}
  \xi &= \smashoperator{\sum_{x\in\supp(\mu)}} \mu(x) \cdot \smashoperator{\sum_{y \in \supp(\nu_x)}} \nu_x(y) \cdot \nu'_{x, y}
  \\
  &= \sum_{y\in Y}
    \cancel{\smashoperator[r]{\sum_{x\in\supp(\mu)}} \mu(x) \cdot\nu_x(y)}
    \cdot
    \smashoperator[l]{\sum_{x\in\supp(\mu)}} \frac{\mu(x) \cdotp \nu_x(y)}{\cancel{\sum_{x' \in\supp(\mu)} \mu(x')\cdotp \nu_{x'}(y)}}\cdotp \nu'_{x,y}
  \\
  &= \sum_{y\in Y}\left(
    \smashoperator[r]{\sum_{x\in\supp(\mu)}} \mu(x) \cdot\nu_x \right)(y)
    \cdot \xi_y
\end{align*}
Therefore $\xi$ is contained in the second set too. Now, we prove the $\supseteq$ direction. Any element $\xi$ of the second set has the form $\xi = \sum_y (\sum_x\mu(x)\cdotp\nu_x)(y)\cdotp\nu'_y = \sum_x \mu(x)\cdotp \sum_y \nu_x(y)\cdotp\nu'_y$ where $\nu'_y \in S_y$ for each $y$. Now, for each $x$ and $y$, let $\xi_{x,y} = \nu'_y$, which is clearly in $S_y$. So, $\xi = \sum_x \mu(x)\cdotp \sum_y \nu_x(y)\cdotp\xi_{x,y}$, and therefore $\xi$ is in the first set.
\end{proof}

\begin{theorem}[Associativity]
$(f^\dagger \circ g)^\dagger = f^\dagger \circ g^\dagger$
\end{theorem}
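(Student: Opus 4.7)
The plan is to fix an arbitrary $S \in \C(X)$ and prove $(f^\dagger \circ g)^\dagger(S) = f^\dagger(g^\dagger(S))$ by double inclusion, with Lemma~\ref{lem:convex-assoc} doing the main associativity work and a side calculation about $\upset{\delta_\bot}$ handling the error element. First I would fully unfold the left-hand side: any element has the form $\sum_{x \in \supp(\mu)} \mu(x)\cdotp \xi_x$ with $\mu \in S$ and $\xi_x \in (f^\dagger\circ g)_\bot(x)$, and for each $x \in X$ this $\xi_x$ can be written as $\sum_{y \in \supp(\nu_x)} \nu_x(y) \cdotp \rho_{x,y}$ with $\nu_x \in g(x)$ and $\rho_{x,y} \in f_\bot(y)$. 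In parallel, I unfold the right-hand side as $\sum_{y} \bigl(\sum_x \mu(x)\cdotp \nu_x\bigr)(y)\cdotp \rho_y$ for some $\mu \in S$, $\nu_x \in g_\bot(x)$, and $\rho_y \in f_\bot(y)$.

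For the core of the argument, I fix $\mu$ and the family $(\nu_x)_{x\in\supp(\mu)}$ and observe that the inner set $\bigl\{\sum_x \mu(x)\sum_y \nu_x(y)\rho_{x,y} \bigm| \rho_{x,y} \in f_\bot(y)\bigr\}$ is exactly the left-hand side of Lemma~\ref{lem:convex-assoc} with $S_y \triangleq f_\bot(y)$, so it coincides with $\bigl\{\sum_y (\sum_x \mu(x)\nu_x)(y)\cdotp \rho_y \bigm| \rho_y \in f_\bot(y)\bigr\}$. Taking the union over all admissible $\mu$ and $(\nu_x)$ then gives $(f^\dagger\circ g)^\dagger(S) \subseteq f^\dagger(g^\dagger(S))$. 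The reverse inclusion is symmetric: given $\tau \in g^\dagger(S)$, I recover a witness $(\mu,(\nu_x))$ presenting $\tau$, then invoke Lemma~\ref{lem:convex-assoc} in the opposite direction to rewrite any $\sum_y \tau(y)\rho_y$ as a double sum of the required shape.

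The main obstacle is keeping the $\bot$-bookkeeping consistent between the two sides, since on the left $(f^\dagger\circ g)_\bot(\bot) = \upset{\delta_\bot} \subseteq \D(Z_\bot)$ is used directly, whereas on the right the error contribution flows through $g_\bot(\bot) = \upset{\delta_\bot} \subseteq \D(Y_\bot)$ and is then post-composed by $f^\dagger$. I would handle this by establishing the auxiliary identity $f^\dagger(\upset{\delta_\bot}) = \upset{\delta_\bot}$: taking $\nu = \delta_\bot$ in the unfolding of $f^\dagger$ reduces the sum to a single term $\rho_\bot \in f_\bot(\bot) = \upset{\delta_\bot} = \D(Z_\bot)$, so every element of $\D(Z_\bot)$ is realized. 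With this bridge, the $\bot$-indexed summands in either representation contribute precisely the same set of distributions, Lemma~\ref{lem:convex-assoc} applies without modification on the nontrivial indices, and the two associated Kleisli composites agree on $S$.
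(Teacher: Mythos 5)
Your proposal is correct and follows essentially the same route as the paper's proof: unfold both Kleisli extensions, reduce the rearrangement of the double convex sum to Lemma~\ref{lem:convex-assoc} with $S_y = f_\bot(y)$, and discharge the error element via the observation that $(f^\dagger\circ g)_\bot(\bot) = \upset{\delta_\bot} = f^\dagger(g_\bot(\bot))$ (your auxiliary identity $f^\dagger(\upset{\delta_\bot})=\upset{\delta_\bot}$ is exactly this fact). The only cosmetic difference is that you argue by double inclusion where the paper writes a single chain of set equalities.
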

\begin{proof}
For any $f\colon Y\to\C(Z)$, $g\colon X\to\C(Y)$, and $S \in \C(X)$:
\begin{align*}
  (f^\dagger \circ g)^\dagger(S)
  &= \left\{
      \sum_{x \in \supp(\mu)} \mu(x)\cdotp \mu'_x
      \;\;\Big|\;\;
      \mu \in S,
      \forall x.\ \mu'_x \in (f^\dagger \circ g)_\bot(x)
  \right\}
  \intertext{Clearly $(f^\dagger \circ g)_\bot(x) = f^\dagger(g(x))$ since $(f^\dagger \circ g)_\bot(x) = (f^\dagger \circ g)(x) = f^\dagger(g(x)) = f^\dagger(g_\bot(x))$ if $x\in X$, and $(f^\dagger \circ g)_\bot(\bot) = \upset{\delta_\bot} = f^\dagger(g_\bot(\bot))$. Expanding the second Kleisli extension, we get:}
  &= \left\{
      \smashoperator[r]{\sum_{x \in \supp(\mu)}} \mu(x)\cdotp \mu'_x
      \;\;\Big|\;\;
      \mu \in S,
      \forall x.\ \mu'_x \in 
      \left\{
          \smashoperator[r]{\sum_{y\in \supp(\nu)}} \nu(y)\cdotp \nu'_y
          \;\;\Big|\;\;
          \nu \in g_\bot(x),
          \forall y.\ \nu'_y \in f_\bot(y)
        \right\}
  \right\}
\\
  &= \left\{
      \smashoperator[r]{\sum_{x \in \supp(\mu)}} \mu(x)\cdot
        \smashoperator{\sum_{y \in \supp(\nu_x)}} \nu_x(y) \cdotp \nu'_{x,y}
      \;\;\Big|\;\;
      \mu \in S,
      \forall x.\
        \nu_x \in g_\bot(x),
        \forall y.\ \nu'_{x,y} \in f_\bot(y)
  \right\}
\intertext{By \Cref{lem:convex-assoc}.}
  &= \left\{
      {\sum_{y \in \bigcup_{x}\supp(\nu_x)}} \left(
        \smashoperator[r]{\sum_{x \in \supp(\mu)}} \mu(x)\cdotp \nu_{x}
        \right) \cdotp \nu'_y
        \;\;\Big|\;\;
        \mu \in S,
        \forall x.\ \nu_x \in g_\bot(x),
      \forall y.\ \nu'_y \in f_\bot(y)
  \right\}
  \\
  &= \left\{
      \smashoperator[r]{\sum_{y\in \supp(\nu)}} \nu(y)\cdotp \nu'_y
      \;\;\Big|\;\;
      \nu \in \left\{
        \smashoperator[r]{\sum_{x \in \supp(\mu)}} \mu(x)\cdotp \mu'_x
        \;\;\Big|\;\;
        \mu \in S,
        \forall x.\ \mu'_x \in g_\bot(x)
      \right\},
      \forall y.\ \nu'_y \in f_\bot(y)
  \right\}
  \\
  &= f^\dagger\left(\left\{
      \smashoperator[r]{\sum_{x \in \supp(\mu)}} \mu(x)\cdotp \mu'_x
      \;\;\Big|\;\;
      \mu \in S,
      \forall x.\ \mu'_x \in g_\bot(x)
  \right\}\right)
  \\
  &= f^\dagger(g^\dagger(S))
\end{align*}
\end{proof}

\subsection{Fixed Point Existence}
\label{app:fixpoint}

The function $f^\dagger$ is monotone with respect to $\sqle_\C^\bullet$: if $S\sqle_\C T$, then $f^\dagger(S)\sqle f^\dagger(T)$. This is true because $f^\dagger$ is monotone with respect to set inclusion: If $T\subs S$, then $f^\dagger(T)\subs f^\dagger(S)$, as is obvious from the definition.

Any $\sqle_\C$-directed set $D$ of elements of $\C(X)$ has a supremum $\sup D$, namely $\bigcap D$. This set is nonempty and closed, since it is the intersection of a family of closed sets with the finite intersection property in a compact space; convex, since the intersection of convex sets is convex; and up-closed, since the intersection of up-closed sets is up-closed. Thus $\C(X)$ forms a DCPO under $\sqle_\C$.

In the following we use $\Pi_{x\in X}$ to denote a cartesian product indexed by $x\in X$. For any $y \in \Pi_{x\in X} S_x$, we let $y_x$ be the $x^\text{th}$ projection of $y$, so that $y_x \in S_x$.

\begin{lemma}\label{lem:prod-inter}
\[
  \prod_{x\in X} \bigcap_{f\in D} f(x) = \bigcap_{f\in D} \prod_{x\in X} f(x)
\]
\end{lemma}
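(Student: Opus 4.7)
The plan is to establish this set equality by unfolding the membership conditions on both sides and observing that they differ only by the order of two universal quantifiers, which commute. Concretely, I would fix an arbitrary tuple $y = (y_x)_{x \in X} \in \prod_{x \in X} \D(Y_\bot)$ (or whatever ambient product is implicit) and characterize when $y$ lies in each side.

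First, I would note that $y \in \prod_{x \in X} \bigcap_{f \in D} f(x)$ holds iff for every $x \in X$, $y_x \in \bigcap_{f \in D} f(x)$, which by the definition of intersection unfolds to the condition
\[
  \forall x \in X.\ \forall f \in D.\ y_x \in f(x).
\]
Next, I would observe that $y \in \bigcap_{f \in D} \prod_{x \in X} f(x)$ holds iff for every $f \in D$, $y \in \prod_{x \in X} f(x)$, which unfolds to
\[
  \forall f \in D.\ \forall x \in X.\ y_x \in f(x).
\]

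Since these two conditions differ only by the order of two universal quantifiers over independent ranges, they are logically equivalent. This gives the mutual inclusion, and hence the desired equality. There is no real obstacle here: the lemma is a pure quantifier-swap and requires no hypotheses about $D$, the sets $f(x)$, or $X$ beyond their being well-defined.
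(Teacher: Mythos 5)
Your proof is correct and follows essentially the same route as the paper's: both unfold membership in each side to a pair of universal quantifiers ($\forall x.\,\forall f$ versus $\forall f.\,\forall x$) and conclude by commuting them. Nothing further is needed.
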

\begin{proof}
\begin{align*}
y \in \prod_{x\in X} \bigcap_{f\in D} f(x)
& \Leftrightarrow \forall x \in X. ~y_x \in \bigcap_{f\in D} f(x)
\\
& \Leftrightarrow \forall x \in X. ~\forall f \in D. ~ y_x \in f(x)
\\
& \Leftrightarrow \forall f \in D.~ \forall x \in X.~ y_x \in f(x)
\\
& \Leftrightarrow \forall f \in D. ~y \in \prod_{x\in X}  f(x)
\\
& \Leftrightarrow y \in \bigcap_{f\in D} \prod_{x\in X}  f(x)
\tag*\qedhere
\end{align*}
\end{proof}

\begin{lemma}\label{lem:bot-continuous}
$(-)_\bot \colon (X \to \C(Y)) \to X_\bot \to \C(Y)$ is Scott-continuous with respect to $\sqle^\bullet_\C$.
\end{lemma}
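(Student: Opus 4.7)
The plan is to take an arbitrary $\sqle^\bullet_\C$-directed set $D \subseteq X \to \C(Y)$ and verify the equation
\[
  (\sup D)_\bot = \sup_{f \in D} f_\bot
\]
in $X_\bot \to \C(Y)$. Since both order relations are pointwise extensions of $\sqle_\C$, and (as observed in \Cref{sec:prelim}) $\sqle_\C$ coincides with reverse set inclusion, suprema in either function space exist and are computed pointwise by intersection: $(\sup D)(x) = \bigcap_{f \in D} f(x)$. The crucial prerequisite is that this intersection is still an element of $\C(Y)$: nonempty by the finite intersection property in the compact space $\D(Y_\bot)$, and convex, topologically closed, and up-closed because each of these properties is preserved under arbitrary intersection. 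This is exactly the argument already used to show that $\langle \C(\Sigma), \sqsubseteq_\C\rangle$ is a DCPO, so I would invoke it rather than reprove it.

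With suprema characterized as pointwise intersections, I would verify the desired equality pointwise on $X_\bot$ in two cases. For $x \in X$, the definition of $(-)_\bot$ gives $(\sup D)_\bot(x) = (\sup D)(x) = \bigcap_{f\in D} f(x) = \bigcap_{f \in D} f_\bot(x) = \bigl(\sup_{f \in D} f_\bot\bigr)(x)$. For $x = \bot$, the left-hand side is $\upcl\{\delta_\bot\}$ by definition, while the right-hand side is $\bigcap_{f\in D}\upcl\{\delta_\bot\} = \upcl\{\delta_\bot\}$, using that $D$ is nonempty (as directed sets are by convention).

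I do not anticipate a serious obstacle here: the proof is essentially bureaucratic. The only point that needs any care is justifying that the pointwise intersection yields a bona fide element of the function space, but that reduces to the DCPO structure of $\C(Y)$ already established. If needed, \Cref{lem:prod-inter} could be cited to commute the pointwise construction with the intersection over $D$, but the two-case pointwise verification above is self-contained.
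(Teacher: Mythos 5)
Your proof is correct and follows essentially the same route as the paper's: a two-case pointwise verification that $(\sup D)_\bot$ and $\sup_{f\in D} f_\bot$ agree, splitting on $x\in X$ versus $x=\bot$. The only cosmetic differences are that the paper also records the (trivial) monotonicity of $(-)_\bot$ as a preliminary step, whereas you instead make the suprema concrete as pointwise intersections and justify their membership in $\C(Y)$ via the already-established DCPO structure; both are fine.
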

\begin{proof}
First, we show that $(-)_\bot$ is monotone. Suppose that $f \sqle_\C^\bullet g$. Now take any $x\in X_\bot$. If $x\in X$, then we have:
\[
f_\bot(x)
= f(x)
\sqle_\C g(x) = g_\bot(x)
\]
If instead $x=\bot$, then we have:
\[
  f_\bot(\bot) = \unit(\bot) = g_\bot(\bot)
\]
So in both cases $f_\bot \sqle_\C^\bullet g_\bot$. Now we show that $(-)_\bot$ preserves suprema. Let $D \subseteq X \to \C(Y_\bot)$ be a directed set, and take any $x\in X_\bot$. If $x\in X$, then we have:
\[
  \sup_{f\in D} f_\bot(x) = \sup_{f\in D} f(x) = (\sup D)(x) = (\sup D)_\bot(x)
\]
If, alternatively, $x = \bot$, then we have:
\[
  \sup_{f\in D} f_\bot(\bot)
  = \sup_{f\in D} \unit(\bot)
  = \unit(\bot)
  = (\sup D)_\bot(\bot)
\]
So, in either case $\sup_{f \in D} f_\bot(x) = (\sup D)_\bot(x)$.
\end{proof}

\begin{lemma}\label{lem:kleisli-continuous}
$(-)^\dagger \colon (X \to \C(Y)) \to \C(X) \to \C(Y)$ is Scott-continuous with respect to $\sqle^\bullet_\C$ in its first argument.
\end{lemma}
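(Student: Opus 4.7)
The plan is to establish the two parts of Scott-continuity in turn: monotonicity and preservation of directed suprema. For monotonicity, suppose $f \sqsubseteq_\C^\bullet g$. Then $f(x) \supseteq g(x)$ for every $x$, and hence $f_\bot(x) \supseteq g_\bot(x)$ for every $x \in X_\bot$. From the definition of $(-)^\dagger$, any decomposition witnessing membership in $g^\dagger(S)$ also witnesses membership in $f^\dagger(S)$, so $f^\dagger(S) \supseteq g^\dagger(S)$, i.e., $f^\dagger(S) \sqsubseteq_\C g^\dagger(S)$.

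For continuity, let $D \subseteq X \to \C(Y)$ be directed and fix $S \in \C(X)$. Using the equivalence $\sqsubseteq_\C\,=\,\supseteq$ on up-closed sets, it suffices to show
\[
(\sup D)^\dagger(S) \;=\; \bigcap_{f \in D} f^\dagger(S).
\]
By \Cref{lem:bot-continuous}, $(\sup D)_\bot(x) = \bigcap_{f \in D} f_\bot(x)$ for every $x \in X_\bot$. The inclusion $(\subseteq)$ is then immediate: if $\xi = \sum_{x \in \supp(\mu)} \mu(x)\cdot\nu_x \in (\sup D)^\dagger(S)$ with $\mu \in S$ and each $\nu_x \in \bigcap_{f \in D} f_\bot(x)$, then the same decomposition witnesses $\xi \in f^\dagger(S)$ for every $f \in D$.

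The reverse inclusion is the nontrivial step; a priori, distinct $f \in D$ may furnish entirely different decompositions of a common $\xi$, and we must somehow amalgamate them into a single decomposition whose $\nu_x$ lie in $\bigcap_f f_\bot(x)$. I would do this by a compactness/finite-intersection argument. Fix $\xi \in \bigcap_{f \in D} f^\dagger(S)$ and, for each $f \in D$, let
\[
A_f \;\triangleq\; \Bigl\{\, (\mu,(\nu_x)_{x\in X_\bot}) \;\Big|\; \mu \in S,\ \forall x.\ \nu_x \in f_\bot(x),\ \textstyle\sum_{x} \mu(x)\cdot\nu_x = \xi \,\Bigr\}
\]
as a subset of the ambient space $\D(X_\bot) \times \prod_{x\in X_\bot} \D(Y_\bot)$. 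This ambient space is compact as a closed subset of a product of copies of $[0,1]$ (Tychonoff), and each $A_f$ is a closed subset of it: $S$ and each $f_\bot(x)$ are closed, and the weighted-sum map $(\mu,(\nu_x)) \mapsto \sum_x \mu(x)\cdot\nu_x$ is continuous (this is exactly the continuity argument used earlier when proving that $f^\dagger(S)$ is closed). Each $A_f$ is nonempty because $\xi \in f^\dagger(S)$, and the monotonicity established above combined with directedness of $D$ yields the finite intersection property: given $f_1,\dots,f_n \in D$, pick $g \in D$ above them all, whence $A_g \subseteq \bigcap_i A_{f_i}$.

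By compactness, $\bigcap_{f\in D} A_f$ is nonempty, and any member $(\mu,(\nu_x))$ supplies the required decomposition of $\xi$ with $\mu \in S$ and $\nu_x \in \bigcap_{f\in D} f_\bot(x) = (\sup D)_\bot(x)$, proving $\xi \in (\sup D)^\dagger(S)$. The main obstacle will be carefully justifying that $A_f$ is closed in the product topology—in particular, phrasing the sum $\sum_x \mu(x)\cdot\nu_x$ so that continuity is transparent even when $\supp(\mu)$ varies with $\mu$—since the rest of the argument is a standard FIP deployment once that continuity is in hand.
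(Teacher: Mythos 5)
Your proof is correct, and it is in fact more careful than the paper's own. The paper establishes this lemma by a chain of set-comprehension rewrites: using $\sup D=\bigcap D$ and \Cref{lem:prod-inter} it turns the defining condition of $(\sup D)^\dagger(S)$ into ``$\mu\in S$ and $\forall f\in D.\ \nu\in\prod_x f_\bot(x)$'', and then pulls the universal quantifier out of the set-builder to obtain $\bigcap_{f\in D} f^\dagger(S)$, justifying the move only by the propositional law $\phi\wedge\forall x\,\psi\equiv\forall x\,(\phi\wedge\psi)$. That law handles moving $\mu\in S$ under the quantifier, but the final step $\{h(w)\mid\forall f.\,P(f,w)\}=\bigcap_f\{h(w)\mid P(f,w)\}$ only gives the inclusion $\subseteq$ for free; the reverse inclusion is exactly the amalgamation problem you isolate---an element lying in every $f^\dagger(S)$ may be witnessed by a different decomposition for each $f$---and the paper's proof does not address it. Your compactness/finite-intersection-property argument supplies precisely the missing justification: the witness sets $A_f$ are closed (by the same continuity-of-the-weighted-sum-map argument the paper uses to show $f^\dagger(S)$ is topologically closed, once you index the tuples $(\nu_x)$ over all of $X_\bot$ rather than over the varying $\supp(\mu)$, which you rightly flag), nonempty, and have the FIP by your monotonicity observation together with directedness of $D$, so compactness of the ambient product yields one decomposition valid for every $f$ at once, i.e.\ with each $\nu_x\in\bigcap_f f_\bot(x)=(\sup D)_\bot(x)$. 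Where the paper buys brevity at the cost of a genuine gap, your route buys a complete argument; the only external input is compactness of $\D(X_\bot)$ and closedness of $S$ and the $f_\bot(x)$ in the product topology, all of which the paper already assumes elsewhere, so you import nothing new.
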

\begin{proof}
\begin{align*}
(\sup D)^\dagger(S)
&= \set{\smashoperator[r]{\sum_{x\in \supp(\mu)}}\mu(x)\cdotp\nu_x}{\mu\in S,\ \nu\in\prod_{x\in X} (\sup D)_\bot(x)}\\
\intertext{By \Cref{lem:bot-continuous}.}
&= \set{\smashoperator[r]{\sum_{x\in  \supp(\mu)}}\mu(x)\cdotp\nu_x}{\mu\in S,\ \nu\in\prod_{x\in X} \sup_{f\in D} f_\bot(x)}\\
&= \set{\smashoperator[r]{\sum_{x\in  \supp(\mu)}}\mu(x)\cdotp\nu_x}{\mu\in S,\ \nu\in\prod_{x\in X} \bigcap_{f\in D} f_\bot(x)}\\
\intertext{By \Cref{lem:prod-inter}.}
&= \set{\smashoperator[r]{\sum_{x\in  \supp(\mu)}}\mu(x)\cdotp\nu_x}{\mu\in S,\ \nu\in\bigcap_{f\in D}\prod_{x\in X} f_\bot(x)}\\
&= \set{\smashoperator[r]{\sum_{x\in  \supp(\mu)}}\mu(x)\cdotp\nu_x}{\mu\in S,\ \forall f\in D. ~ \nu\in\prod_{x\in X} f_\bot(x)}\\
&= \set{\smashoperator[r]{\sum_{x\in  \supp(\mu)}}\mu(x)\cdotp\nu_x}{\forall f\in D. ~ (\mu\in S,\ \nu\in\prod_{x\in X} f_\bot(x))}\\
&= \bigcap_{f\in D} \set{\smashoperator[r]{\sum_{x\in  \supp(\mu)}}\mu(x)\cdotp\nu_x}{\mu\in S,\ \nu\in\prod_{x\in X} f_\bot(x)}\\
&= \bigcap_{f\in D} f^\dagger(S)
= \sup_{f\in D} f^\dagger(S)
\end{align*}
The third-to-last inference is justified by the logical rule $\phi\wedge\forall x\ \psi \equiv \forall x\ (\phi\wedge\psi)$ provided $x$ is not free in $\phi$.
\end{proof}

\begin{lemma}[Scott Continuity]\label{lem:scott-continuity}
The function $\Phi_{\langle C, e\rangle} \colon \Sigma\to\C(\Sigma)$ is Scott continuous with respect to $\sqsubseteq_{\C}^\bullet$ (the pointwise order).
\end{lemma}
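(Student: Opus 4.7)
The plan is to reduce Scott continuity of $\Phi_{\langle C,e\rangle}$ to the Scott continuity results already established for the Kleisli extension and the $(-)_\bot$ operator, by working pointwise on $\sigma \in \Sigma$ and doing case analysis on $\de{e}(\sigma)$. Because both $\sqsubseteq_\C^\bullet$ and its supremum are defined pointwise, it suffices to verify that for every fixed $\sigma$, the map $f \mapsto \Phi_{\langle C,e\rangle}(f)(\sigma)$ is Scott continuous into $\langle\C(\Sigma), \sqsubseteq_\C\rangle$.

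First I would establish monotonicity. Assume $f \sqsubseteq_\C^\bullet g$ and fix $\sigma$. If $\de{e}(\sigma) = \fls$ then $\Phi(f)(\sigma) = \eta(\sigma) = \Phi(g)(\sigma)$ and we are done. If $\de{e}(\sigma) = \tru$ then $\Phi(f)(\sigma) = f_\bot^\dagger(\de{C}(\sigma))$ and $\Phi(g)(\sigma) = g_\bot^\dagger(\de{C}(\sigma))$; by \Cref{lem:bot-continuous}, $f_\bot \sqsubseteq_\C^\bullet g_\bot$, and then monotonicity of the Kleisli extension in its first argument (the monotone part of \Cref{lem:kleisli-continuous}, which amounts to reverse inclusion preservation) yields $f_\bot^\dagger(\de{C}(\sigma)) \sqsubseteq_\C g_\bot^\dagger(\de{C}(\sigma))$.

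Next I would show preservation of directed suprema. Let $D \subseteq \Sigma \to \C(\Sigma)$ be directed with supremum $\sup D$. Fix $\sigma$ and split on $\de{e}(\sigma)$. If $\de{e}(\sigma) = \fls$, then $\Phi(f)(\sigma) = \eta(\sigma)$ for every $f$, so the chain is constant and both sides equal $\eta(\sigma)$. If $\de{e}(\sigma) = \tru$, then I would chain the two already-proved continuity results:
\[
  \Phi(\sup D)(\sigma) \;=\; (\sup D)_\bot^\dagger(\de{C}(\sigma)) \;=\; \bigl(\sup_{f\in D} f_\bot\bigr)^\dagger(\de{C}(\sigma)) \;=\; \sup_{f\in D}\, f_\bot^\dagger(\de{C}(\sigma)) \;=\; \sup_{f\in D}\Phi(f)(\sigma),
\]
where the second equality is \Cref{lem:bot-continuous} and the third is \Cref{lem:kleisli-continuous}. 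One small bookkeeping point is that $\{f_\bot \mid f \in D\}$ is itself directed in $\Sigma_\bot \to \C(\Sigma)$; this follows immediately from directedness of $D$ together with the monotonicity part of \Cref{lem:bot-continuous}, so \Cref{lem:kleisli-continuous} applies.

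I do not expect a real obstacle here: the content of the lemma is essentially that $\Phi$ is built from Scott-continuous pieces ($(-)_\bot$ and $(-)^\dagger$) by composition, pairing with the constant function $\de{C}$, and a case split that is independent of $f$. The only thing to be slightly careful about is the case split itself—since it does not depend on $f$, it commutes with the supremum and does not introduce any subtlety. The lemma is then a direct corollary of the preceding two lemmas.
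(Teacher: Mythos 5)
Your proposal is correct and follows essentially the same route as the paper's proof: work pointwise in $\sigma$, observe that the case split on $\de{e}(\sigma)$ is independent of $f$ and so commutes with the supremum, and then discharge the $\de{e}(\sigma)=\tru$ case by appealing to the Scott continuity of $(-)_\bot$ and $(-)^\dagger$ (\Cref{lem:bot-continuous} and \Cref{lem:kleisli-continuous}). The explicit monotonicity check and the remark that $\{f_\bot \mid f\in D\}$ is directed are harmless extra bookkeeping that the paper leaves implicit.
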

\begin{proof}
Suppose that $D \subseteq \C(\Sigma)$ is a directed set, then for any $\sigma\in\Sigma$ we have:
\begin{align*}
\sup_{f\in D} \Phi_{\langle C, e\rangle}(f)(\sigma)
&= \sup_{f\in D}\left\{
  \begin{array}{ll}
    f^\dagger(\de{C}(\sigma)) & \text{if} ~ \de{e}(\sigma) = \tru \\
    \eta(\sigma) & \text{if}~ \de{e}(\sigma) = \fls
  \end{array}
\right.
\\
\intertext{Since $f$ is not free in $e$ or $\sigma$.}
&= \left\{
  \begin{array}{ll}
    \sup_{f\in D} f^\dagger(\de{C}(\sigma)) & \text{if} ~ \de{e}(\sigma) = \tru \\
    \eta(\sigma) & \text{if}~ \de{e}(\sigma) = \fls
  \end{array}
\right.
\\
\intertext{By \Cref{lem:kleisli-continuous}.}
&= \left\{
  \begin{array}{ll}
    (\sup D)^\dagger(\de{C}(\sigma)) & \text{if} ~ \de{e}(\sigma) = \tru \\
    \eta(\sigma) & \text{if}~ \de{e}(\sigma) = \fls
  \end{array}
\right.
\\
&= \Phi_{\langle C, e\rangle}(\sup D)(\sigma)
\end{align*}
\end{proof}

\subsection{Countability of State Space}

\begin{lemma}
For any program $C$ and initial state $\sigma\in\Sigma$, $\de{C}(\sigma)$ can only reach countably many states. That is, $\tau\in \Sigma_\bot$ for any distribution $\mu \in \de{C}(\sigma)$ and any $\tau\in\supp(\mu)$, and $\Sigma_\bot$ is a countable set.
\end{lemma}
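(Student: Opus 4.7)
The plan is to observe that the lemma is essentially a consequence of the setup of the state space together with the typing of the denotational semantics, so the argument is largely a counting argument with a typing check on top. The first claim, that $\tau \in \Sigma_\bot$ for every $\tau$ in the support of any $\mu \in \de{C}(\sigma)$, is immediate from the typing $\de{C} \colon \Sigma \to \C(\Sigma)$ established in \Cref{fig:semantics}, since by definition $\C(\Sigma) \subseteq \bb{2}^{\D(\Sigma_\bot)}$, so $\mu \in \D(\Sigma_\bot)$ and $\supp(\mu) \subseteq \Sigma_\bot$.

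The second claim, that $\Sigma_\bot$ is countable, is where the actual content lies, and it reduces to the definitions in \Cref{sec:prelim}. I would argue it in three steps. First, $\mathsf{Val}$ is countable, being the union of the two-element set of Booleans with the set of rationals, both of which are countable. Second, $\mathsf{Var}$ is finite by assumption, so $\Sigma = \mathsf{Var} \to \mathsf{Val}$ is a finite product of countable sets and hence countable. Third, $\Sigma_\bot = \Sigma \cup \{\bot\}$ adds a single point to a countable set, so it remains countable.

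Combining the two claims, any state $\tau$ appearing in the support of any distribution in $\de{C}(\sigma)$ lies in the countable set $\Sigma_\bot$, which is exactly the conclusion. There is no real obstacle here; the only mild subtlety is to be explicit that $\mathsf{Var}$ is assumed finite in \Cref{sec:prelim}, since otherwise a function space $\mathsf{Var} \to \mathsf{Val}$ could be uncountable. Once that is noted, the argument is purely set-theoretic bookkeeping.
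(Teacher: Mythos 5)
Your countability argument for $\Sigma_\bot$ is exactly the paper's: $\mathsf{Val}$ is countable, $\mathsf{Var}$ is finite, so $\Sigma = \mathsf{Var}\to\mathsf{Val}$ is countable and adding $\bot$ changes nothing. (The paper additionally notes that $\mathsf{Val}$ is later extended with finite-length lists for the SAT case study, which preserves countability --- a detail worth a sentence but not a real omission.)

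Where you diverge is the support claim, and here your shortcut is where the paper does all of its work. You declare $\supp(\mu)\subseteq\Sigma_\bot$ ``immediate from the typing $\de{C}\colon\Sigma\to\C(\Sigma)$,'' but that type annotation in \Cref{fig:semantics} is itself a claim that has to be verified, and this lemma is part of that verification --- so the appeal is circular as written. The paper instead proves the claim by structural induction on $C$. Most cases are indeed routine preservation arguments (Kleisli extension, $\oplus_p$, $\nd$, and for loops the observation that the iterates are intersected with $\bot_\C^\bullet(\sigma)=\D(\Sigma_\bot)$, so nothing outside $\D(\Sigma_\bot)$ can survive), but the assignment case carries genuine content: one must check that $\de{e}(\sigma)\in\mathsf{Val}$, i.e.\ that expressions built from Boolean logic, rational arithmetic, and list operations cannot produce a value outside $\mathsf{Val}$, so that $\sigma[x\coloneqq\de{e}(\sigma)]$ is again a state in $\Sigma$. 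Your argument never touches this point, and it is the one place where the reachable-state set could in principle escape the countable $\Sigma_\bot$. To repair the proposal, either carry out the induction (or at least the assignment and loop cases) or explicitly cite an independently established well-typedness result for $\de{\cdot}$ that already covers expression evaluation; as it stands, the first half of your proof assumes what the lemma is meant to establish.
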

\begin{proof}
We first recall that:
\[
  \Sigma_\bot = \Sigma \cup \{\bot\} = (\mathsf{Var} \to \mathsf{Val}) \cup \{\bot\}
\]
Recall from \Cref{sec:prelim} that $\mathsf{Val}$ contains Booleans and rationals, so it is a countable set. We later add finite length lists in \Cref{app:list-sem}, which preserves countability. The set $\mathsf{Var}$ is finite, making $\Sigma_\bot$ as a whole countable.

The proof is by induction of the structure of $C$. In the base case where $C=\skp$, the claim follows trivially, since the only reachable state is $\sigma$. If $C = x\coloneqq e$, then the program reaches the state $\sigma[ x \coloneqq \de{e}(\sigma) ]$. We presume that $e$ can only perform Boolean logic, rational arithmetic, and operations on finite length lists (see \Cref{app:list-sem}), meaning that $\de{e}(\sigma) \in \mathsf{Val}$, and so $\sigma[ x \coloneqq \de{e}(\sigma) ] \in \Sigma$.

For sequential composition, we have:
\[
  \de{C_1 \fatsemi C_2}(\sigma) =
  \left\{
    \sum_{\tau\in\supp(\mu)} \mu(\tau) \cdot \nu_\tau
    \;\; \Big| \;\;
    \mu\in\de{C_1}(\sigma),
    \forall\tau\in\supp(\mu).~
    \nu_\tau \in \de{C_2}(\tau)
  \right\}
\]
Bu the induction hypothesis, we know that $\supp(\mu) \subseteq \Sigma_\bot$ for each $\mu \in \de{C_1}(\sigma)$. By the induction hypothesis again, we get that $\supp(\nu_\tau) \subseteq \Sigma_\bot$ for each $\tau\in\supp(\mu)$. Since any distribution in $\de{C_1 \fatsemi C_2}(\sigma)$ is obtained from combinations of those $\nu_\tau$, all the states are again contained in $\Sigma_\bot$.

In the other two composition operators $\de{C_1\oplus_p C_2}(\sigma)$ and $\de{C_1\nd C_2}(\sigma)$ and if statements, we use the induction hypothesis to conclude that the claim holds for $\de{C_1}(\sigma)$ and $\de{C_2}(\sigma)$. The final distributions are then convex combinations of those smaller distributions, so the claim again holds.

Finally, for loops we have:
\[
  \de{\whl eC}(\sigma)
  = \bigcap_{n\in\mathbb{N}} \Phi^n_{\langle C,e\rangle}(\bot_\C^\bullet)(\sigma)
  = \bot_\C^\bullet(\sigma) \cap \Phi_{\langle C,e\rangle}(\bot_\C^\bullet)(\sigma) \cap \cdots
  \subseteq \D(\Sigma_\bot)
\]
More precisely, since the semantics involves taking an intersection with $\bot_\C^\bullet(\sigma) = \D(\Sigma_\bot)$, we know that the resulting set cannot contain terms outside of $\D(\Sigma_\bot)$, meaning that the support of every distribution in this sequence must by contained in $\Sigma_\bot$.
\end{proof}

\section{Logical Rules}

\subsection{Nondeterministic and Probabilistic Choice}

\begin{lemma}\label{lem:dagger-upset}
For any function $f\colon \Sigma\to \C(\Sigma)$ and set $S\in\C(\Sigma)$:
\[
  f^\dagger(S) = \bigcup_{\mu\in S} f^\dagger(\upset\mu)
\]
\end{lemma}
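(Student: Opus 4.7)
The plan is to prove the equality by two set inclusions, both of which fall out directly from the definition of $f^\dagger$ together with up-closedness of $S$.

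For the $\supseteq$ inclusion, I would observe that since $S \in \C(\Sigma)$, the set $S$ is up-closed, so for every $\mu \in S$ we have $\upset\mu \subseteq S$. The operator $f^\dagger$ is monotone with respect to set inclusion---this is immediate from its definition as a union-like quantification over $\mu \in (-)$---so $f^\dagger(\upset\mu) \subseteq f^\dagger(S)$. Taking the union over all $\mu \in S$ then gives $\bigcup_{\mu \in S} f^\dagger(\upset\mu) \subseteq f^\dagger(S)$.

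For the $\subseteq$ inclusion, I would unfold the definition: any $\xi \in f^\dagger(S)$ has the shape
\[
  \xi = \sum_{x \in \supp(\mu)} \mu(x) \cdot \nu_x
\]
for some witnessing $\mu \in S$ and a choice of $\nu_x \in f_\bot(x)$ for each $x \in \supp(\mu)$. Since $\mu \in \upset\mu$ trivially, the same expression also witnesses $\xi \in f^\dagger(\upset\mu)$. Hence $\xi$ lies in the union $\bigcup_{\mu' \in S} f^\dagger(\upset{\mu'})$, completing the argument.

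Neither direction poses a genuine obstacle; the only nontrivial fact in play is that $S$ is up-closed (which is built into the definition of $\C(\Sigma)$), and everything else is just unfolding the definition of $f^\dagger$ on each side and invoking monotonicity. The lemma is essentially a bookkeeping statement, used elsewhere to reduce reasoning about $f^\dagger$ on arbitrary $S \in \C(\Sigma)$ to the pointwise case of $f^\dagger(\upset\mu)$ generated by a single distribution $\mu$.
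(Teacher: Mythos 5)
Your proof is correct and is essentially the paper's argument: the paper establishes the same identity by an equational chain, rewriting $f^\dagger(S)$ as $\bigcup_{\mu\in S}\bigcup_{\mu'\in\upset\mu}(\cdots)$ and justifying that re-indexing by exactly the two facts you isolate ($\upset\mu\subseteq S$ by up-closedness for one inclusion, and $\mu\in\upset\mu$ for the other). Presenting it as two inclusions via monotonicity of $f^\dagger$ versus idempotence of union is only a cosmetic difference.
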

\begin{proof}
\begin{align*}
  f^\dagger(S)
  &= \left\{ 
    \smashoperator[r]{\sum_{\sigma\in\supp(\mu)}} \mu(\sigma)\cdot \nu_\sigma
    \;\;\Big|\;\;
    \mu \in S,
    \forall \sigma\in\supp(\mu).~
    \nu_\sigma \in f_\bot(\sigma)
  \right\}
\\
  &= \bigcup_{\mu\in S}\left\{
    \smashoperator[r]{\sum_{\sigma\in\supp(\mu)}} \mu(\sigma)\cdot \nu_\sigma
    \;\;\Big|\;\;
    \forall \sigma\in\supp(\mu).~
    \nu_\sigma \in f_\bot(\sigma)
  \right\}
\\
\intertext{Now, since $S$ is up-closed, then $\upset\mu \subseteq S$ for each $\mu\in S$. Since set union is idempotent, we can expand the union as follows:}
  &= \bigcup_{\mu\in S} \bigcup_{\mu' \in \upset\mu}\left\{
    \smashoperator[r]{\sum_{\sigma\in\supp(\mu')}} \mu'(\sigma)\cdot \nu_\sigma
    \;\;\Big|\;\;
    \forall \sigma\in\supp(\mu').~
    \nu_\sigma \in f_\bot(\sigma)
  \right\}
\\
  &= \bigcup_{\mu\in S} \left\{
    \smashoperator[r]{\sum_{\sigma\in\supp(\mu)}} \mu(\sigma)\cdot \nu_\sigma
    \;\;\Big|\;\;
    \mu' \in \upset\mu,
    \forall \sigma\in\supp(\mu').~
    \nu_\sigma \in f_\bot(\sigma)
  \right\}
\\
  &= \bigcup_{\mu\in S} f^\dagger(\upset\mu)
\end{align*}
\end{proof}

\begin{lemma}
\label{lem:oplus_dist}
For any function $f \colon X\to\C(Y)$, probability $p\in[0,1]$, and sets $S,T\in \C(X)$.
\[
  f^\dagger(S \oplus_p T) = f^\dagger(S) \oplus_p f^\dagger(T)
\]
\end{lemma}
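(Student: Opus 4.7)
The plan is to prove set equality by establishing both inclusions directly from the defining formulas of $\oplus_p$ on sets and of the Kleisli extension $f^\dagger$, using the convexity and nonemptiness of each $f_\bot(x)$.

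For the inclusion $f^\dagger(S) \oplus_p f^\dagger(T) \subseteq f^\dagger(S \oplus_p T)$, I would take an arbitrary element of the left-hand side of the form $p\cdot\alpha + (1-p)\cdot\beta$, where $\alpha = \sum_{x\in\supp(\mu)}\mu(x)\cdot\rho_x$ with $\mu\in S$ and $\rho_x\in f_\bot(x)$, and $\beta = \sum_{x\in\supp(\nu)}\nu(x)\cdot\sigma_x$ with $\nu\in T$ and $\sigma_x\in f_\bot(x)$. Setting $\xi = p\mu + (1-p)\nu \in S\oplus_p T$, I would define, for each $x\in\supp(\xi)=\supp(\mu)\cup\supp(\nu)$, the distribution $\eta_x \triangleq \tfrac{p\mu(x)}{\xi(x)}\rho_x + \tfrac{(1-p)\nu(x)}{\xi(x)}\sigma_x$ (with the understanding that whichever summand has a zero numerator is simply dropped). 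Convexity of $f_\bot(x)$ gives $\eta_x\in f_\bot(x)$, and a direct computation yields $\sum_{x\in\supp(\xi)}\xi(x)\cdot\eta_x = p\cdot\alpha + (1-p)\cdot\beta$, so the element lies in $f^\dagger(S\oplus_p T)$.

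For the reverse inclusion $f^\dagger(S\oplus_p T)\subseteq f^\dagger(S)\oplus_p f^\dagger(T)$, I would start from an arbitrary element $\sum_{x\in\supp(\xi)}\xi(x)\cdot\eta_x$ of $f^\dagger(S\oplus_p T)$, where $\xi = p\mu + (1-p)\nu$ for some $\mu\in S$, $\nu\in T$, and each $\eta_x\in f_\bot(x)$. The natural split is $\alpha \triangleq \sum_{x\in\supp(\mu)}\mu(x)\cdot\eta_x$ and $\beta \triangleq \sum_{x\in\supp(\nu)}\nu(x)\cdot\eta_x$; since $\supp(\mu),\supp(\nu)\subseteq\supp(\xi)$, the relevant $\eta_x$ are already in $f_\bot(x)$, so $\alpha\in f^\dagger(S)$ and $\beta\in f^\dagger(T)$, and rearranging sums gives $p\cdot\alpha + (1-p)\cdot\beta$ equal to the original element.

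The main obstacle I foresee is the boundary cases $p=0$ and $p=1$. When $p=0$, $\xi=\nu$, so the definition of $\alpha$ above picks $\eta_x$ at $x\in\supp(\mu)\setminus\supp(\nu)$, which are not supplied by the witness. I would handle this by exploiting that each $f_\bot(x)$ is nonempty (by definition of $\C$): pick any $\eta_x'\in f_\bot(x)$ for such $x$ and define $\alpha \triangleq \sum_{x\in\supp(\mu)}\mu(x)\cdot\eta_x'$, which lies in $f^\dagger(S)$ and contributes $0$ to $p\cdot\alpha+(1-p)\cdot\beta$; symmetrically for $p=1$. Once the boundary cases and the support bookkeeping are handled, the remaining calculation is routine Fubini-style rearrangement of the sums.
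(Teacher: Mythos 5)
Your proof is correct, but it takes a genuinely different route from the paper's. The paper proves this lemma in half a dozen lines by an encoding trick: it writes $S \oplus_p T$ as $g^\dagger(\{\xi\})$ where $\xi$ is the two-point distribution on $\{1,2\}$ with $\xi(1)=p$, and $g(1)=S$, $g(2)=T$, and then the claim falls out of the monad associativity law $(f^\dagger \circ g)^\dagger = f^\dagger \circ g^\dagger$, which is established separately. Your argument is a self-contained double inclusion at the level of individual distributions. The two are closely related under the hood: your renormalized witness $\eta_x = \tfrac{p\mu(x)}{\xi(x)}\rho_x + \tfrac{(1-p)\nu(x)}{\xi(x)}\sigma_x$, justified by convexity of $f_\bot(x)$, is exactly the construction that appears inside the paper's proof of the associativity law (their \Cref{lem:convex-assoc}), specialized to a two-point index set. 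What the paper's route buys is brevity and reuse of machinery already needed elsewhere; what your route buys is independence from the monad laws and explicit visibility of where convexity and nonemptiness of $f_\bot(x)$ are used --- including the $p\in\{0,1\}$ boundary cases, which you correctly identify as the only delicate point (the paper's version absorbs these silently into the associativity law). Your handling of those cases via nonemptiness of $f_\bot(x)$ is the right fix, and the countable rearrangement of nonnegative series you defer to at the end is indeed routine.
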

\begin{proof}
Let $\xi \in \D(\{1, 2\})$ be a distribution where $\xi(1)= p$ and $\xi(2)= 1-p$. Let $g \colon \{1,2\} \to\C(X)$ be defined as $g(1)= S$ and $g(2) = T$. Now, we have:
\begin{align*}
  f^\dagger(S \oplus_p T)
  &= f^\dagger( \{ p\cdot\mu + (1-p)\cdot \nu \mid \mu \in S, \nu\in T\})
  \\
  &= f^\dagger(\{ \xi(1)\cdot \mu_1 + \xi(2) \cdot \mu_2 \mid \forall i\in\{1,2\}.\ \mu_i \in g(i) \})
  \\
  &= f^\dagger(g^\dagger(\{\xi\}))
  \intertext{By the monad laws:}
  &= (f^\dagger \circ g)^\dagger(\{\xi\})
  \\
  &= \{ \xi(1)\cdot \mu_1 + \xi(2)\cdot\mu_2 \mid\forall i\in\{1,2\}.\ \mu_i \in f^\dagger(g(i)) \}
  \\
  &= \{ p \cdot \mu_1 + (1-p)\cdot \mu_2 \mid \mu_1 \in f^\dagger(S), \mu_2 \in f^\dagger(T) \}
  = f^\dagger(S) \oplus_p f^\dagger(T)
\end{align*}

\end{proof}

%

\begin{lemma}
\label{lem:bigger-prob}
Let $\mu = p\cdot\mu_1 + (1-p) \cdot \mu_2$.
If $\mu \sqsubseteq_\D \nu$, then there exists $\nu_1$ and $\nu_2$ such that $\nu = p\cdot \nu_1 + (1-p)\cdot \nu_2$ and $\mu_i \sqsubseteq_\D \nu_i$ for $i\in\{1,2\}$.
\end{lemma}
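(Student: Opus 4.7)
The plan is to construct $\nu_1$ and $\nu_2$ explicitly by identifying the extra probability mass that $\nu$ assigns to proper states relative to $\mu$, and then splitting this extra mass between $\nu_1$ and $\nu_2$ in proportion to each $\mu_i$'s contribution to $\mu(\bot)$. The intuition is that $\nu$ differs from $\mu$ only by reclaiming some mass from $\bot$ and redistributing it over $\Sigma$; the only freedom we have when lifting this to the components $\mu_i$ is deciding how much each $\nu_i$ is allowed to take from $\mu_i(\bot)$.

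First I would define $\delta(\sigma) \triangleq \nu(\sigma) - \mu(\sigma)$ for $\sigma\in\Sigma$. By $\mu \sqsubseteq_\D \nu$ each $\delta(\sigma) \ge 0$, and since both distributions have total mass $1$ we get $D \triangleq \sum_{\sigma\in\Sigma}\delta(\sigma) = \mu(\bot) - \nu(\bot) \ge 0$. If $D = 0$ then $\mu = \nu$ everywhere and we simply take $\nu_i \triangleq \mu_i$. Otherwise $D > 0$ forces $\mu(\bot) > 0$, and I set $\alpha_i \triangleq \mu_i(\bot)/\mu(\bot)$, so that $p\alpha_1 + (1-p)\alpha_2 = 1$ by the hypothesis $p\mu_1(\bot) + (1-p)\mu_2(\bot) = \mu(\bot)$. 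Then define
\[
  \nu_i(\sigma) \triangleq \mu_i(\sigma) + \alpha_i \cdot \delta(\sigma) \text{ for } \sigma \in \Sigma,
  \qquad
  \nu_i(\bot) \triangleq \mu_i(\bot) - \alpha_i \cdot D.
\]

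The verifications are then routine algebra. Nonnegativity on $\bot$ follows from $\nu_i(\bot) = \mu_i(\bot) \cdot \nu(\bot)/\mu(\bot) \ge 0$; total mass is preserved because the $\alpha_i D$ terms exactly compensate; the order $\mu_i \sqsubseteq_\D \nu_i$ holds because $\alpha_i \delta(\sigma) \ge 0$ on $\Sigma$ (and recall $\sqsubseteq_\D$ only constrains proper states); and the equation $p\nu_1 + (1-p)\nu_2 = \nu$ splits into checking $\sigma\in\Sigma$, where $p\alpha_1 + (1-p)\alpha_2 = 1$ gives $p\mu_1(\sigma) + (1-p)\mu_2(\sigma) + \delta(\sigma) = \mu(\sigma) + \delta(\sigma) = \nu(\sigma)$, and $\bot$, where the same identity gives $\mu(\bot) - D = \nu(\bot)$.

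The one place that requires actual thought is the choice of the weights $\alpha_i$. A uniform split $\alpha_1 = \alpha_2 = 1$ satisfies the linear equation but breaks $\nu_i(\bot) \ge 0$ whenever $\mu_i(\bot) < D$, i.e.\ when some component has less mass on $\bot$ than is being reclaimed. Weighting by $\mu_i(\bot)/\mu(\bot)$ is exactly what ensures that each $\nu_i$ only siphons from its own $\bot$-budget in proportion to its share of $\mu(\bot)$, which is the real content of the lemma; everything else is bookkeeping.
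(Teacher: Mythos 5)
Your construction is exactly the paper's: the paper likewise sets $\nu_i(\sigma) = \mu_i(\sigma) + \frac{\mu_i(\bot)}{\mu(\bot)}(\nu(\sigma)-\mu(\sigma))$, which yields $\nu_i(\bot)=\mu_i(\bot)\cdot\nu(\bot)/\mu(\bot)$, identical to your $\alpha_i$-weighted split, and the verifications (nonnegativity, unit mass, the order on proper states, and the convex-combination identity via $p\alpha_1+(1-p)\alpha_2=1$) match. The proposal is correct; the only cosmetic difference is that you case-split on $D=0$ while the paper splits on $\mu(\bot)=0$, which amounts to the same thing.
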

\begin{proof}
If $\mu(\bot) = 0$, the claim is trivial since $\mu = \nu$ and thus we can simply let $\nu_i = \mu_i$ for $i\in\{1, 2\}$. If instead $\mu(\bot) > 0$, then for every $\sigma\in\Sigma$ and $i\in\{1,2\}$, let:
\[
  \nu_i(\sigma) \triangleq \mu_i(\sigma) + \frac{\mu_i(\bot)}{\mu(\bot)}\cdot(\nu(\sigma) - \mu(\sigma))
\]
First, we must establish that $\nu_i$ is a probability distribution. Note that $\nu_i(\sigma)$ is nonnegative for all $\sigma\in \Sigma$, since $\nu(\sigma) \ge \mu(\sigma)$, and thus the two terms being summed are both nonnegative. In addition, we have:
\begin{align*}
  \nu_i(\bot)
  = \mu_i(\bot) + \frac{\mu_i(\bot)}{\mu(\bot)}\cdot(\nu(\bot) - \mu(\bot))
  = \mu_i(\bot)\cdot\left(\cancel1 + \frac{\nu(\bot)}{\mu(\bot)} - \cancel1\right)
  = \mu_i(\bot)\cdot\frac{\nu(\bot)}{\mu(\bot)}
\end{align*}
Which is clearly nonnegative too. Now, we also have:
\begin{align*}
  |\nu_i|
  &= \smashoperator{\sum_{\sigma\in \supp(\nu_i)}} \nu_i(\sigma)
  \\
  \intertext{We can expand the bounds of the sum, since $\nu_i(\tau) = 0$ for any $\tau\notin\supp(\nu_i)$, and therefore including the extra terms will not affect the value.}
  &= \smashoperator{\sum_{\sigma\in \Sigma_\bot}} \nu_i(\sigma)
  \\
  &= \smashoperator{\sum_{\sigma\in \Sigma_\bot}} \mu_i(\sigma) + \frac{\mu_i(\bot)}{\mu(\bot)}\cdot(\nu(\sigma) - \mu(\sigma))
  \\
  &= \left({\sum_{\sigma\in \Sigma_\bot}} \mu_i(\sigma)\right) + \frac{\mu_i(\bot)}{\mu(\bot)}\cdot
    \left(\left({\sum_{\sigma\in \Sigma_\bot}}\nu(\sigma)\right) -
    \left({\sum_{\sigma\in \Sigma_\bot}}\mu(\sigma)\right)\right)
  \\
  &= 1 + \frac{\mu_i(\bot)}{\mu(\bot)}\cdot(1 - 1)
  = 1 + \frac{\mu_i(\bot)}{\mu(\bot)}\cdot 0
  = 1
\end{align*}
So, since all the points are nonnegative and the total mass of $\nu_i$ is 1, then all the points are in-bounds and $\nu_i$ is a valid distribution. In addition, since $\nu(\sigma) \ge\mu(\sigma)$ for all $\sigma\in\Sigma$, then $\nu(\sigma) -\mu(\sigma) \ge 0$ and so:
\[
  \nu_i(\sigma)
  = \mu_i(\sigma) + \frac{\mu_i(\bot)}{\mu(\bot)} \cdot (\nu(\sigma) - \mu(\sigma))
  \ge \mu_i(\sigma) + \frac{\mu_i(\bot)}{\mu(\bot)} \cdot 0
  = \mu_i(\sigma)
\]
So we have that $\mu_i\sqsubseteq_\D \nu_i$. And finally, for any $\sigma\in \Sigma_\bot$:
\begin{align*}
  &p\cdot \nu_1(\sigma) + (1-p)\cdot\nu_2(\sigma)
  \\
  &= p\cdot \left(\mu_1(\sigma) + \frac{\mu_1(\bot)}{\mu(\bot)}\cdot(\nu(\sigma) - \mu(\sigma))\right)
  + (1-p)\cdot \left(\mu_2(\sigma) + \frac{\mu_2(\bot)}{\mu(\bot)}\cdot(\nu(\sigma) - \mu(\sigma))\right)
  \\
  &= \left(p\cdot\mu_1(\sigma) + (1-p)\cdot\mu_2(\sigma)\right)
      + (p\cdot\mu_1(\bot)+(1-p)\cdot\mu_2(\bot))\cdot \frac{\nu(\sigma) - \mu(\sigma)}{\mu(\bot)}
  \\
  &= \mu(\sigma) + \cancel{\mu(\bot)} \cdot \frac{\nu(\sigma) - \mu(\sigma)}{\cancel{\mu(\bot)}}
  = \mu(\sigma) + \nu(\sigma) - \mu(\sigma)
  = \nu(\sigma)
\end{align*}
\end{proof}

\begin{lemma}
\label{lem:upcl-prob}
\[
  \upset{p\cdot \mu_1 + (1-p)\cdot\mu_2} = \upset{\mu_1} \oplus_p \upset{\mu_2}
\]
\end{lemma}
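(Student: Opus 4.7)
The plan is to prove set equality by showing both inclusions, with the $\subseteq$ direction being a direct application of the previously established \Cref{lem:bigger-prob} and the $\supseteq$ direction following from pointwise monotonicity of convex combination.

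For the $\supseteq$ inclusion, I would take an arbitrary element $\nu$ of $\upset{\mu_1} \oplus_p \upset{\mu_2}$, which by definition has the form $\nu = p\cdot\nu_1 + (1-p)\cdot\nu_2$ with $\mu_1 \sqsubseteq_\D \nu_1$ and $\mu_2 \sqsubseteq_\D \nu_2$. Since $\sqsubseteq_\D$ is defined pointwise and convex combination with nonnegative coefficients is monotone in each argument, we immediately get $p\cdot\mu_1 + (1-p)\cdot\mu_2 \sqsubseteq_\D p\cdot\nu_1 + (1-p)\cdot\nu_2 = \nu$, so $\nu \in \upset{p\cdot\mu_1 + (1-p)\cdot\mu_2}$.

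For the $\subseteq$ inclusion, I would take $\nu \in \upset{p\cdot\mu_1 + (1-p)\cdot\mu_2}$, which means $p\cdot\mu_1 + (1-p)\cdot\mu_2 \sqsubseteq_\D \nu$. Applying \Cref{lem:bigger-prob} with $\mu = p\cdot\mu_1 + (1-p)\cdot\mu_2$ yields distributions $\nu_1, \nu_2$ with $\mu_i \sqsubseteq_\D \nu_i$ and $\nu = p\cdot\nu_1 + (1-p)\cdot\nu_2$. The first condition says $\nu_i \in \upset{\mu_i}$, and the second exhibits $\nu$ as an element of $\upset{\mu_1} \oplus_p \upset{\mu_2}$.

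There is no real obstacle here since \Cref{lem:bigger-prob} already does the hard work of splitting a larger distribution along a given convex combination structure; the only subtlety to remember is the degenerate case $p \in \{0,1\}$, which is handled uniformly by the construction in \Cref{lem:bigger-prob} (or can be checked trivially by direct inspection).
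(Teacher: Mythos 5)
Your proposal is correct and matches the paper's own proof essentially verbatim: the $\subseteq$ direction invokes \Cref{lem:bigger-prob} to split $\nu$ along the convex combination, and the $\supseteq$ direction uses pointwise monotonicity of $p\cdot(-) + (1-p)\cdot(-)$ with respect to $\sqsubseteq_\D$. No gaps.
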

\begin{proof}
First, we will show that $\upset{p\cdot \mu_1 + (1-p)\cdot\mu_2} \subseteq \upset{\mu_1} \oplus_p \upset{\mu_2}$. Take some $\nu \in \upset{p\cdot \mu_1 + (1-p)\cdot\mu_2}$. By \Cref{lem:bigger-prob} we know there must be $\nu_1$ and $\nu_2$ such that $\mu_i \sqsubseteq \nu_i$ and $\nu = p\cdot\nu_1 +(1-p)\cdot\nu_2$. So, $\nu_i \in \upset{\mu_i}$, and therefore also $\nu \in \upset{\mu_1} \oplus_p \upset{\mu_2}$.

Now, we show that $\upset{\mu_1} \oplus_p \upset{\mu_2}\subseteq \upset{p\cdot \mu_1 + (1-p)\cdot\mu_2}$. Take some $\nu \in \upset{\mu_1} \oplus_p \upset{\mu_2}$. So, there is a $\nu_1 \in \upset{\mu_1}$ and $\nu_2\in\upset{\mu_2}$ such that $\nu = p\cdot \nu_1 + (1-p)\cdot\nu_2$. Take any $\sigma\in\Sigma$, we have that:
\begin{align*}
  \nu(\sigma) &= p\cdot\nu_1(\sigma) + (1-p)\cdot\nu_2(\sigma)
  \\
  & \ge p\cdot\mu_1(\sigma) + (1-p)\cdot\mu_2(\sigma)
  \\
  & = (p\cdot \mu_1 + (1-p)\cdot\mu_2)(\sigma)
\end{align*}
Therefore $p\cdot \mu_1 + (1-p)\cdot\mu_2 \sqsubseteq_\D \nu$ and so $\nu \in \upset{p\cdot \mu_1 + (1-p)\cdot\mu_2}$.
\end{proof}

\begin{lemma}
\label{lem:convex_sum_idem}
For any sequence of distributions $(\mu_i)_{i\in I}$, any $\xi \in \D(I)$, and any assertion $\varphi$
, if $\mu_i\vDash \varphi$ for each $i\in I$, then $\sum_{i\in I} \xi(i)\cdot \mu_i \vDash \varphi$.
\end{lemma}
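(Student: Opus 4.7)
The plan is to proceed by structural induction on $\varphi$, with the nondeterministic conjunction case being the main obstacle. Before starting, I note that $\xi \in \D(I)$ has countable support, so $\sum_{i\in I}\xi(i)\cdot\mu_i$ is effectively a countable convex combination, well-defined in $\D(\Sigma_\bot)$ because $\sum_i \xi(i) = 1$ and each $|\mu_i|=1$.

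The easy cases are $\top$ (trivial), $\bot$ (vacuous, since the hypothesis is impossible), $\varphi\land\psi$ (apply the IH to each conjunct independently), and $\sure P$ (the support of a convex combination is the union of the supports of its components, all contained in $\sem P$). For $\varphi\oplus_p\psi$ with a \emph{fixed} probability $p$, each $\mu_i$ decomposes as $\mu_i = p\cdot\mu_i^1 + (1-p)\cdot\mu_i^2$ with $\mu_i^1\vDash\varphi$ and $\mu_i^2\vDash\psi$; I would then rearrange
\[
\textstyle\sum_{i\in I}\xi(i)\cdot\mu_i \;=\; p\cdot\bigl(\sum_{i\in I}\xi(i)\cdot\mu_i^1\bigr) + (1-p)\cdot\bigl(\sum_{i\in I}\xi(i)\cdot\mu_i^2\bigr),
\]
and apply the IH to each bracketed sum.

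The interesting case is $\varphi\nd\psi$. Here each $\mu_i\vDash\varphi\oplus_{p_i}\psi$ for some probability $p_i$ that \emph{depends on $i$}, so I cannot just factor out a common $p$. My strategy is to average the $p_i$ against $\xi$: let $q\triangleq\sum_{i\in I}\xi(i)\cdot p_i$, decompose $\mu_i = p_i\mu_i^1 + (1-p_i)\mu_i^2$ with $\mu_i^1\vDash\varphi$ and $\mu_i^2\vDash\psi$, and split the sum as
\[
\textstyle\sum_i \xi(i)\cdot\mu_i \;=\; q\cdot\underbrace{\sum_i \tfrac{\xi(i)p_i}{q}\cdot\mu_i^1}_{\nu_1} \;+\; (1-q)\cdot\underbrace{\sum_i \tfrac{\xi(i)(1-p_i)}{1-q}\cdot\mu_i^2}_{\nu_2}.
\]
Each re-normalized weight family is itself a distribution on $I$, so by the IH $\nu_1\vDash\varphi$ and $\nu_2\vDash\psi$, giving $\sum_i \xi(i)\cdot\mu_i\vDash\varphi\oplus_q\psi$, hence $\varphi\nd\psi$. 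The edge cases $q=0$ and $q=1$ need separate handling: if $q=0$, then $p_i=0$ whenever $\xi(i)>0$, so $\xi(i)\mu_i = \xi(i)\mu_i^2$ for those $i$, and the sum reduces to a convex combination of distributions satisfying $\psi$; I can then choose any fixed witness $\mu_{i_0}^1\vDash\varphi$ (which exists since each $\mu_i$ is decomposed this way) to write the sum as $0\cdot\mu_{i_0}^1 + 1\cdot(\sum_i \xi(i)\mu_i^2)$ and conclude $\varphi\oplus_0\psi$; the case $q=1$ is symmetric.

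The subtle point that I expect to require care is precisely this construction of $q$ and the reweighted sums, especially verifying that the rescaled weights $\xi(i)p_i/q$ and $\xi(i)(1-p_i)/(1-q)$ form valid distributions on $I$ (which follows from the definition of $q$) and that the resulting equality of measures holds pointwise on $\Sigma_\bot$. Once this bookkeeping is in place, the $\oplus_p$ and $\nd$ cases share the same template, and the induction goes through uniformly across the assertion syntax from \Cref{sec:assertions}.
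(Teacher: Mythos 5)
Your proposal is correct and follows essentially the same route as the paper's proof: structural induction on $\varphi$, with the $\nd$ case handled by averaging the per-index probabilities against $\xi$ and reweighting the component distributions (your $q$ and $\xi(i)p_i/q$ are exactly the paper's $\sum_k p_k\xi(k)$ and $\xi_1(i)$). Your explicit treatment of the degenerate cases $q=0$ and $q=1$ is a small point of extra care that the paper's write-up leaves implicit, but it does not change the argument.
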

\begin{proof}
By induction on the structure of $\varphi$
\begin{itemize}
\item $\varphi=\top$. Trivial since $\sum_{i\in I}\xi(i)\cdot\mu_i \vDash\top$ always.

\item $\varphi = \bot$. Vacuous since $\mu_i\vDash\bot$ is impossible.

\item $\varphi = \varphi_1 \land\varphi_2$. We know that $\mu_i \vDash \varphi_1$ and $\mu_i\vDash\varphi_2$ for each $i\in I$. So, by the induction hypothesis, $\sum_{i\in I}\xi(i)\cdot\mu_i \vDash\varphi_1$ and $\sum_{i\in I}\xi(i)\cdot\mu_i \vDash\varphi_2$, therefore $\sum_{i\in I}\xi(i)\cdot\mu_i \vDash\varphi_1\land\varphi_2$.


\item $\varphi = \varphi_1\nd\varphi_2$. So for each $i\in I$, there exists $(\nu_{i, j})_{j\in \{1,2\}}$ and $p_i \in [0,1]$ such that $\mu_i = p_i \cdot \nu_{i,1} + (1-p_i)\cdot \nu_{i,2}$ and $\nu_{i,j}\vDash\varphi_j$ for each $j\in \{1,2\}$. 
Now, let $\xi_1,\xi_2 \in \D(I)$ be defined as follows:
\[
  \xi_1(i) \triangleq \frac{p_i\cdot \xi(i)}{\sum_{k\in I} p_k \cdot \xi(k)}
  \qquad
  \xi_2(i) \triangleq \frac{(1-p_i)\cdot \xi(i)}{\sum_{k\in I} (1-p_k) \cdot \xi(k)}
\]
And by the induction hypothesis, we know that $\sum_{i\in I} \xi_j(i)\cdot \nu_{i, j} \vDash \varphi_j$.
Now, observe that:
\begin{align*}
\sum_{i\in I}\xi(i) \cdot \mu_i
&= \sum_{i\in I}\xi(i) \cdot (p_i\cdot \nu_{i,1} + (1-p_i)\cdot \nu_{i,2})
\\
&= \sum_{i\in I} p_i\cdot\xi(i) \cdot \nu_{i,1} +
    \sum_{i\in I} (1-p_i)\cdot \xi(i)\cdot \nu_{i,2}
\\
&= \left(\sum_{k\in I} p_k\cdot \xi(k)\right)\cdot \sum_{i\in I} \xi_1(i) \cdot \nu_{i,1} +
    \left(\sum_{k\in I} (1-p_k)\cdot \xi(k)\right)\cdot \sum_{i\in I} \xi_2(i)\cdot \nu_{i,2}
\end{align*}
Therefore $\sum_{i\in I}\xi(i) \cdot \mu_i \vDash \varphi_1 \nd \varphi_2$.


\item $\varphi = \varphi_1\oplus_p\varphi_2$. We know that $\mu_i = p\cdot\mu_{i,1} +(1-p)\cdot\mu_{i,2}$ such that $\mu_{i,j}\vDash\varphi_j$ for $j\in\{1,2\}$ and $i\in I$. By the induction hypothesis, we therefore get that $\sum_{i\in I} \xi(i)\cdot \mu_{i,j}\vDash \varphi_j$ for each $j\in \{1,2\}$. Now, we have:
\begin{align*}
\sum_{i\in I} \xi(i)\cdot \mu_i
&= \sum_{i\in I} \xi(i) \cdot (p\cdot \mu_{i,1} + (1-p)\cdot \mu_{i,2}
\\
&= p\cdot\sum_{i\in I} \xi(i) \cdot \mu_{i,1} + (1-p)\cdot \sum_{i\in I} \xi(i) \cdot\mu_{i,2})
\end{align*}
Therefore $\sum_{i\in I} \xi(i)\cdot \mu_i \vDash \varphi_1 \oplus_p \varphi_2$.

\item $\varphi=P$. We know that $\supp(\mu_i) \subseteq \sem P$ for each $i\in I$. Therefore, we have:
\[
\supp(\sum_{i\in I}\xi(i)\cdot \mu_i)
= \bigcup_{i\in I} \supp(\mu_i)
\subseteq \bigcup_{i\in I} \sem P
= \sem P
\]
Therefore $\sum_{i\in I}\xi(i)\cdot \mu_i \vDash P$.\qedhere
\end{itemize}
\end{proof}

\subsection{Soundness of the Sequential Proof Rules}
\label{app:soundness}

\begin{lemma}[Monotonicity] \label{lem:up-satisfy}
If $\mu\vDash\varphi$, then $\nu\vDash\varphi$ for any $\nu\in\D(\Sigma_\bot)$ such that $\mu\sqsubseteq_{\D} \nu$.
\end{lemma}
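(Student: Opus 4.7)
The plan is to proceed by induction on the structure of the assertion $\varphi$, invoking \Cref{lem:bigger-prob} in the probabilistic cases and exploiting maximality of distributions with no mass on $\bot$ in the atomic case.

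\textbf{Base cases.} For $\varphi = \top$ the statement is vacuous, and for $\varphi = \bot$ there is nothing to show since no distribution satisfies $\bot$. For $\varphi = \sure P$, the key observation is that $\mu \vDash \sure P$ implies $\supp(\mu) \subseteq \sem P \subseteq \Sigma$, so in particular $\mu(\bot) = 0$. Since $\mu$ and $\nu$ are both proper distributions, $\sum_{\sigma \in \Sigma_\bot}\mu(\sigma) = \sum_{\sigma \in \Sigma_\bot}\nu(\sigma) = 1$, and the pointwise inequality $\mu(\sigma) \le \nu(\sigma)$ for $\sigma \in \Sigma$ combined with $\mu(\bot) = 0$ forces $\mu(\sigma) = \nu(\sigma)$ on $\Sigma$ and $\nu(\bot) = 0$. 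Hence $\nu = \mu$ and $\supp(\nu) \subseteq \sem P$. This is exactly the remark, made after the definition of up-closure in \Cref{sec:prelim}, that $\upset\mu = \{\mu\}$ whenever $\mu(\bot) = 0$.

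\textbf{Conjunction.} If $\mu \vDash \varphi_1 \land \varphi_2$, then $\mu \vDash \varphi_i$ for $i \in \{1,2\}$, and by the induction hypothesis $\nu \vDash \varphi_i$ as well, hence $\nu \vDash \varphi_1 \land \varphi_2$.

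\textbf{Probabilistic and nondeterministic conjunctions.} Suppose $\mu \vDash \varphi_1 \oplus_p \varphi_2$, so $\mu = p \cdot \mu_1 + (1-p)\cdot\mu_2$ with $\mu_i \vDash \varphi_i$. Applying \Cref{lem:bigger-prob} to $\mu \sqsubseteq_\D \nu$, we obtain $\nu_1, \nu_2$ with $\nu = p\cdot \nu_1 + (1-p)\cdot\nu_2$ and $\mu_i \sqsubseteq_\D \nu_i$. By the induction hypothesis applied to each $\varphi_i$, we conclude $\nu_i \vDash \varphi_i$, so $\nu \vDash \varphi_1 \oplus_p \varphi_2$. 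The case $\varphi_1 \nd \varphi_2$ is identical, except that we first extract the particular $p \in [0,1]$ witnessing $\mu \vDash \varphi_1 \oplus_p \varphi_2$ and then reuse it after invoking \Cref{lem:bigger-prob}.

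\textbf{Main obstacle.} The only delicate point is the atomic case $\sure P$, where one must recognize that $\mu \vDash \sure P$ already pins $\mu$ to be maximal in $\sqsubseteq_\D$ (because $P$ ranges over proper states only), so the apparent flexibility offered by $\mu \sqsubseteq_\D \nu$ collapses. The other cases are routine once \Cref{lem:bigger-prob} is available to lift the inequality $\mu \sqsubseteq_\D \nu$ through convex combinations.
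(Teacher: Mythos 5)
Your proof is correct and follows essentially the same route as the paper's: structural induction on $\varphi$, with \Cref{lem:bigger-prob} handling the $\oplus_p$ (and hence $\nd$) cases and the maximality of $\bot$-free distributions handling $\sure P$. The only nitpick is terminological: the $\top$ case is trivially true rather than ``vacuous'' (vacuity is the right word for the $\bot$ case, which you do handle correctly).
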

\begin{proof}
By induction on the structure of $\varphi$.
\begin{itemize}

\item $\varphi = \top$. Trivial since $\nu\vDash\top$ always.

\item $\varphi = \bot$. Vacuous, since $\mu\vDash\bot$ is impossible.

\item $\varphi = \varphi_1 \land \varphi_2$. We know that $\mu\vDash\varphi_1$ and $\mu\vDash\varphi_2$. Now take any $\nu$ such that $\mu\sqsubseteq_\D \nu$. By the induction hypothesis, we know that $\nu\vDash\varphi_1$ and $\nu\vDash\varphi_2$, therefore $\nu\vDash\varphi_1\land\varphi_2$.

\item $\varphi = \varphi_1 \oplus_p\varphi_2$. We know that $\mu = p\cdot\varphi_1 + (1-p)\cdot\varphi_2$ such that $\mu_i\vDash\varphi_i$ for $i\in\{1,2\}$. Now take any $\nu$ such that $\mu\sqsubseteq_\D\nu$. By \Cref{lem:bigger-prob} we know that there is $\nu_1$ and $\nu_2$ such that $\nu = p\cdot\nu_1 + (1-p)\cdot\nu_2$ and $\nu_i(\sigma) \ge \mu_i(\sigma)$ too. So, we can use the induction hypothesis to conclude that $\nu_i\vDash\varphi_i$, and so $\nu\vDash \varphi_1\oplus_p\varphi_2$.

\item $\varphi = \varphi_1\nd\varphi_2$. We know that $\mu \vDash\varphi_1\oplus_p\varphi_2$ for some $p\in[0,1]$. Now take any $\nu$ such that $\mu\sqsubseteq_\D\nu$. By the previous case, we know that $\nu\vDash\varphi_1\oplus_p\varphi_2$. Since $\varphi_1\oplus_p\varphi_2 \Rightarrow \varphi_1\nd\varphi_2$, then $\nu\vDash\varphi_1\nd\varphi_2$.

\item $\varphi = \sure{P}$. Since $\supp(\mu)\subseteq \sem{P}\subseteq \Sigma$, we know that $\bot\notin\supp(\mu)$. Therefore, if $\mu\sqsubseteq_\D\nu$, then $\mu = \nu$, since there is not distribution strictly larger than $\mu$, and so $\nu\vDash \sure P$.\qedhere

\end{itemize}
\end{proof}

\begin{lemma}
\label{lem:assign}
If $\mu\vDash \varphi[e/x]$ then $\nu \vDash \varphi$ for all $\nu\in\dem{x\coloneqq e}{\upset{\mu}}$.
\end{lemma}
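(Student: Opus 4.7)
The plan is to factor the semantics of assignment through the state-update function and then invoke the monotonicity lemma. Define $f\colon\Sigma_\bot\to\Sigma_\bot$ by $f(\sigma)\triangleq\sigma[x\coloneqq\de{e}(\sigma)]$ for $\sigma\in\Sigma$ and $f(\bot)\triangleq\bot$, and let $f_*\mu$ denote the pushforward distribution, $(f_*\mu)(\tau)\triangleq\sum_{\sigma\colon f(\sigma)=\tau}\mu(\sigma)$. Because $\upset{\delta_{f(\sigma)}}=\{\delta_{f(\sigma)}\}$ for $\sigma\in\Sigma$ while $\de{x\coloneqq e}_\bot(\bot)=\D(\Sigma_\bot)$, the Kleisli extension unfolds so that every $\nu\in\dem{x\coloneqq e}{\upset\mu}$ has the explicit form
\[
\nu \;=\; \sum_{\sigma\in\supp(\mu')\cap\Sigma}\mu'(\sigma)\cdot\delta_{f(\sigma)} \;+\; \mu'(\bot)\cdot\nu_\bot
\]
for some $\mu'\in\upset\mu$ and some $\nu_\bot\in\D(\Sigma_\bot)$.

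First I would prove, by induction on $\varphi$, the substitution equivalence
\[
\mu\vDash\varphi[e/x] \quad\iff\quad f_*\mu\vDash\varphi.
\]
The case $\varphi=\sure{P}$ reduces to the standard syntactic substitution lemma for atoms, $\sigma\in\sem{P[e/x]}$ iff $f(\sigma)\in\sem{P}$, together with the observation that $\sem{P}\subseteq\Sigma$ forces $\mu(\bot)=0=(f_*\mu)(\bot)$ on both sides. The cases $\top$, $\bot$, and $\varphi_1\land\varphi_2$ follow immediately from the inductive hypothesis; for $\varphi_1\oplus_p\varphi_2$ and $\varphi_1\nd\varphi_2$ the key ingredient is linearity of the pushforward, $f_*(p\mu_1+(1-p)\mu_2)=p\,f_*\mu_1+(1-p)\,f_*\mu_2$, which transports any probabilistic decomposition of $\mu$ to one of $f_*\mu$ and conversely.

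Next I would compare $f_*\mu$ with $\nu$. For any $\tau\in\Sigma$, since $f(\bot)=\bot\neq\tau$ and $\mu\sqsubseteq_\D\mu'$,
\[
(f_*\mu)(\tau) \;=\; \sum_{\substack{\sigma\in\Sigma\\ f(\sigma)=\tau}}\mu(\sigma) \;\le\; \sum_{\substack{\sigma\in\Sigma\\ f(\sigma)=\tau}}\mu'(\sigma) \;\le\; \nu(\tau),
\]
so $f_*\mu\sqsubseteq_\D\nu$. Applying the substitution equivalence to the hypothesis $\mu\vDash\varphi[e/x]$ yields $f_*\mu\vDash\varphi$, and then \Cref{lem:up-satisfy} transports this to $\nu\vDash\varphi$, as required.

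Overall, the argument is largely a routine unfolding once the pushforward has been set up correctly; the only mildly delicate step is the atomic base case of the substitution equivalence, which rests on the standard syntactic lemma $\sigma\vDash P[e/x]$ iff $f(\sigma)\vDash P$ for the base language and careful bookkeeping of the $\bot$ element.
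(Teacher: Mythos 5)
Your proof is correct, but it takes a genuinely different route from the paper's. The paper proves the lemma by a single induction on $\varphi$ that interleaves the satisfaction relation with the semantics of assignment: the $\oplus_p$ case decomposes $\dem{x\coloneqq e}{\upset\mu}$ via \Cref{lem:upcl-prob} and \Cref{lem:oplus_dist} and applies the inductive hypothesis to each component, and the $\sure{P}$ case argues directly about supports. You instead factor the argument through the pushforward $f_*\mu$ along the state-update map, splitting the proof into (i) a substitution equivalence $\mu\vDash\varphi[e/x]$ implies $f_*\mu\vDash\varphi$, whose induction needs only linearity of $f_*$ and the atomic substitution lemma, and (ii) the observation $f_*\mu\sqsubseteq_\D\nu$, which lets \Cref{lem:up-satisfy} absorb all the bookkeeping about $\upset{\mu}$ and the $\bot$-branch of the Kleisli extension in one step. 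The benefit of your decomposition is that the up-closure is handled exactly once by an already-proved monotonicity lemma rather than being re-threaded through every inductive case; the paper's version avoids introducing the pushforward and reuses the decomposition lemmas it needs elsewhere in the soundness proof anyway. One minor remark: you state the substitution property as an \emph{equivalence}, but the converse direction (for the $\oplus_p$ case) would require pulling a decomposition of $f_*\mu$ back to one of $\mu$, a small disintegration argument you do not spell out; since only the forward implication is used, this is harmless overclaiming rather than a gap.
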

\begin{proof}
By induction on the structure of $\varphi$.

\begin{itemize}
\item $\varphi = \top$. it is trivial that $\nu \vDash \top$ for any $\nu$.
\item $\varphi = \bot$. This case is vacuous, since $\bot[e/x] = \bot$ and therefore the premise that $\mu\vDash\bot[e/x]$ is impossible.

\item $\varphi = \varphi_1 \land\varphi_2$. We know that $(\varphi_1 \land \varphi_2)[e/x] = \varphi_1[e/x]\land \varphi_2[e/x]$, and so $\mu\vDash\varphi_i[e/x]$ for $i\in\{1,2\}$. Now, take any $\nu\in\dem{x\coloneqq e}{\upset\mu}$. By the induction hypothesis, we get that $\nu \vDash \varphi_i$ for $i\in\{1,2\}$, therefore $\nu\vDash\varphi_1\land\varphi_2$.

\item $\varphi = \varphi_1 \oplus_p\varphi_2$. We know that $\mu = p\cdot\mu_1 + (1-p)\cdot\mu_2$ such that $\mu_i\vDash\varphi_i[e/x]$ for $i\in\{1,2\}$. Now, by \Cref{lem:upcl-prob,lem:oplus_dist}. we get that:
\[
  \dem{x\coloneqq e}{\upset\mu}
  = \dem{x\coloneqq e}{\upset{\mu_1} \oplus_p \upset{\mu_2}}
  = \dem{x\coloneqq e}{\upset{\mu_1}} \oplus_p \dem{x\coloneqq e}{\upset{\mu_2}}
\]
And therefore any $\nu \in \dem{x\coloneqq e}{\upset{\mu}}$ must have the form $p\cdot\nu_1 + (1-p) \cdot\nu_2$ where $\nu_i \in \de{x\coloneqq e}^\dagger_\bot(\{\mu_i\})$ for $i\in\{1,2\}$. So, by the induction hypothesis, $\nu_i\vDash\varphi_i$, and therefore $\nu\vDash\varphi_1\oplus_p\varphi_2$.

\item $\varphi = \varphi_1 \nd\varphi_2$. We know that there is some $p$ such that $\mu \vDash (\varphi_1 \oplus_p \varphi_2)[e/x]$, so by the previous case, we get that for any $\nu \vDash \varphi_1 \oplus_p \varphi_2$ for any $\nu\in\dem{x\coloneqq e}{\upset{\mu}}$. Since $\varphi_1\oplus_p\varphi_2 \Rightarrow \varphi_1\nd\varphi_2$, then $\nu\vDash\varphi_1\nd\varphi_2$ as well.

\item $\varphi=\sure{P}$. We know that $\supp(\mu)\subseteq\sem{P[e/x]}$. Now, take any $\nu\in\dem{x\coloneqq e}{\upset\mu}$. Any $\tau\in \supp(\nu)$ must have the form $\sigma[x \coloneqq \de{e}(\sigma)]$ for some $\sigma\in\supp(\mu)$. Since $\sigma\in\sem{P[e/x]}$, it must be that $\sigma[x \coloneqq \de{e}(\sigma)] \in \sem{P}$.\qedhere

\end{itemize}
\end{proof}

\thmsoundness*
\begin{proof}
By induction on the derivation.

\begin{itemize}

\item\ruleref{Skip}. Suppose $\mu\vDash\varphi$. Note that $\de{\skp} = \eta$ and so $\de{\skp}^\dagger = \mathsf{id}$, therefore $\dem\skp{\upset\mu} = \upset\mu$.
Now, take any $\nu \in \upset\mu$, clearly $\mu\sqsubseteq_\D \nu$, so $\nu\vDash\varphi$ by \Cref{lem:up-satisfy}.

\item\ruleref{Assign}. Follows from \Cref{lem:assign}.

\item \ruleref{Seq}. By the induction hypothesis, we know that $\vDash\triple\varphi{C_1}\vartheta$ and $\vDash\triple\vartheta{C_2}\psi$ and we have to show that $\vDash\triple\varphi{C_1\fatsemi C_2}\psi$. Suppose that $\mu\vDash\varphi$, and observe that:
\begin{align*}
\dem{C_1\fatsemi C_2}{\upset\mu}
&= (\de{C_2}^\dagger_\bot \circ \de{C_1}_\bot)^\dagger(\upset\mu)
\\
&= \dem{C_2}{\dem{C_1}{\upset\mu}}\\
&= \smashoperator{\bigcup_{\nu\in \dem{C_1}{\upset\mu}}} \dem{C_2}{\upset\nu} \tag{By \Cref{lem:dagger-upset}.}
\end{align*}
So, for any $\xi \in \dem{C_1\fatsemi C_2}{\upset\mu}$, there must be a $\nu \in \dem{C_1}{\upset\mu}$ such that $\xi\in\dem{C_2}{\upset\nu}$. From $\vDash\triple\varphi{C_1}\vartheta$, we know that $\nu\vDash\vartheta$ and therefore from $\vDash\triple\vartheta{C_2}\psi$ we get that $\xi\vDash\psi$.

\item \ruleref{Prob}. Suppose $\mu\vDash\varphi$, therefore since $\varphi \Rightarrow (e = p)$, we know that $\de{e}(\sigma) = p$ for every $\sigma\in\supp(\mu)$.
Note that this also means that $\bot\notin\supp(\mu)$ and so $\upset\mu = \{\mu\}$.
Now, we have:
\begin{align*}
&\dem{C_1\oplus_p C_2}{\upset\mu}\\
&= \dem{C_1\oplus_p C_2}{\{\mu\}}
\\
&= \set{
    \smashoperator[r]{\sum_{\sigma\in\supp(\mu)}} \mu(\sigma)\cdot\nu_\sigma
    }{
    \forall \sigma\in \supp(\mu).~
    \nu_\sigma \in \de{C_1\oplus_e C_2}_\bot(\sigma)
 }
\\
&=  \set{
  \smashoperator[r]{\sum_{\sigma\in\supp(\mu)}} \mu(\sigma)\cdot
      (\de{e}(\sigma)\cdot \nu_\sigma + (1-\de{e}(\sigma))\cdot \nu'_\sigma)
  }{
    \forall \sigma.~
    \nu_\sigma \in \de{C_1}_\bot(\sigma),
    \nu'_\sigma\in\de{C_2}_\bot(\sigma)
}
\\
&= \set{
    p\cdot\smashoperator{\sum_{\sigma\in\supp(\mu)}} \mu(\sigma)\cdot \nu_\sigma +
    (1-p)\cdot\smashoperator{\sum_{\sigma\in\supp(\mu)}} \mu(\sigma) \cdot \nu'_\sigma 
   }{
    \forall \sigma.~
    \nu_\sigma \in \de{C_1}_\bot(\sigma),
    \nu'_\sigma\in\de{C_2}_\bot(\sigma)
}
\\
&= \set{
    p\cdot\nu + (1-p)\cdot\nu'
    }{
    \nu \in \dem{C_1}{\upset{\mu}},
    \nu'\in\dem{C_2}{\upset{\mu}}
}
\end{align*}
So for each $\xi \in \dem{C_1\oplus_p C_2}{\upset\mu}$ there is a $\nu \in \dem{C_1}{\upset{\mu}}$ and $\nu'\in\dem{C_2}{\upset{\mu}}$ such that $\xi = p\cdot\nu +(1-p)\cdot \nu'$. We know that $\mu\vDash\varphi$,so by the induction hypothesis $\nu \vDash\psi_1$ and $\nu'\vDash\psi_2$, therefore $\xi \vDash \psi_1 \oplus_p \psi_2$.

\item \ruleref{Nondet}. Suppose $\mu\vDash\sure{P}$, therefore $\supp(\mu) \subseteq \sem P$, and so $\{\delta_\sigma\} \vDash \sure{P}$ for each $\sigma\in \supp(\mu)$. By the induction hypotheses, we also get that $\de{C_i}(\sigma)\vDash\psi_i$ for each $\sigma\in\supp(\mu)$ and $i\in\{1,2\}$. This also means that $\nu\vDash \psi_1 \nd\psi_2$ for each $\nu\in \de{C_1\nd C_2}(\sigma)$.
Now, we have:
\begin{align*}
\dem{C_1\nd C_2}{\upset\mu}
&= \set{
    \smashoperator[r]{\sum_{\sigma\in\supp(\mu')}} \mu'(\sigma)\cdot\nu_\sigma
    }{
    \mu'\in\upset\mu,~
    \forall \sigma\in \supp(\mu').~
    \nu_\sigma \in \de{C_1\nd C_2}_\bot(\sigma)
}
\end{align*}
So, taking any $\xi\in\dem{C_1\nd C_2}{\upset\mu}$, we know that $\xi = {\sum_{\sigma\in\supp(\mu')}} \mu'(\sigma)\cdot\nu_\sigma$ for some $\mu'\in\upset\mu$ (note that $\upset\mu = \{\mu\}$ since $\mu(\bot) = 0$) where each $\nu_\sigma\in \de{C_1\nd C_2}(\sigma)$. That means that $\nu_\sigma\vDash \psi_1 \nd\psi_2$ and so by \Cref{lem:convex_sum_idem}, $\dem{C_1\nd C_2}{\upset\mu}\vDash\psi_1\nd\psi_2$.

\item \ruleref{If1} Suppose that $\mu\vDash\varphi$. From $\varphi\Rightarrow \sure{e}$, we know that $\de{e}(\sigma) = \tru$ for each $\sigma\in\supp(\mu)$. Also note that $\mu(\bot) = 0$, so $\upset\mu = \{\mu\}$. Therefore, we get:
\begin{align*}
  \dem{\iftf e{C_1}{C_2}}{\{\mu\}}
  &= \{
    \textstyle\sum_{\sigma\in\supp(\mu)} \mu(\sigma)\cdot\nu_\sigma
    \mid
    \forall\sigma.~ \nu_\sigma\in \de{\iftf e{C_1}{C_2}}_\bot(\sigma)
  \}
  \intertext{Since $\de{e}(\sigma) = \tru$ for all $\sigma\in\supp(\mu)$, then $\de{\iftf e{C_1}{C_2}}_\bot(\sigma) = \de{C_1}_\bot(\sigma)$.}
  &= \{
    \textstyle\sum_{\sigma\in\supp(\mu)} \mu(\sigma)\cdot\nu_\sigma
    \mid
    \forall\sigma.~ \nu_\sigma\in \de{C_1}_\bot(\sigma)
  \}
  \\
  &= \dem{C_1}{\{\mu\}}
\end{align*}
Now take any $\nu \in \dem{\iftf e{C_1}{C_2}}{\{\mu\}}$, we know that $\nu \in \dem{C_1}{\{\mu\}}$, therefore by the induction hypothesis $\nu\vDash\psi$.

\item\ruleref{If2}. Symmetric to the \ruleref{If1} case.


\item \ruleref{Prob Choice}. Suppose $\mu\vDash \varphi_1 \oplus_p \varphi_2$, so $\mu = p\cdot \mu_1 + (1-p)\cdot\mu_2$ such that $\mu_i\vDash\varphi_i$ for $i\in \{1,2\}$. Now, we have:
\begin{align*}
\dem{C}{\{\mu\}}
&= \dem{C}{\upset{p\cdot \mu_1 + (1-p)\cdot\mu_2}}
\\
&= \dem{C}{\upset{\mu_1} \oplus_p \upset{\mu_2}}\tag{By \Cref{lem:upcl-prob}.}
\\
&= \dem{C}{\upset{\mu_1}} \oplus_p \dem{C}{\upset{\mu_2}}\tag{By \Cref{lem:oplus_dist}.}
\end{align*}
So, for any $\nu \in \dem{C}{\upset{\mu}}$, it must be that $\nu = p\cdot\nu_1 + (1-p)\cdot \nu_2$ such that $\nu_i\in \dem{C}{\upset{\mu_i}}$ for $i\in \{1,2\}$. By the induction hypothesis, we know that each $\nu_i\vDash \psi_i$, therefore $\nu\vDash \psi_1 \oplus_p\psi_2$.

\item \ruleref{ND Choice}.
Suppose $\mu\vDash \varphi_1 \nd \varphi_2$, so we know there is some $p\in[0,1]$ such that $\mu\vDash \varphi_1\oplus_p\varphi_2$. Now, take any $\nu\in\dem C{\upset{\mu}}$. We know from the \ruleref{Prob Choice} case that $\nu\vDash \psi_1 \oplus_p\psi_2$. Since $\psi_1 \oplus_p\psi_2 \Rightarrow \psi_1\nd\psi_2$, then $\nu\vDash\psi_1\nd\psi_2$ as well.



\item \ruleref{Consequence}. Suppose $\mu\vDash \varphi'$, then since $\varphi'\Rightarrow \varphi$, $\mu\vDash\varphi$. By the induction hypothesis, $\nu\vDash\psi$ for any $\nu \in \dem{C}{\upset{\mu}}$. Finally, since $\psi\Rightarrow \psi'$, $\nu\vDash\psi'$.


\item\ruleref{Constancy}. Suppose that $\mu\vDash \varphi \land \sure{P}$, so $\mu\vDash\varphi$ and $\supp(\mu) \subseteq\sem P$. Now, take any $\nu\in\dem C{\upset{\mu}}$. By the induction hypothesis, we know that $\nu\vDash\psi$. In addition, since $\mathsf{mod}(C)\cap\mathsf{fv}(P) = \emptyset$, then the program cannot have changed the truth of $P$, and so $\supp(\nu)\subseteq \sem{P}$ as well. This means that $\nu\vDash \psi\land \sure{P}$.

\end{itemize}
\end{proof}

\subsection{Derived Rules}
\label{app:derived}

In this section we give derivations for the additional rules shown in \Cref{sec:rules}.

\begin{lemma}
The following inference is derivable:
\[
\inferrule{
  \varphi_1 \Rightarrow \sure{e}
  \\
  \triple{\varphi_1}{C_1}{\psi_1}
  \\
  \varphi_2 \Rightarrow \lnot \sure{e}
  \\
  \triple{\varphi_2}{C_2}{\psi_2}
}{
  \triple{\varphi_1 \oplus_p \varphi_2}{\iftf e{C_1}{C_2}}{\psi_1 \oplus_p \psi_2}
}
\]
\end{lemma}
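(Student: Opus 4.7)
The plan is to derive this rule directly by composing the primitive \ruleref{If1} and \ruleref{If2} rules with \ruleref{Prob Split}, treating the two branches of the probabilistic precondition independently. The key observation is that each disjunct $\varphi_i$ already carries enough information to deterministically select a branch of the conditional, so we can analyze each separately and only afterwards recombine via the probabilistic splitting rule.

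First I would take the left premise pair: from $\varphi_1 \Rightarrow \sure{e}$ together with $\triple{\varphi_1}{C_1}{\psi_1}$, an application of \ruleref{If1} yields $\triple{\varphi_1}{\iftf{e}{C_1}{C_2}}{\psi_1}$. Symmetrically, from $\varphi_2 \Rightarrow \sure{\lnot e}$ (note: this is really $\sure{\lnot e}$ as required by \ruleref{If2}; the $\lnot \sure{e}$ in the statement appears to be a typo, but even if read literally we would need to first strengthen it to $\sure{\lnot e}$ to apply \ruleref{If2}) together with $\triple{\varphi_2}{C_2}{\psi_2}$, an application of \ruleref{If2} yields $\triple{\varphi_2}{\iftf{e}{C_1}{C_2}}{\psi_2}$.

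Finally, with the same program $\iftf{e}{C_1}{C_2}$ appearing in both derived triples, I apply the structural rule \ruleref{Prob Split} with parameter $p$ to the two derivations, obtaining
\[
  \triple{\varphi_1 \oplus_p \varphi_2}{\iftf{e}{C_1}{C_2}}{\psi_1 \oplus_p \psi_2},
\]
which is exactly the conclusion required.

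There is no real obstacle here: every step is a single invocation of a primitive rule, and all the semantic content is delegated to the soundness of \ruleref{If1}, \ruleref{If2}, and \ruleref{Prob Split}. The only subtlety worth flagging is the reading of the second side condition: the \ruleref{If2} rule as stated in Figure~\ref{fig:rules} requires $\varphi_2 \Rightarrow \sure{\lnot e}$, so the premise should be $\sure{\lnot e}$ rather than $\lnot\sure{e}$; with that reading the derivation goes through unmodified.
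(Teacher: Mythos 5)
Your derivation is exactly the paper's own proof: \ruleref{If1} and \ruleref{If2} applied to the two premises, followed by a single application of \ruleref{Prob Split}. Your remark about the side condition is also correct --- the paper's displayed proof indeed uses $\varphi_2 \Rightarrow \sure{\lnot e}$ as required by \ruleref{If2}, so the $\lnot\sure{e}$ in the lemma statement is a typo.
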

\begin{proof}
\[
\inferrule*[right=\ruleref{Prob Split}]{
  \inferrule*[right=\ruleref{If1}]{
    \varphi_1 \Rightarrow \sure{e}
    \\
    \triple{\varphi_1}{C_1}{\psi_1}
  }{
    \triple{\varphi_1}{\iftf e{C_1}{C_2}}{\psi_1}
  }
  \\
  \inferrule*[Right=\ruleref{If2}]{
    \varphi_2 \Rightarrow \sure{\lnot e}
    \\
    \triple{\varphi_2}{C_2}{\psi_2}
  }{
    \triple{\varphi_2}{\iftf e{C_1}{C_2}}{\psi_2}
  }
}{
  \triple{\varphi_1 \oplus_p \varphi_2}{\iftf e{C_1}{C_2}}{\psi_1 \oplus_p \psi_2}
}
\]
\end{proof}

\begin{lemma}\label{lem:if-nd}
The following inference is derivable:
\[
\inferrule{
  \varphi_1 \Rightarrow \sure{e}
  \\
  \triple{\varphi_1}{C_1}{\psi_1}
  \\
  \varphi_2 \Rightarrow \sure{\lnot e}
  \\
  \triple{\varphi_2}{C_2}{\psi_2}
}{
  \triple{\varphi_1 \nd \varphi_2}{\iftf e{C_1}{C_2}}{\psi_1 \nd \psi_2}
}
\]
\end{lemma}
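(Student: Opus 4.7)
The plan is to mirror the derivation of the probabilistic-choice version of this rule (shown immediately above), substituting \ruleref{ND Split} for \ruleref{Prob Split}. First I would apply \ruleref{If1} to the premises $\varphi_1 \Rightarrow \sure e$ and $\triple{\varphi_1}{C_1}{\psi_1}$, producing $\triple{\varphi_1}{\iftf e{C_1}{C_2}}{\psi_1}$. Symmetrically, I would apply \ruleref{If2} to the premises $\varphi_2 \Rightarrow \sure{\lnot e}$ and $\triple{\varphi_2}{C_2}{\psi_2}$, producing $\triple{\varphi_2}{\iftf e{C_1}{C_2}}{\psi_2}$.

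Next, since both derived triples share the same command $\iftf e{C_1}{C_2}$, I would combine them via a single application of \ruleref{ND Split} to obtain the desired conclusion $\triple{\varphi_1 \nd \varphi_2}{\iftf e{C_1}{C_2}}{\psi_1 \nd \psi_2}$.

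I do not expect any substantive obstacle here. The rule \ruleref{ND Split} from \Cref{fig:rules} is designed precisely to stitch two triples sharing a common command into one whose pre- and postconditions are joined by $\nd$, and, unlike \ruleref{Nondet}, it places no restriction requiring the preconditions to be basic assertions of the form $\sure P$. The only subtle point worth emphasizing is that the nondeterminism in the conclusion arises solely from the disjunctive precondition rather than from the program text, so each branch of the conditional can still be analyzed deterministically via \ruleref{If1} and \ruleref{If2} without needing to appeal to \ruleref{Nondet} at all.
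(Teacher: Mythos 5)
Your proposal matches the paper's derivation exactly: \ruleref{If1} and \ruleref{If2} are applied to the respective premises to obtain two triples for $\iftf e{C_1}{C_2}$, which are then combined with a single application of \ruleref{ND Split}. This is precisely the proof given in the appendix, so there is nothing to add.
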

\begin{proof}
\[
\inferrule*[right=\ruleref{ND Split}]{
  \inferrule*[right=\ruleref{If1}]{
    \varphi_1 \Rightarrow \sure{e}
    \\
    \triple{\varphi_1}{C_1}{\psi_1}
  }{
    \triple{\varphi_1}{\iftf e{C_1}{C_2}}{\psi_1}
  }
  \\
  \inferrule*[Right=\ruleref{If2}]{
    \varphi_2 \Rightarrow \sure{\lnot e}
    \\
    \triple{\varphi_2}{C_2}{\psi_2}
  }{
    \triple{\varphi_2}{\iftf e{C_1}{C_2}}{\psi_2}
  }
}{
  \triple{\varphi_1 \nd \varphi_2}{\iftf e{C_1}{C_2}}{\psi_1 \nd \psi_2}
}
\]
\end{proof}

\begin{lemma}\label{lem:if-hoare}
The following inference is derivable:
\[
\inferrule{
  \triple{\sure{P\land e}}{C_1}{\psi}
  \\
  \triple{\sure{P\land\lnot e}}{C_2}{\psi}
}{
  \triple{\sure{P}}{\iftf e{C_1}{C_2}}{\psi}
}
\]
\end{lemma}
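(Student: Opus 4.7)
The plan is to derive this rule by combining the rule of \ruleref{Consequence} with the already-derived conditional rule for $\nd$-structured preconditions (\Cref{lem:if-nd}), using the Hahn-style decomposition of $\sure P$ and the idempotence of $\nd$ on the postcondition side.

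First, I would establish the semantic implication
\[
  \sure P \;\Rightarrow\; \sure{P\land e} \nd \sure{P\land \lnot e},
\]
which is the key arithmetic step. Given any $\mu\vDash \sure P$, partition $\supp(\mu)\subseteq\sem P$ into the two measurable pieces $A = \supp(\mu)\cap\sem e$ and $B = \supp(\mu)\cap\sem{\lnot e}$. Setting $p = \mu(A)$ and defining $\mu_1$, $\mu_2$ as the normalized restrictions of $\mu$ to $A$ and $B$ respectively (handling the degenerate cases $p\in\{0,1\}$ by picking any distribution supported in the empty side, e.g.\ a point mass in the relevant set if $\sem{P\land e}$ or $\sem{P\land\lnot e}$ is nonempty; otherwise $\sure P$ already implies the stronger side), we obtain $\mu = p\cdot\mu_1 + (1-p)\cdot\mu_2$ with $\mu_1\vDash \sure{P\land e}$ and $\mu_2\vDash \sure{P\land\lnot e}$, so $\mu\vDash \sure{P\land e}\oplus_p \sure{P\land\lnot e}$ and hence $\mu\vDash\sure{P\land e}\nd\sure{P\land\lnot e}$.

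Next I would apply \Cref{lem:if-nd} with $\varphi_1 \triangleq \sure{P\land e}$, $\varphi_2 \triangleq \sure{P\land\lnot e}$, $\psi_1 \triangleq \psi$, $\psi_2 \triangleq \psi$, noting that the side conditions $\sure{P\land e}\Rightarrow \sure e$ and $\sure{P\land \lnot e}\Rightarrow \sure{\lnot e}$ hold trivially by the semantics of $\sure{\cdot}$. This produces
\[
  \triple{\sure{P\land e}\nd\sure{P\land\lnot e}}{\iftf e{C_1}{C_2}}{\psi\nd\psi}.
\]
Finally, I would apply \ruleref{Consequence} on both sides: on the precondition using the Hahn implication established above, and on the postcondition using the idempotence law $\psi\nd\psi \Leftrightarrow \psi$ from \Cref{sec:assertions}. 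The combined derivation is:
\[
\inferrule*[right=\ruleref{Consequence}]{
  \sure P \Rightarrow \sure{P\land e}\nd\sure{P\land\lnot e}
  \\
  \inferrule*[Right=\Cref{lem:if-nd}]{
    \triple{\sure{P\land e}}{C_1}{\psi}
    \\
    \triple{\sure{P\land\lnot e}}{C_2}{\psi}
  }{
    \triple{\sure{P\land e}\nd\sure{P\land\lnot e}}{\iftf e{C_1}{C_2}}{\psi\nd\psi}
  }
  \\
  \psi\nd\psi \Rightarrow \psi
}{
  \triple{\sure P}{\iftf e{C_1}{C_2}}{\psi}
}
\]

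The only nontrivial obstacle is the Hahn decomposition step; every other step is an immediate appeal to previously established rules or laws. The decomposition argument is elementary for discrete distributions since $\supp(\mu)$ is countable, so splitting along the guard partitions the mass cleanly. Idempotence of $\nd$ in the postcondition is exactly the propositional law whose validity was carefully arranged by the design of the assertion language in \Cref{sec:assertions}, so its use here is justified.
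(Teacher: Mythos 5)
Your derivation is correct and is essentially identical to the paper's own proof: both establish $\sure P \Rightarrow \sure{P\land e}\nd\sure{P\land\lnot e}$ by splitting the support of $\mu$ along the guard, then apply \Cref{lem:if-nd} to the two premises, and finish with \ruleref{Consequence} using idempotence $\psi\nd\psi\Rightarrow\psi$. Your extra attention to the degenerate cases $p\in\{0,1\}$ of the decomposition is more careful than the paper's one-line justification, but it does not change the structure of the argument.
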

\begin{proof}
First, note that $\sure{P} \Rightarrow \sure{P\land e} \nd \sure{P\land \lnot e}$, since for any $\mu\vDash \sure{P}$, $e$ must be either true or false in every $\sigma\in \supp(\mu)$. We can therefore complete the derivation as follows:
\[
\inferrule*[right=\ruleref{Consequence}]{
  \inferrule*[Right=\Cref{lem:if-nd}]{
    \sure{P\land e} \Rightarrow \sure{e}
    \quad
    \triple{\sure{P\land e}}{C_1}{\psi}
    \quad
    \sure{P\land\lnot e} \Rightarrow \sure{\lnot e}
    \quad
    \triple{\sure{P\land\lnot e}}{C_2}{\psi}
  }{
    \triple{\sure{P\land e} \nd \sure{P\land \lnot e}}{\iftf e{C_1}{C_2}}{\psi\nd \psi}
  }
}{
  \triple{\sure{P}}{\iftf e{C_1}{C_2}}{\psi}
}
\]
\end{proof}

\begin{lemma}\label{lem:flip-rule}
The following inference is derivable:
\[
\inferrule{
    \varphi\Rightarrow \sure{e=p}
    \\
    x \notin \mathsf{fv}(\varphi)
  }{
    \triple{\varphi}{x \coloneqq \mathsf{flip}(e)}{\varphi \land (\sure{x = \tru} \oplus_p \sure{x=\fls})}
  }
\]
\end{lemma}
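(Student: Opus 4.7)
The plan is to unfold $x \coloneqq \mathsf{flip}(e)$ as $(x \coloneqq \tru) \oplus_e (x \coloneqq \fls)$ and apply \ruleref{Prob}, so the task reduces to deriving one triple per branch and then weakening the resulting postcondition.

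First I would derive $\triple{\varphi}{x \coloneqq \tru}{\varphi \land \sure{x = \tru}}$ as follows. Apply \ruleref{Assign} with target postcondition $\varphi \land \sure{x=\tru}$, giving precondition $(\varphi \land \sure{x=\tru})[\tru/x] = \varphi[\tru/x] \land \sure{\tru = \tru}$. Since $x \notin \mathsf{fv}(\varphi)$ the substitution leaves $\varphi$ unchanged, and $\sure{\tru = \tru}$ is equivalent to $\sure\tru$. Moreover, the hypothesis $\varphi \Rightarrow \sure{e=p}$ forces $\supp(\mu) \subseteq \sem{e=p} \subseteq \Sigma$ for any $\mu \vDash \varphi$, so $\bot \notin \supp(\mu)$ and hence $\varphi \Rightarrow \sure\tru$; equivalently $\varphi \Leftrightarrow \varphi \land \sure\tru$. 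A single application of \ruleref{Consequence} then closes the gap. An entirely symmetric derivation yields $\triple{\varphi}{x \coloneqq \fls}{\varphi \land \sure{x = \fls}}$.

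Next, apply \ruleref{Prob}, whose side condition $\varphi \Rightarrow \sure{e=p}$ is exactly one of our hypotheses, to obtain
\[
  \triple{\varphi}{(x\coloneqq\tru)\oplus_e(x\coloneqq\fls)}{(\varphi \land \sure{x=\tru}) \oplus_p (\varphi \land \sure{x=\fls})}.
\]
Finally, use \ruleref{Consequence} to weaken the postcondition to $\varphi \land (\sure{x=\tru} \oplus_p \sure{x=\fls})$. The $\sure{x=\tru} \oplus_p \sure{x=\fls}$ conjunct is immediate, since weakening inside $\oplus_p$ is sound and $\varphi \land \sure{x=b} \Rightarrow \sure{x=b}$. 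The $\varphi$ conjunct is the delicate step: writing $\mu = p\cdot\mu_1 + (1-p)\cdot\mu_2$ with each $\mu_i \vDash \varphi$, one invokes \Cref{lem:convex_sum_idem} (the convex-sum form of idempotence) to conclude $\mu \vDash \varphi$.

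The only real obstacle is this last weakening, which crucially relies on idempotence of the propositional connectives to merge the two copies of $\varphi$ across the probabilistic branches. Everything else is a routine backward substitution plus appeal to the fact that $\varphi \Rightarrow \sure\tru$ under the stated hypothesis.
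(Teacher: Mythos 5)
Your derivation is correct and follows essentially the same route as the paper's proof: unfold the sugar, apply \ruleref{Assign} to each branch using $x\notin\mathsf{fv}(\varphi)$ to discharge the substitution, combine with \ruleref{Prob}, and weaken $(\varphi\land\sure{x=\tru})\oplus_p(\varphi\land\sure{x=\fls})$ to the stated postcondition via \ruleref{Consequence}. You are in fact slightly more careful than the paper on two points it elides---justifying $\varphi\Leftrightarrow\varphi\land\sure{\tru=\tru}$ via $\varphi\Rightarrow\sure{e=p}\Rightarrow\sure\tru$, and citing idempotence (\Cref{lem:convex_sum_idem}) for the final merge of the two copies of $\varphi$.
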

\begin{proof}
First, note that since $x \notin \mathsf{fv}(\varphi)$, then $\varphi[v/x] = \varphi$ for any $v \in\mathsf{Val}$. Therefore, we have:
\[
  \left(\varphi \land \sure{x = \tru}\right)[\tru/x]
  \quad=\quad
  \varphi[\tru/x] \land \sure{\tru = \tru}
  \quad=\quad
  \varphi
\]
And similarly for the $x=\fls$ case. We now complete the derivation as follows:
\[
\inferrule*[right=\ruleref{Consequence}]{
  \inferrule*[Right=\ruleref{Prob}]{
    \inferrule*[right=\ruleref{Assign}]{\;}{
      \triple{\varphi}{x \coloneqq \tru}{\varphi \land \sure{x=\tru}}
    }
    \quad
    \inferrule*[Right=\ruleref{Assign}]{\;}{
      \triple{\varphi}{x \coloneqq \fls}{\varphi \land \sure{x=\fls}}
    }
  }{
    \triple{\varphi}{(x \coloneqq \tru) \oplus_e (x\coloneqq\fls)}{(\varphi \land \sure{x = \tru}) \oplus_p (\varphi \land \sure{x=\fls})}
  }
}{
  \triple{\varphi}{x \coloneqq \mathsf{flip}(e)}{\varphi \land (\sure{x = \tru} \oplus_p \sure{x=\fls})}
}
\]
\end{proof}

\begin{lemma}\label{lem:nd-rule}
The following inference is derivable:
\[
  \inferrule{\;}{\triple{\sure{\tru}}{x\gets S}{\textstyle\bignd_{v\in S} \sure{x=v}}}
\]
\end{lemma}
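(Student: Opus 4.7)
The plan is to proceed by induction on the cardinality $n = |S|$, exploiting the fact that $x \gets S$ is syntactic sugar for the iterated nondeterministic choice $(x \coloneqq v_1) \nd \cdots \nd (x \coloneqq v_n)$ defined in \Cref{sec:sem-seq}. Throughout I will use the derived fact that $\sure\tru \Rightarrow \sure{v_i = v_i}$ for each constant $v_i$, so that I can freely strengthen $\sure\tru$ to the precondition needed by backward substitution.

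For the base case $n = 1$, we have $S = \{v_1\}$ and $x \gets S$ is literally $x \coloneqq v_1$. An application of \ruleref{Assign} gives
\[
  \triple{\sure{v_1 = v_1}}{x \coloneqq v_1}{\sure{x = v_1}},
\]
and then \ruleref{Consequence} together with the valid implication $\sure\tru \Rightarrow \sure{v_1 = v_1}$ yields $\triple{\sure\tru}{x \gets \{v_1\}}{\sure{x = v_1}}$, which is $\triple{\sure\tru}{x\gets S}{\bignd_{v\in S}\sure{x=v}}$ in the singleton case (the empty $\nd$-conjunction degenerates to its single disjunct).

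For the inductive step, suppose the rule holds whenever $|S| \le n$, and consider $S = \{v_1, \ldots, v_{n+1}\}$. By definition $x\gets S = (x\coloneqq v_1) \nd (x\gets S')$ where $S' = \{v_2, \ldots, v_{n+1}\}$. The induction hypothesis gives $\triple{\sure\tru}{x\gets S'}{\bignd_{v\in S'}\sure{x=v}}$, and the base-case argument (applied to $v_1$) gives $\triple{\sure\tru}{x\coloneqq v_1}{\sure{x=v_1}}$. Because both triples share the basic precondition $\sure\tru$, the \ruleref{Nondet} rule---whose side condition that the precondition be a basic assertion $\sure P$ is satisfied here---combines them into
\[
  \triple{\sure\tru}{(x\coloneqq v_1)\nd (x\gets S')}{\sure{x=v_1}\nd \textstyle\bignd_{v\in S'}\sure{x=v}},
\]
and the postcondition is exactly $\bignd_{v\in S}\sure{x=v}$ by associativity of $\nd$.

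There is essentially no obstacle: the only subtlety is making sure the \ruleref{Nondet} rule applies, which is why the shared precondition is kept as the basic assertion $\sure\tru$ throughout the induction, rather than something richer. Any generalization to an arbitrary precondition $\varphi$ would have to be obtained externally via \ruleref{Constancy} or by de-structuring $\varphi$ with \ruleref{Prob Split} and \ruleref{ND Split}, as discussed in \Cref{sec:rules}.
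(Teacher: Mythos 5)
Your proof is correct and follows essentially the same route as the paper's: induction on $|S|$, with \ruleref{Assign} for the singleton base case and \ruleref{Nondet} (applicable because the shared precondition $\sure\tru$ is basic) for the inductive step. The only cosmetic differences are that you peel off the first element rather than the last and invoke \ruleref{Consequence} explicitly in the base case where the paper just observes $\sure{x=v}[v/x] = \sure{v=v} = \sure\tru$; neither affects the argument.
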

\begin{proof}
  Recall that $S$ must be a nonempty finite set, so let $S = \{ v_1, \ldots, v_n \}$, and therefore the postcondition is equivalent to $\sure{x=v_1} \nd \cdots \nd \sure{x=v_n}$. We now proceed by induction on the size of the set $S$. In the base case, $|S| = 1$ and so $S = \{ v\}$ is a singleton set. We therefore the complete the derivation with a single application of the \ruleref{Assign} rule since $\sure{x =v}[v/x] = \sure{v=v} = \sure{\tru}$.
\[
\inferrule*[right=\ruleref{Assign}]{\;}{
  \triple{\sure{\tru}}{x \coloneqq v}{\sure{x = v}}
}
\]
Now, for the induction step, we suppose the claim holds for sets of size $n$, and that $S = \{ v_1, \ldots, n_{n+1} \}$. Let $S' = \{ v_1,\ldots, v_n\}$ and note that:
\[
  x\gets S
  \qquad=\qquad
  (x \gets S') \nd (x \coloneqq v_{n+1})
\]
We can now complete the derivation as follows:
\[
\inferrule*[right=\ruleref{Nondet}]{
  \inferrule*[right=\textnormal{Induction Hypothesis}]{\;}{
    \triple{\sure{\tru}}{x \gets S'}{\textstyle\bignd_{i=1}^n \sure{x=v_n}}
  }
  \quad
  \inferrule*[Right=\ruleref{Assign}]{\;}{
    \triple{\sure{\tru}}{x\coloneqq v_{n+1}}{\sure{x = v_{n+1}}}
  }
}{
  \triple{\sure{\tru}}{(x \gets S') \nd (x \coloneqq v_{n+1})}{ (\textstyle\bignd_{i=1}^n \sure{x=v_n}) \nd \sure{x = v_{n+1}}}
}
\] 
\end{proof}

\section{Analyzing Loops}


\subsection{The Zero-One Law}
\label{app:zero-one}

\begin{definition}[Minimum Termination Probability]
Let the minimum termination probability of some set $S \in \C(X)$ be defined as:
\[
  \minterm(S) \triangleq \inf_{\mu \in S} (1 - \mu(\bot))
\]
\end{definition}

The following lemma (similar to Lemma 7.3.1 from \citet{mciver2005abstraction}) states that if some invariant holds for each iteration of a loop, then it will hold at the end of the program execution with probability at least $t$, where $t$ is the minimum probability of termination for the loop.
\begin{lemma}[Invariant Reasoning]
\label{lem:invariant}
Suppose that $\varphi\Rightarrow \sure{e}$ and $\psi\Rightarrow \sure{\lnot e}$ and $\vDash\triple\varphi{C}{\varphi\nd\psi}$. Now take any $\mu\vDash\varphi$ and let $t = \minterm(\dem{\whl eC}{\upset\mu})$ be the minimum probability of termination. Then $\nu \vDash \psi \oplus_t \top$ for any $\nu\in\dem{\whl eC}{\upset\mu}$.
\end{lemma}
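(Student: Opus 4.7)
The plan is to exploit the fixed-point characterization $\de{\whl eC}^\dagger(\upset\mu) = \bigcap_n \Phi^n_{\langle C,e\rangle}(\bot_\C^\bullet)^\dagger(\upset\mu)$ and peel off the loop iteration by iteration, tracking the mass that has terminated (in $\psi$) versus the mass still looping (in $\varphi$). Fix $\mu \vDash \varphi$ and any $\nu \in \dem{\whl eC}{\upset\mu}$. Since $\varphi \Rightarrow \sure e$, on the first unfolding the loop-body semantics $\de{C}^\dagger(\upset\mu)$ is pertinent, and by the hypothesis $\vDash \triple{\varphi}{C}{\varphi \nd \psi}$ every element of that set satisfies $\varphi \nd \psi$, hence decomposes as $q_1 \cdot \mu'_1 + (1-q_1) \cdot \tau_1$ with $\mu'_1 \vDash \varphi$ and $\tau_1 \vDash \psi$. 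Because $\psi \Rightarrow \sure{\lnot e}$, the $\tau_1$-part exits the loop at this iteration and survives unchanged through all further unfoldings; the $\mu'_1$-part re-enters the loop and the same argument applies inductively.

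Iterating this decomposition along the chain of unfoldings that together produce $\nu$ yields a sequence of sub-distributions $a_1 \tau_1, a_2 \tau_2, \ldots$, where each normalized $\tau_k$ satisfies $\psi$ and $a_k$ is the mass that terminated at exactly iteration $k$. Everything that has not terminated after $n$ steps contributes at most $\prod_{i\le n}(1-q_i)$ of the mass and is pushed forward through $\Phi^n(\bot_\C^\bullet)$, which on such $\mu'_n$ imposes no constraint. Passing to the intersection, the total terminated mass is exactly $\sum_k a_k = 1 - \nu(\bot)$, so letting $\alpha \triangleq \sum_k (a_k / (1 - \nu(\bot))) \cdot \tau_k$ and applying \Cref{lem:convex_sum_idem} (with $I=\mathbb N$ and the coefficients as a distribution on $I$) yields $\alpha \vDash \psi$. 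Setting $\beta \triangleq \nu|_{\{\bot\}}/\nu(\bot)$ (or any $\top$-distribution when $\nu(\bot)=0$), we obtain $\nu = (1-\nu(\bot))\cdot\alpha + \nu(\bot)\cdot\beta$, so $\nu \vDash \psi \oplus_{1-\nu(\bot)} \top$. Since $1-\nu(\bot) \geq t$, a simple weakening (reabsorbing the excess $\psi$-mass into the $\top$-slot) upgrades this to $\nu \vDash \psi \oplus_t \top$, as required.

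The main obstacle will be rigorously extracting the iteration-indexed decomposition from the fixed-point semantics: the definition of $\de{\whl eC}$ as an intersection does not directly expose a ``termination depth,'' and the chosen decomposition at iteration $k$ must be consistent with the one chosen at iteration $k+1$ in order for the masses $a_k$ to be well-defined. I plan to address this by an induction on the finite unfoldings $\Phi^n(\bot_\C^\bullet)^\dagger(\upset\mu)$, using \Cref{lem:dagger-upset} to reduce Kleisli application to $\upset\mu$ to applications at the individual distributions produced by the body, and using \Cref{lem:convex_sum_idem} for the bookkeeping at each stage so that the partial $\psi$-mass after $n$ steps already satisfies $\psi$. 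Once each finite stage provides a compatible refinement, the total masses $a_k$ are forced, and countable convex combination (as in \Cref{lem:countably-convex} together with \Cref{lem:convex_sum_idem}) closes the argument.
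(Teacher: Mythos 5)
There is a genuine gap, and it sits exactly where your argument does its key bookkeeping: the claim that ``the total terminated mass is exactly $\sum_k a_k = 1-\nu(\bot)$,'' from which you build the decomposition $\nu = (1-\nu(\bot))\cdot\alpha + \nu(\bot)\cdot\beta$ with $\alpha\vDash\psi$. This is false in the Smyth/up-closed semantics. The mass that has not terminated after $n$ unfoldings is sent through $\bot_\C^\bullet$, whose value is $\upset{\delta_\bot} = \D(\Sigma_\bot)$; as you yourself observe, this imposes ``no constraint,'' which means precisely that in an arbitrary element $\nu$ of $\bigcap_n \Phi^n_{\langle C,e\rangle}(\bot_\C^\bullet)^\dagger(\upset\mu)$ the never-terminating mass may have been promoted onto proper states that do not satisfy $\psi$, rather than remaining on $\bot$. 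So $1-\nu(\bot)$ can strictly exceed the genuinely terminated mass $\sum_k a_k$, and $\alpha$ as you define it is neither a distribution nor a witness for $\psi$. Concretely, take $\whl{x>0}{((x\coloneqq 0)\nd\skp)}$ with $\varphi = \sure{x=1}$ and $\psi = \sure{x=0}$: the invariant triple $\triple{\varphi}{C}{\varphi\nd\psi}$ holds, yet the scheduler that always picks $\skp$ forces $\delta_\bot$ into every finite approximant, so $\dem{\whl{x>0}{C}}{\upset{\delta_{\sigma[x:=1]}}} = \D(\Sigma_\bot)$. This set contains $\nu = \delta_{\sigma[x:=5]}$, for which $\nu(\bot)=0$ but $\nu\not\vDash\psi$, refuting your intermediate claim $\nu\vDash\psi\oplus_{1-\nu(\bot)}\top$. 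The lemma itself survives only because here $t=\minterm=0$ and $\psi\oplus_0\top\Leftrightarrow\top$ --- which is exactly why the statement is phrased in terms of $t$, a property of the whole result \emph{set}, rather than $1-\nu(\bot)$, a property of the individual distribution.

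The repair you sketch at the end --- induction on the finite unfoldings $F_n(\upset\mu) = \Phi^n_{\langle C,e\rangle}(\bot_\C^\bullet)^\dagger(\upset\mu)$, using \Cref{lem:dagger-upset} to localize the Kleisli extension and \Cref{lem:convex_sum_idem} for the stage-by-stage bookkeeping --- is essentially the paper's proof, but carrying it out forces the correction: the inductive invariant must be that \emph{every} $\xi\in F_n(S)$ satisfies $\psi\oplus_{\minterm(F_n(S))}\top$ (with base case $\minterm(F_0)=0$, so the statement starts as the trivial $\top$ and tightens as $n$ grows), followed by the observation that $\minterm(\bigcap_n F_n(\upset\mu)) = \sup_n \minterm(F_n(\upset\mu))$ and a limit step for the witnessing decompositions. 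Formulating the induction over all of $F_n(S)$ at once also dissolves your worry about choosing compatible decompositions of a single $\nu$ across successive unfoldings, since no single trace of $\nu$ backwards through the iterates is ever needed. As written, though, the proposal proves a statement that is false for individual elements of the result set, so it cannot be patched by the final weakening step alone.
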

\begin{proof}
Let:
\[
F_n(S) \triangleq \Phi_{\langle C,e\rangle}^\alpha\left(\bot_\C^\bullet\right)^\dagger(S)
\]
We begin by proving a more general claim: for any $n\in
\mathbb N$ and any set $S\in\C(\Sigma)$ such that $\nu\vDash\varphi$ for all $\nu\in S$, $\xi \vDash \psi \oplus_{\minterm(F_n(S))}\top$ for any $\xi \in F_n(S)$. The proof is by induction on $n$.
\begin{itemize}
\item\textbf{Base Case}. If $n = 0$, then the claim holds trivially, since $F_0(S) = \mathord\uparrow\{ \delta_\bot\}$ and therefore $\minterm(F_0(S)) = 0$ and so $\psi\oplus_{\minterm(F_0(S))} \top \Leftrightarrow\top$ and anything satisfies $\top$.

\item\textbf{Successor Case}. Suppose the claim holds for $n$. Now, we have:
\begin{align*}
F_{n+1}(S)
&= \Phi_{\langle C,e\rangle}^{n+1}\left(\bot_\C^\bullet\right)^\dagger(S)
\\
\intertext{Note that since $\nu\vDash\varphi$ for all $\nu \in S$, and therefore $\de{e}(\sigma) = \tru$ for all $\sigma\in\supp(\nu)$, we can simplify as follows:}
&= \Phi_{\langle C,e\rangle}^n\left(\bot_\C^\bullet\right)^\dagger(\dem{C}S)
\\
\intertext{By \Cref{lem:dagger-upset}.}
&= \smashoperator{\bigcup_{\nu \in \dem{C}S}} \Phi_{\langle C,e\rangle}^n\left(\bot_\C^\bullet\right)^\dagger(\upset\nu)
\\
\intertext{Now, from $\vDash\triple\varphi{C}{\varphi\nd\psi}$, we know that for every $\nu \in \de{C}^\dagger_\bot(S)$ there is a probability $p$ such that $\nu = p\cdot\nu_1 + (1-p)\cdot\nu_2$ and $\nu_1\vDash\varphi$ and $\nu_2\vDash\psi$. Using \Cref{lem:upcl-prob}, we get:}
&= \smashoperator{\bigcup_{
  \nu_1\vDash\varphi,\nu_2\vDash\psi,p \mid p\cdot\nu_1 + (1-p)\cdot\nu_2 \in \dem{C}S
}} \Phi_{\langle C,e\rangle}^n\left(\bot_\C^\bullet\right)^\dagger_\bot(\upset{\nu_1} \oplus_p \upset{\nu_2})
\\
\intertext{By \Cref{lem:oplus_dist}.}
&= \smashoperator{\bigcup_{
  \nu_1\vDash\varphi,\nu_2\vDash\psi,p \mid p\cdot\nu_1 + (1-p)\cdot\nu_2 \in \dem{C}S
}} \Phi_{\langle C,e\rangle}^n\left(\bot_\C^\bullet\right)^\dagger(\upset{\nu_1}) \oplus_p \Phi_{\langle C,e\rangle}^n\left(\bot_\C^\bullet\right)^\dagger_\bot(\upset{\nu_2})
\\
\intertext{Since $\nu_2\vDash\psi$ and $\psi\Rightarrow\sure{\lnot e}$, we know that applying $\Phi_{\langle C,e\rangle}^n\left(\bot_\C^\bullet\right)$ to $\nu_2$ will have no effect.}
&= \smashoperator{\bigcup_{
  \nu_1\vDash\varphi,\nu_2\vDash\psi,p \mid p\cdot\nu_1 + (1-p)\cdot\nu_2 \in \dem{C}S
  }} F_{n}(\upset{\nu_1}) \oplus_p \upset{\nu_2}
\end{align*}
So, for any $\xi \in F_{n+1}$, we know that $\xi = p\cdot \xi' + (1-p)\cdot\nu_2$ such that $p\cdot\nu_1 + (1-p)\cdot\nu_2 \in\dem{C}S$ and $\nu_1 \vDash\varphi$ and $\nu_2\vDash\psi$, and $\xi'\in F_n(\nu_1)$. By the induction hypothesis, we know that $\xi' \vDash \psi \oplus_{\minterm(F_n(\{\nu_1\}))} \top$. Therefore, $\xi \vDash (\psi \oplus_{\minterm(F_n(\{\nu_1\}))} \top) \oplus_p \psi$ and  so $\xi \vDash \psi \oplus_{p\cdot \minterm(F_n(\{\nu_1\})) + (1-p)} \top$.

It remains only to show that:
\begin{align*}
\minterm(F_{n+1}(S))
&= \inf_{\xi \in F_{n+1}(S)} (1 - \xi(\bot)) \\
&= \inf_{\xi \in \bigcup_{\nu \in \dem{C}S} F_{n}(\{\nu_1\}) \oplus_p \{\nu_2\}} (1 - \xi(\bot))
\\
&= \inf_{\nu \in \dem{C}S}  \inf_{\xi\in F_{\alpha}(\{\nu_1\}) \oplus_p \{\nu_2\}} (1 - \xi(\bot))
\\
\intertext{Since we know that $\bot\notin\supp(\nu_2)$}
&= \inf_{\nu \in \dem{C}S} \inf_{\xi \in F_{n}(\{\nu_1\})} (p\cdot (1 - \xi(\bot)) + 1-p)
\\
&= \inf_{\nu \in \dem{C}S} p\cdot (\inf_{\xi \in F_{n}(\{\nu_1\})} (1 - \xi(\bot))) + 1-p
\\
&= \inf_{\nu \in \dem{C}S} p\cdot \minterm(F_{n}(\{\nu_1\})) + 1-p
\\
&\le p\cdot \minterm(F_{n}(\{\nu_1\})) + 1-p
\end{align*}
\end{itemize}

Now, since $\dem{\whl eC}{\upset\mu} = \sup_{n\in\mathbb N} F_n(\upset\mu) = \bigcap_{n\in\mathbb N} F_n(\upset\mu)$, we have:
\begin{align*}
\minterm(\dem{\whl eC}{\upset\mu})
&= \minterm(\bigcap_{n\in\mathbb N} F_n(\upset\mu))
\\
&= \inf_{\xi \in \bigcap_{n\in\mathbb N} F_n(\upset\mu)}(1-\xi(\bot))
\\
\intertext{Now since the $F_n(\upset\mu)$ terms form a chain, the probability of $\bot$ decreases monotonically as $n$ increases:}
&= \sup_{n \in \mathbb N} \inf_{\xi\in F_n(\upset\mu)} (1-\xi(\bot))
\\
&= \sup_{n\in\mathbb N} \minterm(F_n(\upset\mu))
\end{align*}
Now, take any $\xi \in \dem{\whl eC}{\upset\mu}$. We know that $\xi \in F_n(\upset\mu)$ for all $n\in\mathbb N$ and therefore $\xi \vDash \psi \oplus_{\minterm(F_n(\upset\mu))} \top$ for all $n\in\mathbb N$. Since we also know that $\minterm(\dem{\whl eC}{\upset\mu}) = \sup_{n\in\mathbb N} \minterm(F_n(\upset\mu))$, we get that $\xi\vDash \psi\oplus_{\minterm(\dem{\whl eC}{\upset\mu})}\top$.

\end{proof}

Building on the previous lemma, we will show that in fact the probability of termination can only be zero or one. More precisely, if there is any positive probability of termination, then the program must almost surely terminate.

\begin{lemma}[Almost Sure Termination]\label{lem:zo-ast}
Suppose that $\varphi\Rightarrow \sure{e}$ and $\psi\Rightarrow \sure{\lnot e}$ and $\vDash\triple\varphi{C}{\varphi\nd\psi}$ and there exists some probability $p>0$ such that $\minterm(\dem{\whl eC}{\upset\mu}) \ge p$ for any $\mu\vDash\varphi$. 
Then, for any $\mu\vDash\varphi$, $\minterm(\dem{\whl eC}{\upset\mu}) = 1$.
\end{lemma}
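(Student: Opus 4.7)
The plan is to show that $\nu(\bot) = 0$ for every $\nu \in \dem{\whl eC}{\upset\mu}$ whenever $\mu \vDash \varphi$, which gives $\minterm(\dem{\whl eC}{\upset\mu}) = 1$ directly. The argument formalises the informal product-of-tails reasoning from the paragraph preceding the statement: I write the non-termination mass of $\nu$ as a product of per-iteration continue-probabilities and observe that each tail of that product is bounded by $1-p$, forcing the whole product to collapse to zero.

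Fix $\mu \vDash \varphi$ and $\nu \in \dem{\whl eC}{\upset\mu}$, and set $q := \nu(\bot)$. By induction on $n$ I build a chain of decompositions
\[
  \nu \;=\; R_n\,\xi_n \,+\, (1 - R_n)\,\chi_n
\]
in which $R_n \in [0,1]$ is monotone non-increasing, $\chi_n \vDash \psi$ (so $\chi_n(\bot) = 0$), and $\xi_n \in \dem{\whl eC}{\upset{\mu^{(n)}}}$ for some $\mu^{(n)} \vDash \varphi$. The base case is $R_0 = 1$, $\xi_0 = \nu$, $\mu^{(0)} = \mu$. For the inductive step, the fixed-point unrolling gives $\dem{\whl eC}{\upset{\mu^{(n)}}} = \dem{\whl eC}{\dem{C}{\upset{\mu^{(n)}}}}$ (valid because $\varphi \Rightarrow \sure{e}$ selects the recursive branch). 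Applying the invariant triple $\vDash\triple\varphi{C}{\varphi \nd \psi}$ splits each element of $\dem{C}{\upset{\mu^{(n)}}}$ as $r\mu_\varphi + (1-r)\mu_\psi$, and \Cref{lem:upcl-prob,lem:oplus_dist} push that split through the Kleisli extension. Since $\mu_\psi$ has support in $\sem{\lnot e}$, the subsequent loop acts as the identity on it, producing a fully terminating contribution. Setting $R_{n+1} := R_n \cdot r$ and taking $\chi_{n+1}$ to be the convex combination of $\chi_n$ with the fresh terminated mass (which remains in $\psi$ by \Cref{lem:convex_sum_idem}), while choosing the split to extract the maximal available $\psi$-weight at each stage, yields the desired chain.

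The conclusion is then immediate. Since $\chi_n(\bot) = 0$, the decomposition gives $q = R_n\,\xi_n(\bot)$. Invoking \Cref{lem:invariant} on $\xi_n \in \dem{\whl eC}{\upset{\mu^{(n)}}}$ together with the standing hypothesis yields $\xi_n(\bot) \le 1 - p$, so $R_n \ge q/(1-p)$ for every $n$. Because each split was chosen to extract all available $\psi$-mass, the accumulated terminated mass $1 - R_n$ climbs monotonically to the total terminating mass $1 - q$, whence $R_n \to q$ as $n \to \infty$. Passing to the limit in $R_n \ge q/(1-p)$ gives $q \ge q/(1-p)$, which rearranges to $pq \le 0$; since $p > 0$ this forces $q = 0$.

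The main obstacle is verifying the convergence $R_n \to q$: it is immediate that $R_n$ is non-increasing and bounded below by $q$ (using $\xi_n(\bot) \le 1$), but showing equality in the limit requires knowing that the split at each stage can be chosen to push \emph{all} of that iteration's terminating mass into $\chi_n$. This in turn rests on the observation that every exit of the loop lands in $\sem\psi$, not merely in $\sem{\lnot e}$, which follows by iterating the invariant $\triple\varphi{C}{\varphi \nd \psi}$. Modulo this point the remaining steps are routine monotone-convergence manipulations.
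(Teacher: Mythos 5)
Your decomposition $\nu = R_n\,\xi_n + (1-R_n)\,\chi_n$ is constructible exactly as you describe, and the bound $R_n \ge q/(1-p)$ obtained by applying the hypothesis to $\xi_n$ is correct. The gap is the convergence $R_n \to q$. What your construction actually gives is that $1-R_n$ increases to the total mass that exits the loop \emph{at some finite stage of the unrolling}; call its limit $1-R_\infty$. Because the semantics is up-closed, this is \emph{not} the same as the proper-state mass $1-q$ of $\nu$: the residual part $\xi_n$ lives in $\dem{\whl eC}{\upset{\mu^{(n)}}} = \bigcap_k F_k(\upset{\mu^{(n)}})$, where the not-yet-terminated mass passes through $\eta(\bot) = \D(\Sigma_\bot)$ and may therefore be reassigned to proper states without ever being extracted as a $\psi$-exit at any finite stage. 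So $\xi_n$ can carry proper-state mass forever, $R_\infty$ can in general strictly exceed $\nu(\bot)$ (e.g.\ for $\whl{x>0}{(\skp \nd x\coloneqq 0)}$ the distribution $\delta_\tau$ with $\tau \in \sem{e}$ has $q=0$ but forces $R_n = 1$ for all $n$), and under the hypotheses of the lemma the equality $R_\infty = q$ holds only because both sides are $0$ --- which is precisely the conclusion you are trying to prove. Your stated justification (``every exit lands in $\sem\psi$'') addresses where the finite-stage exits go, not whether all of $\nu$'s terminating mass is accounted for by finite-stage exits, so it does not close this circle.

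The repair, which is the route the paper takes, is to apply the hypothesis not to the specific continuation $\xi_n$ of your fixed $\nu$ but to the tail \emph{ratios} $R_m/R_n$. For each $n$ one builds an element $\zeta_n \in \dem{\whl eC}{\upset{\mu^{(n)}}}$ that keeps all residual mass on $\bot$: the distributions $(R_m/R_n)\delta_\bot + (1-R_m/R_n)\chi_{n,m}$ lie in the finite approximants $F_{m-n}(\upset{\mu^{(n)}})$, and topological closure of each $F_k$ plus compactness yields a limit point $\zeta_n$ in the intersection with $\zeta_n(\bot) = \lim_m R_m/R_n$. The hypothesis $\minterm(\dem{\whl eC}{\upset{\mu^{(n)}}}) \ge p$ then gives $\lim_m R_m/R_n \le 1-p$ for \emph{every} $n$; if $R_\infty > 0$ the left side tends to $1$ as $n\to\infty$, a contradiction, so $R_\infty = 0$ and hence $q \le R_n \to 0$. (The paper phrases this as the infinite product $\prod_i q_i$ with every tail bounded by $1-p$, run through logarithms.) In short: your per-iteration bookkeeping and algebra are in the right spirit, but the hypothesis must be brought to bear on the worst-case tail of the whole set of outcomes, not on the $\bot$-mass of the particular $\nu$ you fixed.
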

\begin{proof}
Given that $\vDash\triple\varphi{C}{\varphi\nd\psi}$, then after each execution of the loop there is some probability of ending up in a state satisfying $\varphi$, so that the loop will continue. Call each probability in this sequence $q_\alpha$, so the total probability of nontermination is:
\[
  \PP[\text{nonterm}]
  \qquad=\qquad
  q_1 \times q_2 \times q_3 \times \cdots \times q_\alpha \times q_{\alpha+1}\times \cdots
\]
Each time the execution returns to a distribution $\mu\vDash\varphi$, we know that:
\[
  \minterm(\dem{\whl eC}{\upset\mu}) \ge p
\]
So, any tail of the product above must be at most $1-p$.
\[
  \PP[\text{nonterm}]
  \qquad=\qquad
  q_1 \times q_2 \times q_3 \times \cdots \times \underbrace{ q_\alpha \times q_{\alpha+1}\times \cdots }_{\le 1-p}
\]
Let $\rho_k = \prod^{k}_{i=1} q_i$, the $\nth k$ partial product.
Since the $\rho_k$ are monotonically decreasing and bounded below by $0$, we have that $\lim_{k \to \infty} \rho_{k}$ exists, ensuring that the infinite product above exists.

Set $\rho = \lim_{k \to \infty} \rho_k = \PP[\text{nonterm}]$.
We shall show that $\rho = 0$, by contradiction.
Suppose that $\rho > 0$.
Then, taking logarithms, we can convert the infinite product into a (converging) infinite series.
Setting $s_k = \log(\rho_k)$ and $s = \log(\rho)$, then we have
\[s_k = \log\left(\prod^{k}_{i = 1} q_i\right) = \sum_{i=1}^{k} \log(q_i)\]
Since $\log$ is continuous, $\lim_{k \rightarrow \infty} s_k = s$.
Similarly from the fact that the tail products of the $q_i$ are always bounded by $1 - p$, and the fact that $\log$ is monotone, we have that for all $j$,
\begin{equation}
\label{eqn:tailprod}
\sum^\infty_{i=j} \log(q_i) \leq \log(1 -p)
\end{equation}
Let $\epsilon = |\log(1 - p)| = -\log(1 - p)$
By convergence of $s_k$, there exists some $N$ such that for all $m, n > N$, $|s_m - s_n| < \epsilon/2$.
Taking limits as $m \to \infty$, we have then that for $n > N$, $|s - s_n| \leq \epsilon/2 < \epsilon$. Hence,
\begin{align*}
-\log(1-p) = \epsilon
&> \left| s - s_n \right| \\
&= \left| \sum_{i = 1}^\infty \log(q_i) - \sum_{i=1}^{n} \log(q_i) \right| \\
&= \left| \sum_{i = n + 1}^\infty \log(q_i) \right| \\
&= - \sum_{i = n + 1}^\infty \log(q_i)
\end{align*}
Thus the above implies that $\log(1 - p) < \sum_{i=n}^{\infty} \log(q_i)$, contradicting \cref{eqn:tailprod}.
\end{proof}

\begin{lemma}[Zero One Law]\label{lem:zero-one}
The following inference is valid:
\[
\inferrule{
  \varphi \Rightarrow \sure{e}
  \quad
  \psi\Rightarrow \sure{\lnot e}
  \quad
  \triple{\varphi}C{\varphi \nd \psi}
  \quad
  \triple{\varphi}{\whl eC}{\sure{\lnot e} \oplus_p \top}
  \quad
  p >0
}{
  \triple{\varphi}{\whl eC}{\psi}
}{\ruleref{Zero-One}}
\]
\end{lemma}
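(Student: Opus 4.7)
The plan is to combine the two lemmas immediately preceding the statement: Lemma \ref{lem:invariant} (Invariant Reasoning) and Lemma \ref{lem:zo-ast} (Almost Sure Termination). Fix an arbitrary $\mu \vDash \varphi$ and let $t = \minterm(\dem{\whl eC}{\upset\mu})$. The goal is to show that every $\nu \in \dem{\whl eC}{\upset\mu}$ satisfies $\psi$.

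First I would extract a lower bound on $t$ from the fourth premise. If $\triple{\varphi}{\whl eC}{\sure{\lnot e} \oplus_p \top}$ holds semantically, then every $\nu \in \dem{\whl eC}{\upset\mu}$ can be written as $p\cdot\nu_1 + (1-p)\cdot\nu_2$ with $\nu_1 \vDash \sure{\lnot e}$. Because $\sure{\lnot e}$ forces $\supp(\nu_1) \subseteq \sem{\lnot e} \subseteq \Sigma$, we have $\nu_1(\bot) = 0$, and therefore $\nu(\bot) \le 1-p$, i.e.\ $1-\nu(\bot) \ge p$. Taking the infimum over $\nu$ gives $t \ge p > 0$.

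Next I would apply Lemma \ref{lem:zo-ast} with this uniform lower bound: the premises $\varphi \Rightarrow \sure e$, $\psi \Rightarrow \sure{\lnot e}$, and $\triple{\varphi}C{\varphi\nd\psi}$ are exactly the hypotheses of that lemma, and we have just verified the additional hypothesis that $\minterm(\dem{\whl eC}{\upset{\mu'}}) \ge p$ for all $\mu' \vDash \varphi$. The conclusion is that $t = \minterm(\dem{\whl eC}{\upset\mu}) = 1$.

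Finally I would apply Lemma \ref{lem:invariant} to upgrade this termination fact to the desired postcondition: for any $\nu \in \dem{\whl eC}{\upset\mu}$, we obtain $\nu \vDash \psi \oplus_{t} \top = \psi \oplus_{1} \top$. Unfolding the satisfaction relation for $\oplus_1$, this means $\nu = 1\cdot\nu_1 + 0\cdot\nu_2$ with $\nu_1 \vDash \psi$, so $\nu = \nu_1 \vDash \psi$, as required. The main conceptual step is the first one—connecting the assertion $\sure{\lnot e} \oplus_p \top$ to the quantitative minimum-termination-probability bound that feeds into Lemma \ref{lem:zo-ast}—while the rest is bookkeeping via the two prior lemmas.
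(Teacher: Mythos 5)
Your proposal is correct and follows essentially the same route as the paper's own proof: both invoke Lemma~\ref{lem:invariant} to get $\nu \vDash \psi \oplus_t \top$ and Lemma~\ref{lem:zo-ast} together with the fourth premise to conclude $t = 1$. The only difference is that you spell out the step from the triple $\triple{\varphi}{\whl eC}{\sure{\lnot e}\oplus_p\top}$ to the bound $\minterm(\dem{\whl eC}{\upset\mu}) \ge p$, which the paper leaves implicit.
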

\begin{proof}
Suppose that $\mu\vDash\varphi$ and take any $\nu \in \dem{\whl eC}{\upset\mu}$. By \Cref{lem:invariant}, we know that $\nu\vDash \psi\oplus_t \top$ where $t = \minterm(\dem{\whl eC}{\upset\mu})$.
Using \Cref{lem:zo-ast} and the premise that $\triple{\varphi}{\whl eC}{\sure{\lnot e} \oplus_p \top}$, we know that $t = 1$, therefore $\nu \vDash \psi$.
\end{proof}

\subsection{Variants and Ranking Functions}
\label{app:variants}

\begin{theorem}[The Bounded Integer Variant Rule]
Let $p > 0$ be some nonzero probability and $(\varphi_n)_{n=0}^N$ be a finite sequence of assertions such that $\varphi_0 \Rightarrow \sure{\lnot e}$ and $\varphi_n \Rightarrow \sure{e}$ for all $n \ge 1$.
\[
\inferrule{
  \forall n \in [1,N].\quad
  \triple{\varphi_n}C{\textstyle(\bignd_{k=0}^{n-1} \varphi_k) \oplus_p (\bignd_{k=0}^N \varphi_k)}
}{
  \triple{\textstyle\bignd_{k=0}^N \varphi_k}{\whl eC}{\varphi_0}
}{\textnormal{\ruleref{Bounded Variant}}}
\]
\end{theorem}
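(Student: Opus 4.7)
The plan is to reduce the rule to a single application of the \ruleref{Zero-One} law. First split the precondition using \ruleref{ND Split} as $\bignd_{k=0}^N \varphi_k \Leftrightarrow \varphi_0 \nd \bignd_{k=1}^N \varphi_k$. On the $\varphi_0$ branch the loop exits without iterating, since $\varphi_0 \Rightarrow \sure{\lnot e}$, giving $\triple{\varphi_0}{\whl eC}{\varphi_0}$ by direct computation on the denotation $\de{\whl eC}$. On the $\bignd_{k=1}^N \varphi_k$ branch I would apply \ruleref{Zero-One} with invariant $\varphi \triangleq \bignd_{k=1}^N \varphi_k$ and terminating outcome $\psi \triangleq \varphi_0$, yielding $\triple{\varphi}{\whl eC}{\varphi_0}$. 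Recombining the two branches via \ruleref{ND Split} and collapsing $\varphi_0 \nd \varphi_0$ by idempotence produces the stated conclusion.

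\paragraph{Easy premises of \ruleref{Zero-One}.}
The implications $\varphi \Rightarrow \sure{e}$ and $\psi \Rightarrow \sure{\lnot e}$ are immediate from the hypotheses on the family $(\varphi_n)_{n=0}^N$. For the invariant triple $\triple{\varphi}{C}{\varphi \nd \psi}$, observe that $\varphi \nd \psi = \bignd_{k=0}^N \varphi_k$; each given postcondition $(\bignd_{k=0}^{n-1} \varphi_k) \oplus_p (\bignd_{k=0}^N \varphi_k)$ weakens to $\bignd_{k=0}^N \varphi_k$ using $\oplus_p \Rightarrow \nd$ followed by idempotence of $\nd$, and \ruleref{ND Split} across $n \in [1,N]$ assembles the $N$ resulting triples (with one further use of idempotence on the precondition side) into exactly the invariant triple required by \ruleref{Zero-One}.

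\paragraph{The termination premise (main obstacle).}
The hard step is to supply some $p' > 0$ with $\triple{\varphi}{\whl eC}{\sure{\lnot e} \oplus_{p'} \top}$; I claim $p' = p^N$ suffices. I would prove by induction on $n \in [0,N]$ the sharper statement
\[
\vDash \triple{\varphi_n}{\whl eC}{\sure{\lnot e} \oplus_{p^n} \top}.
\]
The base case $n=0$ reuses the observation from the Strategy paragraph (via $\sure{\lnot e} \oplus_1 \top \Leftrightarrow \sure{\lnot e}$). For the inductive step I unfold one iteration of the loop denotationally, using the fixed-point identity $\de{\whl eC} = \de{\ift e{C \fatsemi \whl eC}}$ implicit in \Cref{fig:semantics}: starting from $\mu \vDash \varphi_n$ with $n \ge 1$, the body $C$ produces some $\xi \vDash \psi_{n-1} \oplus_p \psi_N$, where $\psi_m \triangleq \bignd_{k=0}^m \varphi_k$. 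Writing $\xi = p \xi_1 + (1-p) \xi_2$, I further decompose $\xi_1 \vDash \psi_{n-1}$ into a convex combination of distributions each satisfying some $\varphi_k$ with $k < n$, apply the induction hypothesis to each component (uniformly weakening $p^k$ down to $p^{n-1}$, valid since $k \le n-1$), and recombine via \Cref{lem:convex_sum_idem} to obtain $\sure{\lnot e} \oplus_{p^{n-1}} \top$ on that side. The $\xi_2 \vDash \psi_N$ side I bound trivially by $\top$. Combining through the outer $\oplus_p$ and then simplifying $(\sure{\lnot e} \oplus_{p^{n-1}} \top) \oplus_p \top$ by associativity plus idempotence of $\top$ under $\oplus$ yields $\sure{\lnot e} \oplus_{p^n} \top$, closing the induction. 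A final \ruleref{ND Split} over $n \in [1,N]$, uniformly weakening every $p^n$ to the worst case $p^N$, delivers the required termination premise, with $p^N > 0$ because $p > 0$ and $N$ is finite. The chief difficulty is precisely this semantic unfolding: no explicit loop-unfold rule is exposed in the object logic, so the argument must descend to the denotational definition of $\whl eC$ as $\mathsf{lfp}(\Phi_{\langle C,e\rangle})$, and recombining the $\nd$-decomposition of $\psi_{n-1}$ into a single postcondition is only possible through \Cref{lem:convex_sum_idem}.
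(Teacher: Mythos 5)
Your proof is correct and follows essentially the same route as the paper's: reduce to the \ruleref{Zero-One} law with invariant pair $\varphi \triangleq \bignd_{k=1}^N \varphi_k$ and $\psi \triangleq \varphi_0$, obtain the invariant triple by weakening each $\oplus_p$ postcondition to $\bignd_{k=0}^N\varphi_k \Rightarrow \varphi\nd\psi$ and assembling with \ruleref{ND Split} plus idempotence, and supply the termination premise with bound $p^N$. You are in fact more explicit than the paper in two places it leaves implicit: the $\varphi_0$ component of the stated precondition (which you discharge by observing the loop does not execute under $\sure{\lnot e}$), and the justification of the $p^N$ termination probability, which the paper asserts in a single informal sentence and you establish by induction on the rank via a denotational unfolding of the loop together with \Cref{lem:convex_sum_idem}.
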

\begin{proof}
This rule is a straightforward application of the \ruleref{Zero-One} law.
Let $\varphi = \varphi_1 \nd \cdots \nd \varphi_N$ and $\psi = \varphi_0$, then it is relatively easy to see that $\varphi\Rightarrow \sure{e}$ and $\psi \Rightarrow \sure{\lnot e}$ and that:
\[
  \textstyle(\bignd_{k=0}^{n-1} \varphi_k) \oplus_p (\bignd_{k=0}^N \varphi_k)
  \quad\Rightarrow\quad
  \textstyle(\bignd_{k=0}^{n-1} \varphi_k) \nd (\bignd_{k=0}^N \varphi_k)
  \quad\Rightarrow\quad
  \bignd_{k=0}^N \varphi_k
  \quad\Rightarrow\quad
  \varphi \nd \psi
\]
So, we have:
\[
\inferrule*[right={\ruleref{ND Split}~$\times~ N$}]{
  \forall n \in [1,N].
  \quad
  \inferrule*[Right=\ruleref{Consequence}]{
      \triple{\varphi_n}C{\textstyle(\bignd_{k=0}^{n-1} \varphi_k) \oplus_p (\bignd_{k=0}^N \varphi_k)}
  }{
    \triple{\varphi_n}C{\varphi \nd\psi}
  }
}{
  \triple{\varphi}C{\varphi \nd \psi}
}
\]
So, $\varphi$ and $\psi$ are an invariant pair for $C$, giving us the first premise for the \ruleref{Zero-One} law. Now, it just remains to show that there is a nonzero probability of termination. This is simple, since with probability $p$, the index of the variant must decrease by at least 1 on each iteration, so after $N$ iterations the program must terminate with probability at least $p^N$. Since $p$ is nonzero and $N$ is finite, then $p^N$ is also finite, so we have.
\[
  \vDash\triple{\varphi}{\whl eC}{\sure{\lnot e} \oplus_{p^N} \top}
\]%
\qedhere
\end{proof}

We now show how to derive the bounded integer rule of \citet[Lemma 7.5.1]{mciver2005abstraction} and \citet[Theorem 6.7]{kaminski}.

\begin{lemma}
Let $P$ be some basic assertion (the invariant), $e_\mathsf{rank}$ be an integer-valued ranking expression, and $p>0$ be some nonzero probability. Also, suppose that $\sure{P\land e} \Rightarrow \sure{\ell \le e_{\mathsf{rank}} \le h}$. Then, the following inference rule is derivable:
\[
\inferrule{
  \forall z \in [ \ell, h ]. \quad \triple{\sure{P \land e \land e_\mathsf{rank} = z}}C{\sure{P \land e_\mathsf{rank} < z} \oplus_p \sure{P}}
}{
  \triple{\sure{P}}{\whl eC}{\sure{P\land\lnot e}}
}{\textnormal{\ruleref{Bounded Rank}}}
\]
\end{lemma}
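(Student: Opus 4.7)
\medskip
\noindent
\textbf{Proof plan.}
The plan is to derive \ruleref{Bounded Rank} directly as a specialization of \ruleref{Bounded Variant}, following the hint in the paragraph preceding the statement. I will set $N \triangleq h - \ell + 1$ and use the variant family
\[
  \varphi_0 \triangleq \sure{P \land \lnot e},
  \qquad
  \varphi_n \triangleq \sure{P \land e \land e_\mathsf{rank} = \ell + n - 1}
  \quad\text{for } 1 \le n \le N.
\]
The side conditions $\varphi_0 \Rightarrow \sure{\lnot e}$ and $\varphi_n \Rightarrow \sure{e}$ for $n \ge 1$ are immediate from the shape of the $\varphi_n$, so the entire derivation reduces to checking the premise of \ruleref{Bounded Variant} and wrapping the conclusion with \ruleref{Consequence}.

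To check the premise at index $n \in \{1, \dots, N\}$, I instantiate the hypothesis of \ruleref{Bounded Rank} with the integer $\ell + n - 1$, obtaining
\[
  \triple{\varphi_n}{C}{\sure{P \land e_\mathsf{rank} < \ell + n - 1} \oplus_p \sure{P}},
\]
and then weaken the two disjuncts of the postcondition. Specifically, I need the two propositional implications
\[
  \sure{P \land e_\mathsf{rank} < \ell + n - 1} \;\Rightarrow\; \textstyle\bignd_{k=0}^{n-1} \varphi_k,
  \qquad
  \sure{P} \;\Rightarrow\; \textstyle\bignd_{k=0}^{N} \varphi_k.
\]
Each of these is established by iterated Hahn-style splitting: any $\mu \vDash \sure{P}$ can be partitioned first into the $\lnot e$ portion (which satisfies $\varphi_0$) and the $e$ portion, and the latter is then further split into finitely many pieces indexed by the finitely many values $\ell, \ell+1, \dots, h$ that $e_\mathsf{rank}$ can take under $\sure{P \land e} \Rightarrow \sure{\ell \le e_\mathsf{rank} \le h}$. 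The same argument restricted to the sub-range $e_\mathsf{rank} < \ell + n - 1$ yields the first implication. With these in hand, \ruleref{Consequence} gives the exact premise required by \ruleref{Bounded Variant}.

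Applying \ruleref{Bounded Variant} then yields $\triple{\bignd_{k=0}^N \varphi_k}{\whl e C}{\varphi_0}$. A final appeal to \ruleref{Consequence} using $\sure{P} \Rightarrow \bignd_{k=0}^N \varphi_k$ (the same decomposition argument as above, with no upper bound on the range) strengthens the precondition to $\sure{P}$, and since $\varphi_0 = \sure{P \land \lnot e}$ the postcondition already matches the conclusion of \ruleref{Bounded Rank}.

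The only mildly delicate step is the decomposition of $\sure{P}$ into a $\nd$-conjunction of the $\varphi_k$. This is where the boundedness hypothesis $\sure{P \land e} \Rightarrow \sure{\ell \le e_\mathsf{rank} \le h}$ is essential: without it the index set of the variant family would be infinite and \ruleref{Bounded Variant} would not apply. Everything else is a routine assembly of the inference rules, so I expect the splitting lemma to be the only place requiring care, and it follows from repeatedly applying the basic fact $\sure{Q} \Rightarrow \sure{Q \land R} \nd \sure{Q \land \lnot R}$ already invoked in the derivation of the Hoare-style if-rule in \Cref{sec:rules}.
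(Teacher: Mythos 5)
Your proposal is correct and follows essentially the same route as the paper: the same variant family $\varphi_0 = \sure{P\land\lnot e}$, $\varphi_n = \sure{P\land e\land e_\mathsf{rank} = \ell+n-1}$ with $N = h-\ell+1$, the same two decomposition implications $\sure{P}\Rightarrow\bignd_{k=0}^N\varphi_k$ and $\sure{P\land e_\mathsf{rank}<\ell+n-1}\Rightarrow\bignd_{k=0}^{n-1}\varphi_k$ (which the paper justifies by the same support-wise case analysis on $e$ and the value of $e_\mathsf{rank}$ under the boundedness hypothesis), followed by \ruleref{Consequence} and \ruleref{Bounded Variant}. No gaps.
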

\begin{proof}
Let $N = h - \ell + 1$, and let $(\varphi_n)_{n=0}^N$ be defined as follows:
\[
  \varphi_n = \left\{
    \begin{array}{ll}
      \sure{P\land \lnot e} & \text{if}~ n=0 \\
      \sure{P \land e \land e_\mathsf{rank} = \ell + n - 1} & \text{if}~ n \in [1,N]
    \end{array}
  \right.
\]
First, we argue that $\sure{P} \Rightarrow\varphi_0 \nd \cdots \nd \varphi_N$. Suppose that $\mu\vDash \sure{P}$, so $\sigma\in\sem{P}$ for each $\sigma\in\supp(\mu)$. It must be that $\de{e}(\sigma)$ is either $\tru$ or $\fls$. If it is $\tru$, then we also know that $\ell\le\de{e_\mathsf{rank}}(\sigma) \le h$ since $\sure{P\land e} \Rightarrow \sure{\ell \le e_{\mathsf{rank}} \le h}$, and therefore $\eta(\sigma) \vDash \varphi_{n} = \sure{P \land e \land e_{\mathsf{rank}} = \ell +n-1}$ where $n=\de{e_\mathsf{rank}}(\sigma) - \ell + 1$. Now suppose that $\de{e}(\sigma) = \fls$, then clearly $\eta(\sigma) \vDash \varphi_0 = \sure{P\land\lnot e}$. Since this is true for all $\sigma\in\supp(\mu)$, we have:
\[
  \sure{P}
  \quad\Rightarrow\quad
  \sure{P\land \lnot e} \nd \sure{P \land e \land e_\mathsf{rank} = \ell} \nd \cdots \nd \sure{P \land e \land e_\mathsf{rank} = h}
  \quad = \quad
  \varphi_0 \nd \cdots \nd \varphi_N
\]
Next, we show that $\sure{P \land e_\mathsf{rank} < \ell+n-1} \Rightarrow \varphi_0 \nd \cdots \nd\varphi_{n-1}$. Suppose that $\mu\vDash \sure{P \land e_\mathsf{rank} < \ell+n-1}$, meaning that $\sigma\in\sem{P \land e_\mathsf{rank} < \ell+n-1}$ for each $\sigma\in \supp(\mu)$. If $\de{e}(\sigma) = \tru$, then we know that $\ell \le \de{e_\mathsf{rank}}(\sigma)\le h$ since $\sure{P\land e} \Rightarrow \sure{\ell\le e_\mathsf{rank}\le h}$. We also know that $\de{e_\mathsf{rank}}(\sigma) < \ell + n - 1$, so there must be some $m < n$ such that $\de{e_\mathsf{rank}}(\sigma) = \ell + m -1$, so $\eta(\sigma)\vDash \varphi_m$. If instead $\de{e}(\sigma) = \fls$, then $\eta(\sigma)\vDash \varphi_0$. Therefore, $\mu \vDash \varphi_0 \nd \cdots \nd \varphi_{n-1}$.

We now complete the derivation as follows:
\[
\inferrule*[right=\ruleref{Bounded Variant}]{
  \forall n \in [1,N].
  \quad
  \inferrule*[Right=\ruleref{Consequence}]{
    \inferrule*{
      \forall z\in[\ell,h].\quad
      \triple{\sure{P \land e \land e_\mathsf{rank} = z}}C{\sure{P\land e_\mathsf{rank} < z} \oplus_p \sure{P}}
    }{
      \triple{\sure{P \land e \land e_\mathsf{rank} = \ell + n -1}}C{\sure{P\land e_\mathsf{rank} < \ell + n -1} \oplus_p \sure{P}}
    }
  }{
    \triple{\sure{P \land e \land e_\mathsf{rank} = \ell + n -1}}C{\textstyle(\bignd_{k=0}^{n-1} \varphi_k) \oplus_p (\bignd_{k=0}^{N} \varphi_k)}
  }
}{
  \triple{\sure{P}}{\whl eC}{\sure{P\land\lnot e}}
}
\]
\end{proof}

\subsection{The Progressing Rank Rule}
\label{app:new-rule}

In this section, we present our propositional variant of a more complex and powerful almost sure termination rule that was originally developed by \citet{mciver2018new} and later refined by \citet[\S 6.2.3]{kaminski}. Similar to the \ruleref{Bounded Rank} rule, the new rule relies on an invariant $P$ and a ranking function $e_\mathsf{rank}$. However, this time $e_\mathsf{rank}$ is unbounded and is nonnegative-rational-valued rather than integer-valued. In addition, two more antitone functions are needed as well:
\begin{align*}
  \de{e_\mathsf{rank}} &\colon \Sigma \to \mathbb{Q}_{\ge0}
  && \textsf{(Ranking Function)}
  \\
  p & \colon  \mathbb{Q}_{\ge0} \to (0,1]
  && \textsf{(Probability)}
  \\
  d & \colon  \mathbb{Q}_{\ge0} \to  \mathbb{Q}_{\ge0}
  && \textsf{(Decrease)}
\end{align*}
Antitonicity means that $p(r) \ge p(s)$ if $r \le s$, and the same for $d$. Additionally, the loop guard $e$ must be false iff $e_\mathsf{rank} = 0$.

The intuition behind $p$ and $d$ is that as the rank gets closer to $0$ (implying termination), the probability of decreasing the rank and the magnitude of such a decrease both increase. The premise of the rule is that if the rank is initially $k$, then after one iteration of the loop, rank must decrease by at lease $d(k)$ with probability at least $p(k)$. In addition, the expected rank must decrease monotonically as well. The rule is shown below:
\[
\inferrule{
  \triple{\sure{P\land e \land e_{\mathsf{rank}}=k}}C{
     \sure{P\land e_{\mathsf{rank}} \le k-d(k)} \oplus_{p(k)} \sure{P\land e_{\mathsf{rank}} \le k + {\textstyle\frac{p(k)}{1-p(k)}}d(k)}
   }
}{
  \triple{\sure{P}}{\whl eC}{\sure{P\land\lnot e}}
}{\ruleref{Progressing Rank}}
\]
Note that the expected decrease in rank is guaranteed by the fact that there are two probabilistic outcomes, and in the weighted average we have $e_\mathsf{rank} \le k$ (more on that soon, when we discuss soundness).

The weakest pre-expectation version of this rule presented in \citet[\S 6.2.3]{kaminski} has 3 different premises, which involve different concepts such as \emph{sub-invariants} and \emph{super-invariants}, as well as both the \emph{demonic} and \emph{angelic} $\wpre$ calculi. We contend that the Outcome Logic style version presented here is simpler because all three of those premises have been coalesced, using just a single concept: \emph{demonic outcome triples}.

\subsubsection*{Soundness of the Rule}
We first establish that $\varphi \triangleq \sure{P \land e}$ and $\psi \triangleq \sure{P\land \lnot e}$ is an invariant pair for the program. That is, we must prove that $\vDash\triple{\sure{P\land e}}C{\sure{P\land e}\nd \sure{P\land \lnot e}}$ is a valid triple.

Suppose that $\mu\vDash \sure{P \land e}$. Then we can partition $\mu$ into a countable number of components based on the value of $e_\mathsf{rank}$. That is, there must be some distribution $\xi \in \D(\mathbb{Q}_{\ge 0})$ and distributions $(\mu_k)_{k\in \supp(\xi)}$ such that $\mu = {\sum_{k \in \supp(\xi)}} \xi(k) \cdot \mu_k$ and $\de{e_\mathsf{rank}}(\sigma) = k$ for each $k\in\supp(\xi)$ and $\sigma\in\supp(\mu_k)$. Clearly, we have $\mu_k \vDash \sure{P \land e \land e_\mathsf{rank} = k}$ for each $k \in \supp(\xi)$, so by the premise of \ruleref{Progressing Rank}, we know that:
\begin{align*}
  \dem{C}{\upset{\mu_k}}
  \vDash& \sure{P\land e_{\mathsf{rank}} \le k-d(k)} \oplus_{p(k)} \sure{P\land e_{\mathsf{rank}} \le k + {\textstyle\frac{p(k)}{1-p(k)}}d(k)}
  \\
  \Rightarrow & \; \sure{P}
  \\
  \Rightarrow & \; \sure{P \land e} \nd \sure{P\land \lnot e}
\end{align*}
Therefore, by \Cref{lem:convex_sum_idem}, $\mu \vDash \sure{P\land e} \nd \sure{P\land \lnot e}$ and by \Cref{lem:invariant}, $\sure{P\land \lnot e}$ applies to all the terminating outcomes. All that remains is to show that the program almost surely terminates. For this, we will appeal to \citet[Theorem 6.8]{kaminski}, which relies on the weakest pre-expectation $\mathsf{wp}\de{C}(e)$ as defined in \citet[\S 4.1]{kaminski} and is equal to the minimum expected value of the expression $e$ after running the program $C$:
\begin{equation}\label{eq:wp}
  \mathsf{wp}\de{C}(e)(\sigma) = \inf_{\mu \in \de{C}(\sigma)} \mathbb E_\mu[e]
  \qquad\text{where}\qquad
  \mathbb{E}_\mu[e] \triangleq \smashoperator{\sum_{\tau\in\supp(\mu)}} \mu(\tau) \cdot \de{e}(\tau)
\end{equation}
In addition, let the Iverson brackets $[e]$ evaluates to 0 if $e$ is false and 1 if $e$ is true. To invoke Theorem 6.8 of \citet{kaminski}, we must show that the four following premises hold:
\begin{enumerate}
\item[\textsc{a.}] $P$ is a \textsf{wp}-subinvariant \cite[Definiton 5.1b]{kaminski} of $\whl eC$:
\[
  [P] \le [\lnot e] \cdot [P] + [e]\cdot \mathsf{wp}\de{C}([P])
\]
If $e$ is false, then this reduces to $[P] \le [P]$, which is trivially true. If not, it reduces to $[P]\le \mathsf{wp}\de{C}([P])$, which is implied by the fact the $P$ is an invariant of the loop, as we just showed.

\item[\textsc{b.}] $e_\mathsf{rank} = 0$ implies termination. This was an assumption.

\item[\textsc{c.}] $e_\mathsf{rank}$ is an \textsf{awp}-superinvariant \cite[Definiton 5.1c]{kaminski} of $\whl eC$:
\[
  e_\mathsf{rank} \ge [\lnot e]\cdot e_\mathsf{rank} + [e]\cdot \mathsf{awp}\de{C}(e_\mathsf{rank})
\]
If $e$ is false, then this reduces to $e_\mathsf{rank} \ge e_\mathsf{rank}$, which is trivially true. If not, then it reduces to $e_\mathsf{rank} \ge\mathsf{awp}\de{C}(e_\mathsf{rank})$. Here $\mathsf{awp}$ is the \emph{angelic} weakest pre, which maximizes expected values instead of minimizing them. Formally, the definition is the same as (\ref{eq:wp}), but with the $\inf$ replaced by a $\sup$. It is used since we are now bounding the value from above. This condition simply states that the expected value of $e_\mathsf{rank}$ must not increase after executing the loop body.

To show that this is true, take any state $\sigma \in \sem{P\land e}$ and let $k = \dem{e_\mathsf{rank}}\sigma$\footnote{
  Note that condition \textsc{c} does not require $P$ to hold, however we can modify $e_\mathsf{rank}$ such that it is equal to the constant 0 whenever $P$ is false without affecting the truth of any of our other claims.
}.
We know from the premise of \ruleref{Progressing Rank} that $\mu\vDash \sure{P\land e_{\mathsf{rank}} \le k-d(k)} \oplus_{p(k)} \sure{P\land e_{\mathsf{rank}} \le k + {\textstyle\frac{p(k)}{1-p(k)}}d(k)}$ for any $\mu \in \de{C}(\sigma)$. That means that there are $\mu_1$ and $\mu_2$ such that $\mu = p(k) \cdot \mu_1 + (1-p(k))\cdot\mu_2$ and:
\[
  \supp(\mu_1)\subseteq\sem{P\land e_{\mathsf{rank}} \le k-d(k)}
  \quad\text{and}\quad
  \supp(\mu_2)\subseteq \sem{P\land e_{\mathsf{rank}} \le k + {\textstyle\frac{p(k)}{1-p(k)}}d(k)}
\]
From this, we can determine the expected value of $e_\mathsf{rank}$ according to $\mu$:
\begin{align*}
  \mathbb E_\mu[e_\mathsf{rank}]
  &= \smashoperator{\sum_{\tau\in\supp(\mu)}} \mu(\tau)\cdot \de{e_\mathsf{rank}}(\tau)
  \\
  &= p(k)\cdot\smashoperator{\sum_{\tau\in\supp(\mu_1)}} \mu_1(\tau)\cdot \de{e_\mathsf{rank}}(\tau)
    + (1-p(k))\cdot\smashoperator{\sum_{\tau\in\supp(\mu_2)}} \mu_2(\tau)\cdot \de{e_\mathsf{rank}}(\tau)
  \\
  &\le p(k)\cdot\smashoperator{\sum_{\tau\in\supp(\mu_1)}} \mu_1(\tau)\cdot (k-d(k))
    + (1-p(k))\cdot\smashoperator{\sum_{\tau\in\supp(\mu_2)}} \mu_2(\tau)\cdot \left(k + {\frac{p(k)}{1-p(k)}}\cdot d(k)\right)
    \intertext{Since $|\mu_1| = |\mu_2| = 1$:}
  &= p(k)\cdot (k-d(k))
    + (1-p(k))\cdot \left(k + {\frac{p(k)}{1-p(k)}}\cdot d(k)\right)
  \\
  &= (p(k) + 1-p(k)) \cdot k - p(k)\cdot d(k) + \cancel{(1-p(k))} \cdot \frac{p(k)}{\cancel{1-p(k)}}\cdot d(k)
  \\
  &= k
\end{align*}
So, we have shown that $\mathbb{E}_\mu[e_\mathsf{rank}] \le k$ for all $\mu\in \de{C}(\sigma)$, therefore the premise that the expected value must always decrease holds.

\item[\textsc{d.}] $e_\mathsf{rank}$ satisfies a progress condition; for any $k \in \mathbb{Q}_{\ge0}$:
\[
  p(k) \cdot [e] \cdot [P] \cdot [e_\mathsf{rank} = k] \le \mathsf{wp}\de{C}([e_\mathsf{rank} \le k - d(k)])
\]
This condition simply states that if $\sigma\in \sem{P\land e\land e_\mathsf{rank} =k}$, then $e_\mathsf{rank} \le k - d(k)$ holds after running $C$ with probability at least $p(k)$, which is implied directly by the premise of \ruleref{Progressing Rank}.
\end{enumerate}

\subsubsection*{Demonically Fair Random Walk}
We can prove almost sure termination of a more sophisticated random walk using the \ruleref{Progressing Rank} rule. This example is adapted from \citet[\S 5.2]{mciver2018new} and \citet[\S 6.2.4.1]{kaminski}.
The program below represents a random walk in which the agent steps towards the origin with probability $\frac12$, otherwise the adversary can choose whether or not to step away from the origin.
\[
\begin{array}{l}
\whl{x>0}{} \\
\quad x\coloneqq x-1 \oplus_{\frac12} (x \coloneqq x+1 \nd \skp)
\end{array}
\]
Now, we can instantiate the \ruleref{Progressing Rank} rule by letting:
\[
  P\triangleq x \ge 0
  \qquad\qquad
  e_\mathsf{rank} \triangleq x
  \qquad\qquad
  p(k) \triangleq \frac12
  \qquad\qquad
 d(k) \triangleq 1
\]
In addition, we will presume that $x$ is always integer-valued.
We now prove the premise of \ruleref{Progressing Rank} as follows:
\[\def\arraystretch{1.25}
\begin{array}{l}
\ob{\sure{x > 0 \land x = k}} \\
\left(
\begin{array}{l}
\ob{\sure{x > 0 \land x =k}} \\
\;\;x \coloneqq x-1 \\
\ob{\sure{x \ge 0 \land x = k-1}} \\
\end{array}
\right)
\oplus_\frac12
\left(
  \begin{array}{l}
    \ob{\sure{x > 0 \land x =k}} \\
    \;\;x \coloneqq x+1 \\
    \ob{\sure{x \ge 0 \land x = k+1}}
  \end{array}
  \nd
  \begin{array}{l}
      \ob{\sure{x > 0 \land x =k}} \\
      \;\; \skp \\
      \ob{\sure{x \ge 0 \land x = k}}
  \end{array}
\right)
\\
\ob{ \sure{x \ge 0 \land x = k-1} \oplus_\frac12 ( \sure{x \ge 0 \land x = k+1} \nd \sure{x\ge 0 \land x=k})}
\\
\ob{ \sure{x \ge 0 \land x \le k-1} \oplus_\frac12 \sure{x \ge 0 \land x \le k+1}}
\end{array}
\]
The consequence in the last step is justified since we can weaken the second nondeterministic possibility to be $x \le k+1$ and then use idempotence of $\nd$ to collapse the two possibilities.
After applying \ruleref{Progressing Rank}, we get the following triple:
\[
  \triple{\sure{x\ge 0}}{\whl{x>0}{x\coloneqq x-1 \oplus_{\frac12} (x \coloneqq x+1 \nd \skp)}}{\sure{x =0}}
\]

\section{Probabilistic SAT Solving by Partial Rejection Sampling}
\label{app:sat-solving}

In this section, we provide the technical details that were omitted from the main text in \Cref{sec:sat-solving}.

\subsection{Semantics of Lists}
\label{app:list-sem}

Before showing formal derivations for the subroutines, we discuss the semantics of lists used in the case study. A list is a finite sequence of values $\langle v_1, \ldots, v_n\rangle$. In order to manipulate lists in programs, we add the following expression syntax:
\[
  e \Coloneqq \cdots \mid \langle e_1, \ldots, e_n \rangle
  \mid
  e_1 [ e_2 ]
  \mid
  e_1 [ e_2 \mapsto e_3 ]
\]
The three new additions are list literals $\langle e_1, \ldots, e_n\rangle$, list accesses $e_1[e_2]$ (where $e_1$ is a list and $e_2$ is the index of the element being lookup up), and list updates $e_1 [ e_2 \mapsto e_3 ]$, which create a new list from $e_1$ with index $e_2$ now having value $e_3$.
The semantics of these expressions is below.
\begin{align*}
  \de{\langle e_1, \ldots, e_n \rangle}(\sigma) &\triangleq
    \langle \de{e_1}(\sigma), \ldots, \de{e_n}(\sigma) \rangle
  \\
  \de{e_1 [ e_2 ]}(\sigma) & \triangleq
    \left\{
      \begin{array}{ll}
        v_k & \text{if}~ \de{e_1}(\sigma) = \langle v_1, \ldots, v_n \rangle
        ~ \text{and}~ 
        \de{e_2}(\sigma) = k
        ~ \text{and}~ 
        1 \le k \le n
      \\
        0 & \text{otherwise}
      \end{array}
    \right.
  \\
  \de{e_1 [ e_2 \mapsto e_3 ]}(\sigma) & \triangleq
  \left\{
    \begin{array}{ll}
      \langle v_1, \ldots, v_{\max(k, n)} \rangle
      &\text{if}~ \de{e_1}(\sigma) = \langle v_1, \ldots, v_n\rangle
      \\ & \;\;\text{where}
        \de{e_2}(\sigma) = k
        ~ \text{and}~ 
        v_k = \de{e_3}(\sigma)
        \\&
        \;\; \text{and}~
        v_i = 0
        ~ \text{for}~ 
        n < i < k
      \\
      0 &\text{otherwise}
      \end{array}
    \right.
\end{align*}
Note that list operations never fail. Looking up an invalid index simply evaluates to 0. Additionally, if an update is out of bounds, then the intermediary values get filled in with zeros.
In addition, we provide some syntactic sugar in order to update lists in the style of arrays.
\[\arraycolsep=0pt
\begin{array}{lcll}
  x[ & e_1 & ] \coloneqq e_2
  &\qquad\triangleq\qquad
  x \coloneqq x [ e_1 \mapsto e_2 ]
  \\
  x[&e&] \coloneqq \mathsf{flip}(p)
  &\qquad\triangleq\qquad
  (x[e] \coloneqq \tru)
  \oplus_p
  (x [ e ] \coloneqq \fls)
\end{array}
\]

\subsection{$\selclause$ Derivation}

We now give the derivation for the $\selclause$ subroutine defined in \Cref{fig:solver-program}. Provided that the formula is not yet satisfied, $\selclause$ selects an arbitrary unsatisfied clause and stores its index in the variable $s$.

\begin{figure}
\[\def\arraystretch{1.2}
\small
\begin{array}{l}
\ob{\sure{\eval = \fls}}\\
\;\;  s \coloneq -1 \fse \\
\ob{\sure{\eval = \fls \land s = -1}}\\
\;\;  i \coloneq 1  \fse \\ 
\ob{\sure{\eval = \fls \land s = -1 \land i = 1}} \\
\lrob{\sure{\eval = \fls \land 1\le i} \land \bignd \left\{
    \begin{array}{l}
      \sure{s = -1 \land \textstyle\bigwedge_{k=0}^{i-1} \evalclause(k) = \tru}
      \\
      \sure{1 \le s \le M \land \evalclause(s) = \fls}
    \end{array}\right.} \\
\;\;  \mathsf{while} ~i \le M~ \mathsf{do} \\
\quad\lrob{\sure{\eval = \fls \land i = M-k} \land \bignd \left\{
    \begin{array}{l}
      \sure{s = -1 \land \textstyle\bigwedge_{k=0}^{i-1} \evalclause(k) = \tru}
      \\
      \sure{1 \le s \le M \land \evalclause(s) = \fls}
    \end{array}\right.} \\
\;\;  \quad \mathsf{if} ~\lnot  \evalclause(i)~ \mathsf{then} \\
\qquad\lrob{\sure{\eval = \fls \land i = M - k \land \evalclause(i) = \fls}  \land\bignd \left\{
    \arraycolsep=0em
    \begin{array}{l}
      \sure{s = -1 \land \textstyle\bigwedge_{k=0}^{i-1} \evalclause(k) = \tru}
      \\
      \sure{1 \le s \le M \land \evalclause(s) = \fls}
    \end{array}\right.} \\
\;\; \quad \quad \mathsf{if} ~s = -1~ \mathsf{then} \\
\qquad\quad\lrob{\sure{\eval = \fls \land i = M - k \land \evalclause(i) = \fls  \land
      s = -1 \land \textstyle\bigwedge_{k=0}^{i-1} \evalclause(k) = \tru}} \\
\;\; \quad \quad \quad s \coloneq i  \\
\qquad\quad\lrob{\sure{\eval = \fls \land s = i = M - k \land \evalclause(s) = \fls}} \\
\qquad\quad\lrob{\sure{\eval = \fls \land i = M - k \land 1 \le s \le M \land \evalclause(s) = \fls}} \\
\;\; \quad \quad \mathsf{else}  \\
\qquad\quad\lrob{\sure{\eval = \fls \land i = M - k \land \evalclause(i) = \fls  \land
      1 \le s \le M \land \evalclause(s) = \fls}} \\
\;\; \quad \quad \quad \skp \nd s\coloneqq i\fse \\
\qquad\quad\lrob{\sure{\eval = \fls \land i = M - k  \land
      1 \le s \le M \land \evalclause(s) = \fls}} \\
\qquad\lrob{\sure{\eval = \fls \land i = M - k  \land
      1 \le s \le M \land \evalclause(s) = \fls}} \\
\;\; \quad \mathsf{else}  \\
\qquad\lrob{\sure{\eval = \fls \land i = M - k \land \evalclause(i) = \tru}  \land\bignd \left\{
    \arraycolsep=0em
    \begin{array}{l}
      \sure{s = -1 \land \textstyle\bigwedge_{k=0}^{i-1} \evalclause(k) = \tru}
      \\
      \sure{1 \le s \le M \land \evalclause(s) = \fls}
    \end{array}\right.} \\
\;\;\quad\quad \skp \\
\qquad\lrob{\sure{\eval = \fls \land i = M - k} \land \bignd \left\{
    \arraycolsep=0em
    \begin{array}{l}
      \sure{s = -1 \land \textstyle\bigwedge_{k=0}^{i} \evalclause(k) = \tru}
      \\
      \sure{1 \le s \le M \land \evalclause(s) = \fls}
    \end{array}\right.} \\
\quad\lrob{\sure{\eval = \fls \land i = M-k} \land \bignd \left\{
    \begin{array}{l}
      \sure{s = -1 \land \textstyle\bigwedge_{k=0}^{i} \evalclause(k) = \tru}
      \\
      \sure{1 \le s \le M \land \evalclause(s) = \fls}
    \end{array}\right.} \\
\;\; \quad  i \coloneq i + 1 \\
\quad\lrob{\sure{\eval = \fls \land 1\le i \land M-i < k} \land \bignd \left\{
    \begin{array}{l}
      \sure{s = -1 \land \textstyle\bigwedge_{k=0}^{i-1} \evalclause(k) = \tru}
      \\
      \sure{1 \le s \le M \land \evalclause(s) = \fls}
    \end{array}\right.} \\
\ob{\sure{1 \le s \le M \land \evalclause(s) = \fls}}
\end{array}
\]
\caption{Derivation of the $\selclause$ subroutine.}
\label{fig:selclause-derivation}
\end{figure}

Termination of the loop is deterministic, but we can still analyze it using the \ruleref{Bounded Rank} rule. We use the ranking function $e_\mathsf{rank} \triangleq M-i$ so that as the iteration counter $i$ goes from 1 to $M$, the rank decreases towards 0. In addition, this decrease certainly occurs each iteration, so we let $p\triangleq 1$. The invariant is defined below:
\[
  P \triangleq \sure{\eval = \fls \land 1\le i} \land \bignd \left\{
    \begin{array}{l}
      \sure{s = -1 \land \textstyle\bigwedge_{k=0}^{i-1} \evalclause(k) = \tru}
      \\
      \sure{1 \le s \le M \land \evalclause(s) = \fls}
    \end{array}\right.
\]
The invariant has three components. First, it asserts that $\eval=\fls$, meaning that the formula is not satisfied, which will guarantee that the subroutine can find an unsatisfied clause. Next, $1\le i$, meaning that $i$ is either in bounds. Finally, either $s=-1$ and all the clauses up to $i$ are satisfied, or $1 \le s \le M$ and clause $s$ is not satisfied. Here, the $\nd$ plays a similar role to a disjunction; $\sure{Q_1}\nd \sure{Q_2}$ says that each element in the support of the underlying distribution satisfies either $Q_1$ or $Q_2$. Clearly $\sure{P \land i \le M}$ implies that the rank is bounded between $0$ and $M-1$:
\[
  \sure{P\land i\le M}
  \quad\Rightarrow\quad
  \sure{1\le i \le M}
  \quad\Rightarrow\quad
  \sure{0 \le M-i \le M-1}
\]
We now describe the derivation, shown in \Cref{fig:selclause-derivation}. Upon entering the loop, we presume that $e_\mathsf{rank} = k$ for some $1 \le k \le M-1$, which also means that $i = M-k$. We can then split each of the two $\nd$-clauses into two parts depending on whether $\evalclause(i)$ is true or false, since for any Boolean-valued expression $e$:
\[
  \sure{Q_1} \nd \sure{Q_2}
  \qquad\Rightarrow\qquad
  (\sure{Q_1} \nd \sure{Q_2}) \land \sure{e}
  \quad\nd\quad
  (\sure{Q_1} \nd \sure{Q_2}) \land \sure{\lnot e}
\]
We then use \Cref{lem:if-nd} to analyze the outer if statement. In the true branch, we again use \Cref{lem:if-nd} to split into the outcomes where $s=-1$ and $1 \le s \le M$. If $s=-1$, then we assign $s \coloneqq i$, which establishes the second case of the invariant. If $s \neq -1$, then we nondeterministically choose to do either nothing or assign $s \coloneqq i$, both of which establish the second case of the invariant. 

Nothing happens in the false branch of the outer if statement, although we do know in this case that $\evalclause(i) = \tru$, so we can increment the upper limit of the conjunction in the first case of the invariant. Joining the two branches and incrementing $i$ reestablishes the invariant and also decreases the rank.

Exiting the loop, we have $P \land i > M$, which makes the first case of the invariant impossible, since it would say that all the clauses are satisfied, despite the fact that $\eval=\fls$. So, we know the second case of the invariant must hold for the entire support.

\subsection{$\resampleclause$ Derivation}

In this section, we give the derivation for the $\resampleclause$ subroutine, which decreases the Hamming distance between the current solution $\vars$ and the sample solution $\xgood$ with probability at least $1/8$. The derivation is shown in \Cref{fig:resample}.

\begin{figure}
\[\def\arraystretch{1.25}
\begin{array}{l}
\ob{\sure{0\le i < M \land \dist(\vars, \xgood) = k \land \evalclause(i) = \fls}}\\
\ob{\sure{\sum_{j \in J} \big[\vars[j] \neq \xgood[j]\big]  < k}} \\
\;\; \vars[\clvars[i][0]] \coloneq \flip{\tfrac12} \fse  \\
\ob{\sure{\sum_{j \in J} \big[\vars[j] \neq \xgood[j]\big]  < k \land \vars[c_0] = \xgood[c_0]} \oplus_\frac12 \sure{\tru}} \\
\;\; \vars[\clvars[i][1]] \coloneq \flip{\tfrac12} \fse  \\
\ob{\sure{\sum_{j \in J} \big[\vars[j] \neq \xgood[j]\big]  < k
    \land \vars[c_0] = \xgood[c_0]
    \land \vars[c_1] = \xgood[c_1]    
    } \oplus_\frac14 \sure{\tru}} \\
\;\; \vars[\clvars[i][2]] \coloneq \flip{\tfrac12} \\
\ob{\sure{\sum_{j \in J} \big[\vars[j] \neq \xgood[j]\big]  < k
    \land \vars[c_0] = \xgood[c_0]
    \land \vars[c_1] = \xgood[c_1]    
    \land \vars[c_2] = \xgood[c_2]    
    } \oplus_\frac18 \sure{\tru}} \\
\ob{\sure{\dist(\vars, \xgood) < k} \oplus_{\frac18} \sure{\tru}}
\end{array}
\]
\caption{Derivation for the $\resampleclause$ subroutine.}
\label{fig:resample}
\end{figure}

At the beginning, we presume that $1\le i\le M$, $\dist(\vars,\xgood) = k$, and $\evalclause(s) = \fls$. Unfolding the definition of Hamming distance, we get $\sum_{j=1}^{N} \big[ \vars[j] \neq \xgood[j] \big] = k$.
Now let $c_j \triangleq \clvars[s][j]$ be the index of the $\nth{j}$ variable in the $\nth{s}$ clause and let $J \triangleq \{ j \mid 1 \le j \le N, j\notin \{ c_1, c_2, c_3 \} \}$ be the set of indices of variables that do not appear in the $\nth{s}$ clause. Since $\evalclause(s) = \fls$, it must be the case that $\vars[c_j] \neq \xgood[c_j]$ for some $j \in \{1,2,3\}$, therefore summing over $J$ instead of over all the indices must yield a value strictly less than $k$.

Each coin flip gives us $\vars[c_j] = \xgood[c_j]$ with probability $\frac12$, so all three are equal with probability $\frac18$. A sketch of the details of those derivations is given below, where $Q$ is the precondition of the command from \Cref{fig:resample}.
\[
\inferrule*[right=\ruleref{Consequence}]{
  \inferrule*[Right=\ruleref{ND Split}]{
    \inferrule*[right=\ruleref{Prob}]{
      \inferrule*[right=\ruleref{Assign}]{\;}{
        \triple{\sure{Q\land \xgood[e] = \tru}}{\vars \coloneqq \vars[ e \mapsto \tru]}{\sure{Q \land \vars[e] = \xgood[e]}}
      }
      \quad
      \vdots
    }{
      \triple{\sure{Q\land \xgood[e] = \tru}}{\vars[e] \coloneqq \flip{\tfrac12}}{\sure{Q \land \vars[e] = \xgood[e]} \oplus_\frac12 \sure{\tru}}
    }
    \quad
    \vdots
  }{
    \triple{\sure{Q\land \xgood[e] = \tru} \nd \sure{Q\land \xgood[e] = \fls}}{\vars[e] \coloneqq \flip{\tfrac12}}{\sure{Q \land \vars[e] = \xgood[e]} \oplus_\frac12 \sure{\tru}}
  }
}{
  \triple{\sure{Q}}{\vars[e] \coloneqq \flip{\tfrac12}}{\sure{Q \land \vars[e] = \xgood[e]} \oplus_\frac12 \sure{\tru}}
}
\]
The first step is to use the rule of \ruleref{Consequence} to split into the two cases where $\xgood[e]$ is either true or false---both lead to the same postcondition, which is collapsed using idempotence of $\nd$. Next, we use \ruleref{ND Split} to analyze the program for each value of $\xgood[e]$. Since $\vars[e]\coloneqq\flip{\frac12}$ is syntactic sugar for a probabilistic choice between two assignments, the derivation is completed with \ruleref{Prob} and \ruleref{Assign}.

Note that the same variable could appear multiple times in the clause ($c_j = c_\ell = n$ where $j\neq \ell$), but this does not affect our argument, since we only consider what happens when the coin flip matches $\xgood$. It does not matter if the first sample gave us $\vars[n] \neq \xgood[n]$ and then the second one gives us $\vars[n] = \xgood[n]$, since all we said in that case is $\tru$, which vacuously holds even if the variable is later resampled. As such, $\frac18$ is only a \emph{lower bound} on reducing the Hamming distance, but it is a tight enough bound to prove almost sure termination.

At the end of the routine, on the left side of the $\oplus_{\frac18}$, we have $\vars[c_j] = \xgood[c_j]$ for all $j\in\{1,2,3\}$, so expanding the bounds of the sum to be over all indices instead of over $J$ does not affect its value, since we are just adding zeros. We therefore get that $\dist(\vars,\xgood) < k$ with probability at least $\frac18$.

\fi

\end{document}